\newtheorem*{question}{Main question}
\newtheorem{remark}{Remark}
\newtheorem{claim}{Claim}
\title{What can be certified compactly?}
\author{Nicolas Bousquet}
\affiliation{Univ. Lyon, Université Lyon 1, LIRIS UMR CNRS 5205, F-69621, Lyon, France}
\author{Laurent Feuilloley}
\affiliation{Univ. Lyon, Université Lyon 1, LIRIS UMR CNRS 5205, F-69621, Lyon, France}
\author{Théo Pierron}
\affiliation{Univ. Lyon, Université Lyon 1, LIRIS UMR CNRS 5205, F-69621, Lyon, France}
\begin{abstract}
Local certification consists in assigning labels (called \emph{certificates}) to the nodes of a network to certify a property of the network or the correctness of a data structure distributed on the network. 
The verification of this certification must be local: a node typically sees only its neighbors in the network. The main measure of performance of a certification is the size of its certificates.  

In 2011, Göös and Suomela identified $\Theta(\log n)$ as a special certificate size: below this threshold little is possible, and several key properties do have certifications of this type. 
A certification with such small certificates is now called a \emph{compact local certification}, and it has become the gold standard of the area, similarly to polynomial time for centralized computing. 
A major question is then to understand which properties have $O(\log n)$ certificates, or in other words: what is the power of compact local certification? 

Recently, a series of papers have proved that several well-known network properties have compact local certifications: planarity, bounded-genus, etc. But one would like to have more general results, \emph{i.e.} meta-theorems. 
In the analogue setting of polynomial-time centralized algorithms, a very fruitful approach has been to prove that restricted types of problems can be solved in polynomial time in graphs with restricted structures. 
These problems are typically those that can be expressed in some logic, and the graph structures are whose with bounded width or depth parameters. 
We take a similar approach and prove the first meta-theorems for local certification.

More precisely, the logic we use is MSO, the most classic fragment for logics on graphs, where one can quantify on vertices and sets of vertices, and consider adjacency between vertices. 
We prove the relevance of this choice in the context of local certification by first considering properties of trees. 
On trees, we prove that MSO properties can be certified with labels of constant size, whereas the typical non-MSO property of isomorphism requires $\tilde{O}(n)$ size certificates (where $\tilde{O}$ hides polylogarithmic factors). 
We then move on to graphs of bounded treedepth, a well-known parameter that basically measures how far a graph is from a star. 
We first prove that an optimal certification for bounded treedepth uses certificates of size $\Theta(\log n)$, and then prove that in bounded treedepth graphs, every MSO property has a compact certification.

To establish our results, we use a variety of techniques, originating from model checking, tree automata theory, communication complexity, and combinatorics.

\emph{A preliminary version of this paper appeared on the arxiv under the name ``Local certification of MSO properties for bounded treedepth graphs''~\cite{BousquetFP21-MSO}.}

\end{abstract}
\begin{document}

\maketitle

\section{Introduction}
\subsection{Local certification}

In this work, we are interested in the locality of graph properties. For example, consider the property ``the graph has maximum degree three''. 
We say that this property can be checked locally, because if every node checks that it has at most three neighbors (which is a local verification), then the graph satisfies the property (which is a global statement).
Most graph properties of interest are not local. 
For example, to decide whether a graph is acyclic, or planar, the vertices would have to look arbitrarily far in the graph. Some properties can be seen as local or not, depending on the exact definition. For example, having a diameter at most $2$, is a property that can be checked locally if we consider that looking at distance $3$ is local, but not if we insist on inspecting only the neighbors of a vertex. 

As distributed computing is subject to faults and changes in the network, it is essential to be able to check properties of the network or of distributed data structures efficiently. 
Since most properties are not locally checkable, we would like to have a mechanism to circumvent this shortcoming.
Local certification is such a mechanism, in the sense  that it allows to check any graph property locally. 
For a given property, a local certification is described by a certificate assignment and a verification algorithm: each node receives a certificate, reads the certificates of its neighbors and then runs a verification algorithm. This algorithm decides whether the node accepts or rejects the certification. 
If the graph satisfies the property, then there should be a certificate assignment such that all the nodes accept. Otherwise, in each assignment, there must be at least one node that rejects.

In recent years, the area of local certification has attracted a lot of attention, and we refer to ~\cite{FeuilloleyF16} and~\cite{Feuilloley21} for respectively a complexity-theory oriented survey, and an introduction to the area.

\subsection{Understanding the power of compact local certification}

It is known that any property can be certified with $O(n^2)$ bits certificates, where $n$ is the total number of vertices. 
This is because one can simply give the full description of the graph to every node, which can then check that the property holds in the graph described, and that the graph description is correct locally, and identical between neighbors. 
This $O(n^2)$ size is extremely large, and the main goal of the study of local certification is to minimize the size of the certificate, expressed as a number of bits per vertex, as a function of $n$.
In addition to the optimization motivation originating from distributed self-stabilizing algorithms, establishing the minimum certificate size also has a more theoretical appeal. Indeed, the optimal certification size of a property can be seen as a measure of its locality: the smaller the labels, the less global information we need to allow local verification, the more local the property.

In~\cite{GoosS16}, Göös and Suomela identified that $\Theta(\log n)$ was a special certificate size.
Indeed, on the one hand, for most properties, one cannot hope to go below this size per node. For example, certifying acyclicity requires $\Omega(\log n)$~\cite{KormanKP10, GoosS16}.
On the other hand, once we have a logarithmic number of bits, we can encode identifiers, and distances, which is enough for spanning trees, an essential tool in certification. 
A certification with $\Theta(\log n)$ bits is now called a \emph{compact local certification}, and it has become the gold standard of the area.
Recently, planarity and more generally embeddability on bounded-genus surfaces, and $H$-minor-freeness for graphs $H$ of size at most $4$, have been proved to have such compact certifications~\cite{FeuilloleyFMRRT21, FeuilloleyFMRRT21-b, EsperetL22, BousquetFP21}.

Unfortunately, not every property has a compact certification.
For example, having a non-trivial automorphism or not being $3$-colorable are properties that cannot be certified with less than $\Omega(n^2)$ bits~\cite{GoosS16}.
Even surprisingly simple properties, such as having diameter at most $2$, cannot be certified with a sublinear number of bits per vertex (up to logarithmic factors), if we only allow the local verification to be at distance~one~\cite{Censor-HillelPP20}. 
This raises the following question:

\begin{question}
\label{question:compact}
What are the graph properties that admit a compact local certification?
\end{question}

\section{Our approach, results, and techniques}
\label{sec:approach}

\subsection{A systematic model checking approach}

As mentioned above, many specific graph properties such as planarity or small-diameter have been studied in the context of local certification. 
In this paper, we take a more systematic approach, inspired by model checking, by considering classes of graph properties. 
We are interested in establishing theorems of the form: ``all the properties that can be expressed in some formalism $X$ have a compact certification''. 

In this paper, we will consider properties that can be expressed by sentences from monadic second order logic (MSO), just like in Courcelle's theorem. 
These are formed from atomic predicates that test equality or adjacency of vertices 
and allowing boolean operations and quantifications on vertices, edges, and sets of vertices or edges. Now, certifying a given property consists in certifying that a graph is a positive instance of the so-called graph model checking problem for the corresponding sentence $\varphi$: 
\begin{itemize} 
    \item Input: A graph $G$
    \item Output: Yes, if and only if, $G$ satisfies $\varphi$.
\end{itemize}

\subsection{The generic case}

Let us first discuss what such a meta-theorem must look like when we do not restrict the class of graphs we consider. As we already mentioned,  graphs of diameter at most~$2$ cannot be certified with sublinear certificates~\cite{Censor-HillelPP20}. 
This can be expressed with the following sentence:
$$\forall x\forall y (x=y\vee x-y \vee \exists z (x-z \land z-y))$$
This sentence is very simple: it is a first order sentence (a special case of MSO), it has quantifier depth three and there is only one quantifier alternation (two standard complexity measures for FO sentences which respectively counts the maximum number of nested quantifiers and the number of alternations between blocks of existential and universal quantifiers). 
Therefore, there exists very simple first order logic sentences which cannot be certified efficiently, hence  there is no room for a generic $O(\log n)$ result. 

Note that if we allowed the vertices to see at a larger (but still constant) distance in the graph, then we could verify diameter~$2$ without certificates.
In order to prevent such phenomenon, and because it is more relevant in terms of message complexity, in the whole paper, the radius of the views of the vertices is fixed to~$1$ (in other words, a node can read the IDs and the certificates of all its neighbors, but cannot see which edges are incident to these vertices). 
We discuss that aspect in more detail in Appendix~\ref{app:discussion-radius}. 

Another example is given by triangle-freeness, which can be expressed by the following sentence:
$$\forall x\forall y\forall z \neg (x-y\land y-z \land x-z)$$
This sentence also has rank $3$ and no quantifier alternation. Proposition 5 of \cite{CrescenziFP19} proves that certifying that a graph is triangle-free requires~$\Omega(n/e^{O(\sqrt{n})})$ bits, via reduction to multi-party communication complexity inspired by~\cite{DruckerKO13}. 

The only possible way to simplify the sentences would consist in only having at most two nested quantifiers or not authorizing universal quantifiers. In these cases, the following holds:

\begin{restatable}{lemma}{lemFragments}
\label{lem:existsFO}
FO sentences with quantifier depth at most 2 can be certified with $O(\log n)$ bits. 
Existential FO sentences (\emph{i.e.} whose prenex normal form has only existential quantifiers) can be certified with $O(\log n)$ bits.
\end{restatable}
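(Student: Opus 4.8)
The plan is to handle the two statements separately, in each case reducing the verification task to checking the existence of a small number of special vertices together with some short consistency information.

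For the existential-FO part, suppose $\varphi = \exists x_1 \cdots \exists x_k\, \psi(x_1,\dots,x_k)$ with $\psi$ quantifier-free. If $G \models \varphi$, there is a witnessing tuple $(v_1,\dots,v_k)$. The certification scheme is: the prover picks such a tuple, gives every vertex the list of IDs of $v_1,\dots,v_k$, and additionally marks each $v_i$ (telling it ``you are the $i$-th witness''), and gives each $v_i$ the full list of which pairs $v_i v_j$ are adjacent in $G$. Every vertex checks that the $k$ IDs it was given are consistent across its neighborhood; each $v_i$ checks that it indeed has the claimed ID and that the adjacencies it reports to its neighbors are the ones it actually sees (a vertex can verify adjacency to another vertex only when that vertex is in its neighborhood, but the marked witnesses can cross-check each other's reports, and equalities $v_i = v_j$ are verified directly from the IDs); finally, using the reported adjacency pattern among the $v_i$'s and the equality pattern among their IDs, every vertex evaluates the quantifier-free formula $\psi$ and accepts iff it holds. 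Soundness follows because a false instance has no witnessing tuple, so whatever $k$ IDs the prover commits to globally, the evaluated $\psi$ is false and some vertex rejects; the one subtlety is forcing the prover to actually ``point at'' real vertices of the graph, which is why the marking is needed and why we require the marked vertices to self-certify their IDs. The total certificate size is $O(k \log n) = O(\log n)$ since $k$ is a constant depending only on $\varphi$.

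For the quantifier-depth-2 part, a prenex FO sentence of quantifier depth $2$ has one of the forms $\forall x \exists y\, \psi$, $\exists x \forall y\, \psi$, $\forall x \forall y\, \psi$, or $\exists x \exists y\, \psi$ with $\psi$ quantifier-free (after possibly pushing negations and handling non-prenex boolean combinations, which only blows up the number of variables by a constant). The existential cases are covered above. For $\exists x \forall y\, \psi(x,y)$: if true, fix a good $x = v$; the prover broadcasts the ID of $v$ to all vertices and marks $v$, and every vertex $u$ checks that $\psi(v,u)$ holds — but $\psi(v,u)$ may contain the atom $v - u$, which $u$ can evaluate only if $v$ is among its neighbors, and also the atom $v = u$, which it can. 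The issue is that $u$ does not see whether $v$ is a neighbor unless $v \in N(u)$; however adjacency is symmetric, so $v$ itself knows its entire neighborhood, hence for the atom $x-y$ we instead have $v$ certify, to each neighbor, that it is adjacent, and every non-neighbor of $v$ is simply a vertex for which $v-u$ is false — but a non-neighbor cannot locally confirm it is a non-neighbor. The fix is the standard one: $v$ is forced to be a single vertex by the marking (at most one vertex may be marked, checked by a proper $3$-colouring-style or spanning-tree-style consistency gadget at cost $O(\log n)$, or more simply: everyone holds $\mathrm{id}(v)$, and a vertex marked as $v$ must have that ID — uniqueness of IDs then gives uniqueness of $v$), so a vertex $u$ with $\mathrm{id}(u) \neq \mathrm{id}(v)$ and $v \notin N(u)$ correctly concludes $v \neq u$ and $v \not- u$ and evaluates $\psi$ accordingly. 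The case $\forall x \forall y\, \psi$ is the hardest and is where the real work lies: here every ordered pair must satisfy $\psi$, and a vertex $u$ sees $\psi(u,w)$ only for $w \in N(u) \cup \{u\}$; for non-adjacent pairs $\{u,w\}$ both atoms $u-w$ and $w-u$ are false and $u=w$ is false, so $\psi$ restricted to such a pair becomes a fixed boolean formula in the ``all-atoms-false'' valuation, and either it is a tautology under that valuation (in which case non-adjacent pairs are automatically fine and each vertex only needs to check its closed neighborhood, requiring no certificates at all) or it is not (in which case the property is simply false as soon as the graph has two non-adjacent vertices, i.e. it forces the graph to be a clique, and certifying ``$G$ is a clique and every edge/vertex satisfies the local condition'' needs only that each vertex verify its degree is $n-1$ and check $\psi$ locally, with $O(\log n)$ bits to certify $n$).

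The main obstacle I anticipate is precisely this $\forall\forall$ analysis: one must argue cleanly that a quantifier-free $\psi(x,y)$ on a non-edge reduces to a constant, splitting into the ``$G$ must be a clique'' regime versus the ``neighborhoods suffice'' regime, and carry the $O(\log n)$ certification of $n$ (equivalently, of a spanning tree with a counter) through the clique case. Everything else — consistent broadcasting of a constant number of IDs, forcing marked vertices to be genuine and unique via ID-uniqueness, and local evaluation of a fixed quantifier-free formula — is routine once the witness-pointing mechanism from the existential case is in place.
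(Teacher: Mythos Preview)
Your existential-FO argument has a genuine soundness gap: marking the witnesses does not force them to exist. If the prover broadcasts $k$ identifiers that do not occur in the graph at all, together with a fabricated adjacency matrix on which $\psi$ evaluates to true, then no vertex is marked, no vertex can detect that the marks are missing, and every vertex that evaluates $\psi$ accepts. Your sentence ``whatever $k$ IDs the prover commits to globally, the evaluated $\psi$ is false'' is exactly the step that fails --- the prover is free to lie about the adjacency pattern as long as no real vertex is in a position to contradict it. The paper closes this gap by additionally certifying, for each $i\le k$, a spanning tree rooted at $v_i$; the spanning-tree certification forces a unique root to exist with the announced identifier, so the $v_i$ are genuine vertices and can then locally verify their rows of the adjacency matrix. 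Once you add this ingredient your scheme becomes essentially the paper's.

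For quantifier depth~2, note first that a depth-2 sentence is in general a boolean combination of prenex blocks $Q_1x\,Q_2y\,\pi(x,y)$, not a single such block; pushing the whole thing into prenex form raises the quantifier depth, so you should instead certify each block separately and combine. Second, you omit the $\forall x\,\exists y$ case entirely, and your $\exists x\,\forall y$ case inherits the same existence gap as above (take $\psi(x,y)=\neg(x\!-\!y)$: on any connected graph with at least two vertices the sentence is false, yet a fake identifier for $x$ makes every $u$ accept). Your $\forall\forall$ analysis is on the right track but should also account for the valuation $x=y$. The paper avoids these case distinctions by a short semantic classification: up to equivalence on connected graphs, a depth-2 sentence is a boolean combination of ``$|V|=1$'', ``$G$ is a clique'', and ``$G$ has a dominating vertex'', each of which (and its negation) is certified with a spanning tree plus a vertex count. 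That route is shorter and sidesteps the existence issue for the inner quantifier.
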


For the FO sentences with quantifiers of depth at most $2$, we can prove that the only interesting properties that can be expressed are a vertex being dominant (adjacent to all other vertices) or the graph being a clique. 
These are easy to certify with $O(\log n)$ bits, \emph{c.f.} the full proofs in Appendix~\ref{sec:fragment}.

\subsection{The case of trees}

Lemma~\ref{lem:existsFO} characterizes the quite restricted sentences that can be certified with $O(\log n)$ bits for general graphs.
The classic approach in centralized computing is then to restrict the class of the graphs considered. 
This is also relevant here: for example, certifying some given diameter is easier if we restrict the graphs to trees. 
Indeed, in this case we can use a spanning tree to point to a central vertex (or edge), that becomes a root (or root-edge), and keep at every vertex both its distance to the root and the depth of its subtree.
This certification can be checked by simple distance comparisons, and it uses $O(\log n)$ bits. The first of our main results is that we can actually get a better bound (constant certificates) for all MSO properties on trees.

\begin{restatable}{theorem}{thmMSOTrees}
\label{thm:MSO-trees}
Any MSO formula can be certified on trees with certificates of size $O(1)$.
\end{restatable}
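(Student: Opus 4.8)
The plan is to reduce MSO model checking on trees to the theory of tree automata. The classical Doner--Thatcher--Wright theorem tells us that for every MSO sentence $\varphi$ there is a finite (bottom-up) tree automaton $\mathcal{A}_\varphi$ that accepts exactly the trees satisfying $\varphi$. A run of such an automaton assigns a state from a finite set $Q$ to every node, and the constraints to be checked are purely local: the state of an internal node is the image under the transition function of the (multiset of) states of its children, and the state at the root lies in the accepting set. This is precisely the kind of object that admits a constant-size certification.

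\medskip

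First I would fix a rooted version of the input tree. This is the one delicate point on the graph-theory side: the tree handed to the verifier is unrooted and has no orientation, so we must first certify a root. I would have the prover designate a root $r$ and give each vertex a pointer to its parent together with a counter (its distance to $r$, or the size of its subtree); this is the standard spanning-tree-style gadget and costs $O(\log n)$ bits, which is too much. To get down to $O(1)$, I would instead use the fact that an unrooted tree has a well-defined \emph{center} (one vertex or one edge), and certify ``being the center'' and the orientation of every edge toward the center using only constantly many bits: each vertex stores a bit saying whether it is the center, plus for each incident edge a bit marking the parent edge, and each vertex checks the consistency of these orientation bits locally (exactly one outgoing parent edge unless it is the center, acyclicity of the orientation being automatic on a tree once each non-center vertex has out-degree one toward a neighbor). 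Alternatively, and more cleanly, one appeals to the known fact that acyclicity/\,being-a-tree together with a root can be certified with $O(1)$ bits on trees because local consistency of parent pointers on a connected acyclic graph already forces a single root. Either way the rooting costs only $O(1)$.

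\medskip

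Second, with a rooted tree in hand, the prover runs $\mathcal{A}_\varphi$ bottom-up and writes on each vertex $v$ its state $q_v \in Q$ in the (accepting) run; since $|Q|$ depends only on $\varphi$ and not on $n$, this is $O(1)$ bits. The verification at each vertex $v$ checks: (i) the orientation/rooting gadget above; (ii) if $v$ is a leaf (in the rooted tree), that $q_v$ is the state the automaton assigns to a leaf; (iii) if $v$ is internal, that $q_v = \delta(\{q_{c} : c \text{ child of } v\})$, which $v$ can evaluate because it sees all its neighbors' certificates and knows which neighbor is its parent; (iv) if $v$ is the root, that $q_v$ is accepting. If $T \models \varphi$, the prover supplies the genuine accepting run and all checks pass; if $T \not\models \varphi$, then for \emph{any} claimed labelling either the rooting gadget is inconsistent somewhere, or the labelling fails to be a valid run somewhere, or the root-state check fails — in each case some vertex rejects. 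A minor technical wrinkle is that tree automata usually operate on ordered/binary trees while our trees are unordered of unbounded degree; this is handled by the standard trick of using \emph{commutative} transition functions (transitions depending only on the set, or bounded-count multiset, of child states), which is exactly what the MSO-to-automaton translation produces, so no extra certification is needed for child ordering.

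\medskip

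The main obstacle I expect is the rooting step: getting a canonical orientation of the (a priori unrooted, unlabelled, unbounded-degree) tree certified with truly constant — not logarithmic — certificates, and doing so in a way whose local verification genuinely forces a consistent global rooting on every tree and only on trees. Once that is in place, the automaton part is essentially bookkeeping: the constant bound on certificate size is inherited directly from the finiteness of $Q$, which in turn comes from $\varphi$ being a fixed MSO sentence.
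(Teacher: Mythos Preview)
Your approach is the paper's: translate $\varphi$ to a tree automaton on unordered unbounded-arity trees (the paper cites the unary-ordering Presburger automata of Boneva--Talbot, which is exactly the precise form of your ``commutative transitions'' remark), then certify an accepting run by writing one state per vertex and checking transitions locally.

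The gap is exactly where you flag it, and neither of your two fixes closes it. Storing one bit \emph{per incident edge} to mark the parent is $O(\deg v)$ bits at $v$, not $O(1)$; on unbounded-degree trees that is unbounded. The appeal to a ``known fact'' about parent pointers is not a proof either: in this model a vertex sees only its neighbours' certificates, so to single out one neighbour as the parent you must place something in the \emph{certificates} that distinguishes the parent from the children, and that something must be $O(1)$ bits total.

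The paper's fix is the standard trick you are circling around: the prover picks an \emph{arbitrary} root $r$ (no need for the centre --- the unrooted sentence holds iff the translated rooted sentence holds from every root) and writes at each vertex its distance to $r$ \emph{modulo $3$}. A vertex with label $d$ checks that exactly one neighbour carries $d-1 \bmod 3$ (its parent) and all others carry $d+1 \bmod 3$; the root checks label $0$ and all neighbours labelled $1$. On a tree this forces a single globally consistent orientation with two bits per vertex. With the orientation in hand, your Steps (ii)--(iv) go through verbatim.
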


One can wonder if we can extend this statement to a significantly wider logic.
We answer by the negative by proving that some typical non-MSO properties cannot be certified with certificates of sublinear sizes even on trees of bounded depth. 

\begin{restatable}{theorem}{thmIsomorphismTrees}
\label{thm:isomorphism-trees}
Certifying the trees that have an automorphism without fixed-point requires certificates of size $\tilde{\Omega}(n)$ (where $\tilde{\Omega}$ hides polylogarithmic factors), even if we restrict to trees of bounded depth.
\end{restatable}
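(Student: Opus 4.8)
The plan is to reduce from the nondeterministic communication complexity of \textsc{Equality}, following the classical scheme of Göös and Suomela~\cite{GoosS16} (and in the same spirit as the diameter lower bound of~\cite{Censor-HillelPP20}). The first ingredient is a structural characterization: \emph{a tree $T$ has a fixed-point-free automorphism if and only if its center is an edge $uv$, and the two subtrees $T_u$, $T_v$ obtained by deleting $uv$ are isomorphic as rooted trees, rooted at $u$ and $v$ respectively.} One direction is immediate: a rooted isomorphism $\phi\colon T_u\to T_v$ extends to the involution acting as $\phi$ on $T_u$ and as $\phi^{-1}$ on $T_v$, which moves every vertex across $uv$ and is therefore fixed-point-free. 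For the converse, recall that every automorphism of a finite tree either fixes a vertex or inverts an edge, and that the center of a tree is automorphism-invariant; a fixed-point-free automorphism cannot fix the center, hence the center is an edge $uv$ that the automorphism inverts, and inverting $uv$ forces $T_u\cong T_v$ as rooted trees.

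The second ingredient is a large family $\mathcal F$ of pairwise non-isomorphic rooted trees of height exactly $3$, all on the same number $m$ of vertices, with $|\mathcal F|=2^{\tilde\Omega(m)}$. I would build each member as a root with $k\approx m/(\log m)^2$ subtrees, each being one of roughly $m^{\Theta(1)}$ distinguishable height-$2$ gadgets of size $\Theta((\log m)^2)$: the gadget associated with a partition $\lambda=(\lambda_1,\dots,\lambda_r)$ of $\Theta((\log m)^2)$ is the rooted tree whose root has $r$ children, the $i$-th carrying $\lambda_i$ leaf-children (the multiset $\{\lambda_1,\dots,\lambda_r\}$ is an isomorphism invariant, so distinct partitions give non-isomorphic gadgets). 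Choosing the multiset of $k$ gadgets at the root freely yields $2^{\Omega(m/(\log m)^2)}$ non-isomorphic trees, all of height exactly $3$ and of the same size.

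Given $A,B\in\mathcal F$ with roots $u$ and $v$, let $T(A,B)$ be the tree obtained by adding the edge $uv$; it has $n=2m$ vertices and constant depth. Since $A$ and $B$ have the same height, the center of $T(A,B)$ is exactly the edge $uv$, so by the characterization above $T(A,B)$ has a fixed-point-free automorphism if and only if $A\cong B$. Now suppose a certification scheme with $\ell$-bit certificates exists, and fix one admissible identifier assignment that can be computed separately on $A$'s side and on $B$'s side. This yields a nondeterministic two-party protocol recognizing $\{(A,B):A\cong B\}$: the nondeterministic string is the pair of certificates placed at $u$ and $v$ ($2\ell$ bits); Alice then checks whether some completion of the certificates inside $A$ makes every vertex of $A$ accept, Bob does the same for $B$, and they accept iff both do. Because the only edge crossing the partition is $uv$ and the verification radius is $1$, two locally consistent partial certifications glue into a global one, so the protocol is correct. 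Finally, the nondeterministic communication complexity of \textsc{Equality} over a universe $U$ is at least $\log_2|U|$ (its $1$-entries form the identity matrix, and no all-ones combinatorial rectangle can contain two of them), whence $2\ell\ge\log_2|\mathcal F|=\Omega(n/(\log n)^2)$, i.e.\ $\ell=\tilde\Omega(n)$.

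The main obstacle is the second step: one must pack $\tilde\Omega(m)$ bits of information into a rooted tree of \emph{constant} height, which is delicate because insisting on rigidity would be fatal --- there are only finitely many rigid rooted trees of any fixed height --- so $\mathcal F$ must rely purely on having many non-isomorphic members, while the center-is-a-genuine-edge part of the characterization does the work of ruling out spurious fixed-point-free automorphisms. One also has to be mildly careful with the role of identifiers in the \textsf{LOCAL} model in the reduction, but fixing one admissible assignment suffices.
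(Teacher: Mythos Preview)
Your approach coincides with the paper's: both reduce from nondeterministic \textsc{Equality} by gluing two bounded-depth rooted trees and observing that the resulting tree has a fixed-point-free automorphism iff the two halves are isomorphic, combined with a $2^{\tilde\Omega(n)}$ count of non-isomorphic bounded-depth rooted trees. The paper's write-up is shorter only because it (i) inserts a length-$3$ path $a$--$\alpha$--$\beta$--$b$ between the two roots so the construction plugs directly into its general framework (Proposition~\ref{prop:CC-reduction}, with $r=|V_\alpha\cup V_\beta|=2$), (ii) cites~\cite{GoosS16} for the automorphism characterization you derive from Jordan's theorem, and (iii) cites~\cite{PachPPS13} for the tree count you build by hand. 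Your self-contained route is a legitimate and more elementary alternative. One imprecision worth fixing: the gadget for a partition $(\lambda_1,\dots,\lambda_r)$ of $N$ has $1+r+N$ vertices, so different partitions give gadgets of different sizes and your family $\mathcal F$ as written does not consist of trees ``all on the same number $m$ of vertices''; either restrict to partitions with a fixed number of parts, or drop the equal-size claim, since only equal \emph{height} is used in your center computation (and arrange the identifier split accordingly, e.g.\ odd identifiers on Alice's side and even on Bob's).
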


\subsection{The case of bounded treedepth graphs}

In centralized model checking, a classic meta-theorem of Courcelle~\cite{Courcelle90} establishes that all the problems expressible in MSO can be solved in polynomial-time in graphs of bounded treewidth.
Motivated by the unavoidable non-elementary dependence in the formula in Courcelle's theorem~\cite{Frick04}, Gajarsk\'y and Hlin\v en\'y~\cite{GajarskyH15} designed a linear-time FPT algorithm for MSO-model checking with elementary dependency in the sentence, by paying the price of considering a smaller class of graphs, namely graphs of bounded treedepth. Their result is essentially the best possible as shown soon after in~\cite{Lampis13}. 

One can wonder if some Courcelle-like result holds for certification. Namely, is it possible to certify any MSO-formula on graphs of bounded treewidth with certificates of size $O(\log n)$? Prior to our work, it was not known whether graphs of fixed width can be certified with logarithmic size certificates. Proving such a statement is a preliminary condition for MSO-certification, since certifying a property on a graph class we cannot certify may lead to aberrations.

We prove that one can locally check that a graph has treedepth at most $t$ with logarithmic-size certificates. 

\begin{restatable}{theorem}{thmCertifyTreedepth}
\label{thm:certify-treedepth}
We can certify that a graph has treedepth at most~$t$ with $O(t \log n)$ bits.
\end{restatable}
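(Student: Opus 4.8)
The plan is to certify treedepth at most $t$ by exhibiting a \emph{treedepth decomposition} of depth at most $t$, that is, a rooted forest $F$ on the same vertex set as $G$ such that every edge of $G$ connects a vertex to one of its ancestors in $F$, and such that every root-to-leaf path in $F$ has length at most $t$. Such a decomposition exists if and only if $\mathrm{td}(G)\le t$, so certifying its existence is exactly what we want. Concretely, each vertex $v$ will store: (i) a pointer to its parent in $F$ (or a flag saying it is a root of its tree); (ii) its depth $d(v)\in\{1,\dots,t\}$ in $F$; and (iii) the entire list of its ancestors in $F$, given as a sequence of identifier–depth pairs, ordered by depth. Since there are at most $t$ ancestors and each identifier uses $O(\log n)$ bits, this list has size $O(t\log n)$, which dominates the certificate size.

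The verification proceeds as follows at each vertex $v$. First, $v$ checks that $d(v)\le t$, that its parent pointer is consistent with the depths (parent has depth $d(v)-1$, or $v$ is a root with $d(v)=1$), and that its ancestor list is internally consistent: it has exactly $d(v)-1$ entries with strictly increasing depths $1,2,\dots,d(v)-1$, the last of which is the parent. Second, $v$ checks \emph{local consistency of the ancestor lists with its neighbors in $F$}: for the neighbor $u$ that $v$ claims as its parent, $v$ verifies that $u$'s reported ancestor list equals $v$'s list with the last entry removed, and that $u$'s identifier and depth match what $v$ recorded. (By symmetry each child of $v$ performs the corresponding check, so a single faulty pointer is caught.) Third, and crucially, $v$ checks the \emph{edge condition of $G$}: for every neighbor $w$ of $v$ in $G$, vertex $v$ verifies that $w$'s identifier appears in $v$'s ancestor list, or that $v$'s identifier appears in $w$'s ancestor list (which $v$ can read, since $w$ is a neighbor). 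If all vertices accept, then the claimed forest structure is globally a genuine rooted forest of depth $\le t$ in which every $G$-edge goes between ancestor and descendant, hence $\mathrm{td}(G)\le t$; conversely, given a true treedepth decomposition, filling in the certificates as above makes every vertex accept.

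The main obstacle—and the point requiring care—is soundness: ensuring that the union of the claimed parent pointers really forms a rooted forest and not, say, a structure containing a cycle, using only radius-$1$ views and without any global identifier being available. The key trick is that the depth labels $d(v)$ are drawn from the bounded range $\{1,\dots,t\}$ and strictly decrease along parent pointers, so any ``cycle'' in the pointer graph is impossible once each vertex has checked that its parent has depth exactly one less than its own; following parent pointers thus terminates in at most $t$ steps at a root. One must also rule out the degenerate case where a vertex points to itself or where two vertices point to each other, but the strict depth decrease handles this as well. The remaining subtlety is that the ancestor lists stored by different vertices along a branch must be \emph{mutually} consistent; this is guaranteed because the check in the second step propagates up the tree: $v$'s list is $u$'s list plus one entry, $u$'s list is its parent's plus one entry, and so on, so a single globally-correct list at the root percolates down and any discrepancy is detected at the offending parent–child edge. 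Once the forest is certified genuine, the edge condition in the third step is purely local and immediate. Since every piece of information stored is either a single value in $\{1,\dots,t\}$ or a list of at most $t$ identifier–depth pairs, the total is $O(t\log n)$ bits, as claimed.
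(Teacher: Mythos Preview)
Your argument has a genuine gap: it tacitly assumes that the parent of $v$ in the elimination forest $F$ is a neighbour of $v$ \emph{in $G$}. This is false in general. In an elimination tree, edges of $F$ need not be edges of $G$; the defining condition is only that every $G$-edge is an ancestor--descendant pair in $F$, not the converse. The standard witness is a path: $P_7$ has treedepth $3$, realised by the tree with root $4$, children $2,6$, and leaves $1,3,5,7$ (Figure~\ref{fig:treedepth}); here vertex $2$'s $F$-parent is $4$, yet $2$ and $4$ are not adjacent in $P_7$. More strongly, if you insist on $F\subseteq G$ then for a path $P_n$ the only spanning subtree is $P_n$ itself, and any rooting gives depth $\Theta(n)$, far above $\mathrm{td}(P_n)=\Theta(\log n)$. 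So there is no choice of decomposition that rescues your scheme.

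This breaks both directions of your proof. For completeness, on $P_7$ with the depth-$3$ decomposition above, vertex $2$ cannot carry out your second check (its claimed parent $4$ is not visible at radius $1$), and your write-up gives no fallback; if the vertex rejects, a yes-instance is refused. For soundness, your entire argument that ``a single globally-correct list at the root percolates down and any discrepancy is detected at the offending parent--child edge'' is a propagation along $F$-edges, but those are precisely the edges you may not be able to inspect. Without that propagation you have not shown that the ancestor lists are mutually consistent, nor that the vertex named as a depth-$i$ ancestor even exists in the graph. The paper handles exactly this difficulty: working with a \emph{coherent} model, it certifies for each $v$ a spanning tree of the induced subgraph $G_v$ pointing to an \emph{exit vertex} of $v$ (a vertex of $G_v$ that \emph{is} $G$-adjacent to $v$'s $F$-parent), so that the prefix/suffix consistency of ancestor lists can be transported along genuine $G$-edges. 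That extra layer of spanning-tree certificates is the missing ingredient.
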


We also show that Theorem~\ref{thm:certify-treedepth} is optimal, in the sense that certifying treedepth at most $k$ requires $\Omega(\log n)$ bits, even for small $k$. 

\begin{restatable}{theorem}{thmLowerBoundTreedepth}
\label{thm:lower-bound-treedepth}
Certifying that the treedepth of the graph is at most $k$ requires $\Omega(\log n)$ bits, for any $k\geq 5$. 
\end{restatable}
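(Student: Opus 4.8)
The standard technique for $\Omega(\log n)$ lower bounds in local certification is a counting/indistinguishability argument à la Göös–Suomela. The plan is to exhibit, for every large $n$, a "no"-instance $G_n$ (a graph of treedepth $> k$) on $\Theta(n)$ vertices, together with a family of "yes"-instances obtained by small local surgeries on $G_n$, such that any certification with $o(\log n)$ bits must produce, on two different yes-instances, certificate assignments that are locally indistinguishable around some edge; splicing these together yields an accepted certification of the no-instance $G_n$, a contradiction. So the crux is a careful gadget construction, not a computation.

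Concretely I would build a long "path-like" gadget whose treedepth hovers right at the threshold: recall that a path on $m$ vertices has treedepth $\Theta(\log m)$, so a path of length roughly $2^{k}$ has treedepth about $k$, while a slightly longer path has treedepth $k+1$. I would take a path (or a balanced-binary-tree-like structure) $P$ tuned so that $P$ itself has treedepth $k+1$ (a no-instance), but $P$ with any one of its $\Theta(n)$ internal edges subdivided/contracted — or with a pendant gadget attached at one of $\Theta(n)$ possible anchor points — drops to treedepth $\le k$ (yes-instances). The condition $k \ge 5$ is presumably exactly what is needed so that there is enough room to design such a gadget whose treedepth is controlled tightly under these local modifications (small $k$ leaves too little structural slack; indeed treedepth $\le 1,2$ are trivial and $\le 3,4$ may be too rigid). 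I would first prove the two treedepth claims about the gadget (the no-instance has treedepth exactly $k+1$, and each modified instance has treedepth $\le k$), using the elimination-forest characterization of treedepth and the logarithmic treedepth of paths.

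Then I would run the pigeonhole step: for a certification scheme with certificates of $b$ bits, the certificate assigned to a vertex together with those of its (boundedly many) neighbors is described by $O(b)$ bits, so along the gadget there are only $2^{O(b)}$ possible "local states" at the $\Theta(n)$ anchor points. If $b = o(\log n)$, two distinct yes-instances $G_i, G_j$ (modified at anchors $i \ne j$, which are far apart) receive certificate assignments that coincide, locally, in a window separating anchor $i$ from anchor $j$; cutting each instance in that window and gluing the $i$-side of $G_i$ to the $j$-side of $G_j$ produces a graph that contains the unmodified segment between $i$ and $j$ — hence is isomorphic (around every vertex) to the no-instance $G_n$ of treedepth $k+1$ — yet inherits a certificate assignment accepted by every vertex. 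This contradicts soundness, giving $b = \Omega(\log n)$.

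The main obstacle I anticipate is the gadget design: one needs a graph whose treedepth is \emph{exactly} $k+1$, is robust in the sense that it stays $\ge k+1$ unless a specific local edit is made, yet \emph{drops} to $\le k$ after any one such edit at any of linearly many well-separated locations — and the locations must be far enough apart that a $o(\log n)$-bit scheme cannot "see" which one was edited. Balancing these requirements (large number of anchors, large pairwise distance, tight treedepth control, and the $k\ge 5$ slack) is the delicate part; the pigeonhole and gluing arguments afterwards are routine. I would likely model the gadget on the well-known fact that complete binary trees of height $h$ have treedepth $h$, using a "caterpillar of binary trees" so that removing or adding a leaf at one spot shifts the elimination-forest height by exactly one.
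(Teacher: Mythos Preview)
Your high-level plan (counting/indistinguishability plus cut-and-plug) is the classical route, but it is \emph{not} the route the paper takes, and the gadget you sketch has a genuine gap.

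The concrete problem is the tension between ``$G_n$ has treedepth exactly $k+1$'' and ``$G_n$ has $\Theta(n)$ anchor points''. A path (or balanced binary tree) tuned to treedepth $k+1$ has only $\Theta(2^{k})$ vertices, which is a constant once $k$ is fixed; there are not $\Theta(n)$ internal edges to edit. Conversely, any path-like backbone on $\Theta(n)$ vertices already has treedepth $\Theta(\log n)\gg k+1$, and a single local edit (subdivision, contraction, pendant) cannot bring it down to $\le k$: such an edit affects at most one of the many long subpaths that each individually force treedepth $>k$. Your ``caterpillar of binary trees'' suffers the same defect: either the caterpillar spine is long (and its own treedepth already exceeds $k$, unfixable by one local edit) or it is short (and you have no $\Theta(n)$ anchors). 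In short, for bounded treedepth the obstruction is a \emph{global} long-path/long-cycle phenomenon, so a single local surgery at one of linearly many well-separated sites cannot toggle it; this is exactly the kind of property for which the naive cut-and-plug template breaks down and one has to engineer the yes-instances more globally.

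The paper avoids this by reducing from non-deterministic communication complexity for \textsc{Equality} (their Proposition~\ref{prop:CC-reduction}). They build $G(s_A,s_B)$ from $2n$ disjoint $4$-paths plus a universal vertex $u$ over $V_\alpha$; Alice and Bob each add a perfect matching on their ends encoding $s_A,s_B$. Removing $u$ leaves a $2$-regular graph: if the two matchings coincide it is a disjoint union of $C_8$'s (treedepth $5$ after reinstating $u$), and if they differ some cycle has length $\ge 16$ (treedepth $\ge 6$). The treedepth dichotomy is argued via the cops-and-robber characterization (Lemma~\ref{lem:treedepth-pebbles}). Since $\log(n!)\sim n\log n$ and the ``cut'' $V_\alpha\cup V_\beta$ has size $\Theta(n)$, Proposition~\ref{prop:CC-reduction} gives the $\Omega(\log n)$ bound. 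If you want to rescue a cut-and-plug proof, the right family of yes-instances is of this flavour (one per perfect matching, glued across a linear-size boundary), not a single base graph with local single-edge edits.
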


This result contrasts with the fact that certifying trees of depth $k$ can be done with $O(\log k)$ bits (thus independent of $n$), by simply encoding distances to the root.

The next problem in line is then MSO-model checking for graphs of bounded treedepth. In such classes, it happens that MSO and FO have the same expressive power~\cite{ElberfeldGT16}: for every $t$ and every MSO sentence, there exists a FO sentence satisfied by the same graphs of treedepth at most $t$. 

\begin{restatable}{theorem}{thmMain}
\label{thm:main}
Every FO (and hence MSO) sentence $\varphi$ can be locally certified with $O(t\log n+ f(t,\varphi))$-bit certificates on graphs of treedepth at most $t$. 
\end{restatable}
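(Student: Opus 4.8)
The plan is to reduce, via Theorem~\ref{thm:certify-treedepth}, to certifying an MSO property of a vertex-coloured rooted forest of bounded \emph{height}, and then to run on that forest the tree-automaton machinery behind Theorem~\ref{thm:MSO-trees}. First I would certify an \emph{elimination forest} $F$ of $G$ of depth at most $t$: the certificate of Theorem~\ref{thm:certify-treedepth} can be taken to equip every vertex $v$ with the ordered list of identifiers of its $F$-ancestors (at most $t$ of them, $O(t\log n)$ bits), together with the local checks that make these lists globally consistent and force every edge of $G$ to join an ancestor--descendant pair of $F$. Within the same budget, $v$ also records its \emph{connection type} $(d(v),S_v)$, where $d(v)\le t$ is its depth and $S_v\subseteq[d(v)-1]$ lists the levels of the $F$-ancestors to which $v$ is $G$-adjacent; since $v$ knows its ancestor list and sees all its $G$-neighbours, it can recompute and verify this. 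The pair $(G,F)$ is thereby re-encoded as a vertex-coloured rooted forest $F^{+}$ on $V(G)$, with colours over an alphabet of size $2^{O(t)}$ and height at most $t$.

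Second, an interpretation argument: the edge relation of $G$ is defined in $F^{+}$ by a fixed formula --- $x$ and $y$ are $G$-adjacent iff one of them, say $y$, is the $F$-ancestor of the other at some level $d$ with $d\in S_x$, and ``ancestor at level $d$'' is definable in any rooted forest, even in first order once the height is at most $t$. Composing $\varphi$ with this interpretation (keeping MSO throughout, since the ancestor relation is MSO-definable, or else first collapsing MSO to FO on bounded treedepth~\cite{ElberfeldGT16}) yields a single sentence $\varphi^{\star}=\varphi^{\star}(t,\varphi)$, of complexity bounded in $t$ and $\varphi$, with $G\models\varphi$ iff $F^{+}\models\varphi^{\star}$. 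It therefore suffices to certify ``$F^{+}\models\varphi^{\star}$'' for a vertex-coloured rooted forest of height at most $t$.

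Third, fix a deterministic bottom-up tree automaton $\mathcal A$ (from the proof of Theorem~\ref{thm:MSO-trees}, adapted to coloured unranked forests), with $|Q|\le f(t,\varphi)$ states, recognizing the coloured forests satisfying $\varphi^{\star}$, its transition depending only on a node's colour and on the \emph{set} of its children's states, counted up to a threshold. The prover writes in $v$'s certificate the state $q_v\in Q$ that $\mathcal A$ assigns to the subtree of $v$, and in fact the subtree-states of all the $F$-ancestors of $v$, which adds only a number of bits bounded in $t$ and $\varphi$. The point is that a vertex $p$ can then verify its transition \emph{locally}: choosing $F$ so that every $F$-child-subtree of a node contains a genuine $G$-neighbour of that node --- an easy combinatorial property of a suitably minimal elimination forest --- the vertex $p$ reads, for each $G$-neighbour $y$ that is an $F$-descendant of $p$, the state of the $F$-child of $p$ above $y$ (stored in $y$'s certificate), thereby recovering both the full set of $F$-children of $p$ and their states, and then checks $q_p=\delta(\text{colour}(p),\{\!\{\,q_x : x\text{ an }F\text{-child of }p\,\}\!\})$; the root of each tree additionally checks that its state is accepting. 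Summing up, the certificates have size $O(t\log n+f(t,\varphi))$. Soundness is argued layer by layer: an accepting assignment forces a genuine depth-$\le t$ elimination forest (Theorem~\ref{thm:certify-treedepth}), hence the true connection types, hence the true automaton run, hence $F^{+}\models\varphi^{\star}$, that is $G\models\varphi$; conversely one builds an accepting assignment from any depth-$\le t$ elimination forest and the run of $\mathcal A$.

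The main obstacle is precisely this localisation in the third step: $F^{+}$ is \emph{not} a subgraph of $G$ --- a vertex need not be $G$-adjacent to its parent in $F$, as the $4$-cycle already shows --- so one cannot feed $F^{+}$ to the verifier of Theorem~\ref{thm:MSO-trees} as a black box, and must instead transport the automaton's subtree-states along the real edges of $G$ while keeping the overhead within $O(t\log n+f(t,\varphi))$ and preventing a cheating prover from desynchronising the certificates; getting the combinatorics of minimal elimination forests right, and checking that the machinery of Theorem~\ref{thm:MSO-trees} indeed handles coloured, unbounded-degree forests, are the routine parts. One last point, if $G$ is disconnected: since the components of a treedepth-$\le t$ graph have bounded diameter, Gaifman's locality theorem reduces $\varphi$ to a Boolean combination of single-component properties and statements ``there are at least $k$ components of type $X$'', the latter being certifiable with $O(\log n)$ additional bits by designating $k$ components (identified by their minimum vertex identifier) and checking distinctness; the scheme above then runs inside each component.
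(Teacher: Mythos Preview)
Your proposal is correct and takes a genuinely different route from the paper. The paper argues by \emph{kernelisation}: it defines a bounded-size ``$k$-reduced graph'' $H\subseteq G$ by repeatedly deleting, among the $F$-children of any node, all but $k$ subtrees of each isomorphism type, proves $G\simeq_k H$ via an explicit Ehrenfeucht--Fra\"iss\'e strategy, and certifies the pruning by storing at every vertex the \emph{end type} (a bounded-size invariant) of each of its $F$-ancestors; since the end type of the root determines $H$ up to isomorphism and $|H|$ is bounded in $(k,t)$, every vertex can evaluate $\varphi$ on $H$ outright. You instead interpret $G$ inside the coloured elimination tree $F^{+}$ and certify an accepting run of a tree automaton for the translated sentence $\varphi^{\star}$, storing at every vertex the automaton state of each $F$-ancestor. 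Both schemes rest on the same localisation device---coherence of $F$ guarantees that a node $p$ sees, through a genuine $G$-edge, a witness inside every $F$-child subtree, so $p$ can reconstruct the multiset of its children's types/states---and indeed the paper's end type is essentially the state of the canonical bottom-up automaton computing the $\simeq_k$-class of a subtree. What your route buys is a clean black-box reuse of Theorem~\ref{thm:MSO-trees} and standard interpretation machinery; what the paper's route buys is a self-contained argument (no automata theory), explicit type-count bounds $f_d(k,t)=2^d(k+1)^{f_{d+1}(k,t)}$, and a standalone kernel that may be easier to adapt to other parameters (the paper flags shrubdepth as an open direction). One small point worth making explicit in your write-up: the stored ancestor-states must be checked for agreement along $G$-edges within each $G_w$, which suffices because $G_w$ is connected in a coherent model---this is the concrete mechanism behind your ``desynchronisation'' remark, and it mirrors exactly how the paper propagates end types.
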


This result, as well as Theorem~\ref{thm:MSO-trees}, holds for MSO properties about the structure of the graphs, but our techniques also work for graphs with constant-size inputs, in the spirit of locally checkable labelings~\cite{NaorS95}.

Inspired by our results and techniques, Fraigniaud, Montealegre, Rapaport, and Todinca, very recently proved that it is possible to certify MSO properties in bounded treewidth graphs, with certificates of size $\Theta(\log^2n)$~\cite{FraigniaudMRT21}. 
Replacing treedepth by treewidth is very interesting, as the second parameter is more general and well-known, but it comes at the cost of certificates of size $\Theta(\log^2n)$, hence not a compact certification \emph{per se}. 
It is a fascinating question whether this is optimal or can be reduced down to $O(\log n)$.

Theorem~\ref{thm:main} has an interesting corollary for the certification of graphs with forbidden minors. 
An important open question in the field of local certification is to establish whether all the graph classes defined by a set of forbidden minors have a compact certification (e.g. Open problem 4 in \cite{Feuilloley21}).
Note that this question generalizes the results about planarity and bounded-genus graphs of \cite{FeuilloleyFMRRT21, FeuilloleyFMRRT21-b, EsperetL22}. 
Very recently, Bousquet, Feuilloley and Pierron proved that the answer is positive for all minors of size at most $4$~\cite{BousquetFP21}, but the question is still wide open for general minors.
Theorem~\ref{thm:main} leads to the following result, where $P_t$ and $C_t$ are respectively the path and the cycle of length $t$.

\begin{restatable}{corollary}{coroMinors}
\label{coro:minors}
For all $t$, $P_t$-minor-free graphs and $C_t$-minor-free graphs can be certified with $O(\log n)$-bit certificates.
\end{restatable}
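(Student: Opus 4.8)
The plan is to reduce $P_t$-minor-freeness and $C_t$-minor-freeness to two ingredients we already have: Theorem~\ref{thm:main} (compact certification of MSO properties on bounded-treedepth graphs) and Theorem~\ref{thm:certify-treedepth} (compact certification of the bound on treedepth itself). The key structural fact is that both classes have bounded treedepth. Indeed, a graph has no $P_t$ minor if and only if it has no path on $t$ vertices (as a subgraph, since subdividing only lengthens paths, a $P_t$ minor is equivalent to a $P_t$ subgraph up to renaming $t$), and it is classical that excluding a long path bounds the treedepth: a connected graph with no path on $t$ vertices has treedepth at most $t-1$ (an easy induction, or the standard fact that treedepth is sandwiched between the longest path and an exponential of it; here we even get a linear bound). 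For $C_t$-minor-freeness, excluding $C_t$ as a minor in particular excludes $C_{t}, C_{t+1}, \dots$ as subgraphs and excludes $P_{t'}$-minors for $t'$ large enough in $2$-connected pieces; more directly, a $C_t$-minor-free graph has no cycle of length $\geq t$, hence has bounded \emph{circumference}, and graphs of bounded circumference have bounded treedepth (a graph with circumference $<t$ has treedepth at most $t$, again by a short DFS argument: a DFS tree has height bounded in terms of the longest back-edge span, which is bounded by the circumference). So in both cases there is a constant $d=d(t)$ with $\mathrm{td}(G)\le d$.

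With that in hand, the certification has two parts. First, using Theorem~\ref{thm:certify-treedepth}, we certify with $O(d\log n)=O(\log n)$ bits that $\mathrm{td}(G)\le d$; this guarantees the prover cannot cheat by claiming membership in a class while the underlying graph is wild. Second, we express ``$G$ has no $P_t$ minor'' (resp. ``no $C_t$ minor'') as a fixed first-order sentence $\varphi_t$: excluding a fixed minor $H$ is expressible in FO (indeed in MSO, and over bounded-treedepth graphs the two coincide by \cite{ElberfeldGT16}); concretely, for $P_t$ one can just forbid the existence of $t$ distinct vertices forming a path, $\neg \exists x_1\cdots\exists x_t \big(\bigwedge_{i\ne j} x_i\ne x_j \wedge \bigwedge_{i} x_i - x_{i+1}\big)$, and for $C_t$ one forbids a cycle of length exactly $\ge t$, which is a finite disjunction over lengths $t,\dots,d+1$ of such sentences (bounded because circumference $\le d$), or more simply one writes the standard minor-exclusion FO sentence for $H=C_t$. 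Applying Theorem~\ref{thm:main} to $\varphi_t$ on graphs of treedepth at most $d$ yields a certification with $O(d\log n + f(d,\varphi_t)) = O(\log n)$ bits, where $f(d,\varphi_t)$ is a constant since $d$ and $\varphi_t$ depend only on $t$. Running the two certifications in parallel (concatenating the certificates, with each vertex checking both) gives the claimed $O(\log n)$-bit certification; a vertex rejects if either part fails, and soundness follows from the soundness of the two components together with the observation that whenever $G$ actually lies in the class, both the treedepth bound and $\varphi_t$ hold, so honest certificates exist.

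The main obstacle is the structural lemma bounding treedepth — more precisely, getting the \emph{right} dependence $d(t)$ and making sure the minor-exclusion is faithfully captured. For $P_t$ the path-length/treedepth relation is textbook and linear. For $C_t$ one must be a little careful: $C_t$-minor-freeness is strictly stronger than excluding $C_t$ as a subgraph, so I would argue via circumference: if $G$ has a $C_t$ minor then it has a cycle of length $\ge t$ (contracting a long cycle-model down still leaves a cycle of length $\ge t$), hence circumference $\ge t$; contrapositively, circumference $< t$ implies $C_t$-minor-free, and conversely a cheap direction suffices since we only need ``class $\subseteq$ bounded treedepth''. Then ``circumference $< t$'' gives treedepth $\le t$ (or a small polynomial in $t$) via a depth-first-search argument: in a DFS tree every non-tree edge is a back edge, a root-to-leaf branch of length $\ell$ together with a back edge from its bottom yields a cycle of length up to $\ell$, so all branches have length $< t$. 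Once the bound $d(t)$ is pinned down, everything else is an immediate invocation of the two cited theorems and a parallel-composition argument, both of which are by now routine in local certification.
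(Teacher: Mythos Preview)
Your treatment of the $P_t$ case is correct and matches the paper: $P_t$-minor-free graphs have treedepth at most $t$, the property is MSO, and Theorem~\ref{thm:main} applies.

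The $C_t$ case, however, contains a genuine error. You claim that bounded circumference implies bounded treedepth, and even give a purported DFS proof (``all branches have length $<t$''). This is false: any tree is $C_t$-minor-free for every $t$ (circumference~$0$), yet trees have unbounded treedepth --- a path on $2^d$ vertices already has treedepth $d$. Your DFS argument breaks precisely here: a leaf of the DFS tree need not have any back edge at all, so nothing prevents long branches. Consequently the treedepth bound $d(t)$ you rely on does not exist, and the finite disjunction ``lengths $t,\dots,d+1$'' you use to write the FO sentence is unbounded.

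The paper circumvents this by inserting an extra decomposition step. It shows that every \emph{$2$-connected component} of a $C_t$-minor-free graph is $P_{t^2}$-minor-free (via a chaining argument on back edges, which does work once $2$-connectivity guarantees that every non-root vertex has a back edge above it). Hence each block has treedepth $\le t^2$, and Theorem~\ref{thm:main} can be applied blockwise. The missing ingredient you would still need is that the block decomposition itself can be certified with $O(\log n)$ bits in minor-closed classes; the paper imports this from~\cite{BousquetFP21}. Without that decomposition step (or an equivalent device to isolate the $2$-connected pieces), the argument cannot go through, because the global graph simply does not have bounded treedepth.
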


Still related to the certification of minors, 
Esperet and Norin~\cite{EsperetN22} (generalizing a result by Elek~\cite{Elek20}) proved very recently that certifying that a graph belongs to a minor-closed class or is far from it (in the sense of the edit distance, as in property testing) can be done with constant size certificate. 
Using our certification of bounded treedepth, they generalize this result to all monotone properties of minor-closed classes, with $O(\log n)$-size certificates.

Let us finish this overview, by mentioning a related line of research. A recent series of papers have characterized diverse logics on graphs by various models of distributed local computation, in a similar way as descriptive complexity in centralized computing~\cite{Immerman99}. 
In this area, a paper that is especially relevant to us is~\cite{Reiter15}, which proves that MSO logic on graphs is equivalent to a model called alternating distributed graph automata.
These are actually quite different from our model, with several provers, more constrained local computation, and more general output functions. We describe this model and discuss the differences in more details in Appendix~\ref{app:discussion-DGA}.

\subsection{A glimpse of our techniques and the organization of the paper}

We use a variety of techniques to prove our results, and except for a section of preliminaries (Section~\ref{sec:preliminaries}), each upcoming section of this paper corresponds to one technique. First, we show how to prove the constant size MSO certification in trees (Theorem~\ref{thm:MSO-trees}) by seeing the certificates as a state labeling by the right type of tree automata, and then using the known logic-automata correspondence to derive our result. We will discuss in the appendix how this automata view can be an inspiration to generalize locally checkable languages (LCLs)~\cite{NaorS95} beyond bounded degree graphs.  

The proof of the certification of bounded treedepth (Theorem~\ref{thm:certify-treedepth}) is in Section~\ref{sec:treedepth-certification}, and uses spanning tree certification along with an analysis of interplay between ancestors in the decompositions and the separators in the graph. 
Given this certification, we certify MSO properties (Theorem~\ref{thm:main}) via kernelization. In more details, we show that for any graph there exists a kernel, that is, a graph that satisfies the exact  same set of MSO properties, whose size only depends on the formula and on the treedepth (and in particular not in the size of the original graph). We show that this kernel can be certified locally, which is enough for our purpose, as we can finish by describing the full kernel to all nodes, and let them check the MSO property at hand.  

Finally, in Section~\ref{sec:lower-bounds}, we prove our two lower bounds (Theorem~\ref{thm:isomorphism-trees} and~\ref{thm:lower-bound-treedepth}) by reduction from two-party non-deterministic communication complexity.

To our knowledge, it is the first time that automata tools, kernelization, and reductions from communication complexity for the $\Theta(\log n)$ regimes, are used in local certification.

\section{Preliminaries}
\label{sec:preliminaries}

All the graphs considered in this paper are connected, loopless and non-empty.

\subsection{Treedepth}
Treedepth was introduced by Ne\v{s}et\v{r}il and Ossona de Mendez in~\cite{NesetrilM06} as a graph parameter inducing a class where model checking is more efficient. In the last ten years, this graph parameter received considerable attention (see \cite{NesetrilM12} for a book chapter about this parameter). Treedepth is related to other important width parameters in graphs. In particular, it is an upper bound on the pathwidth, which is essential in the study of minors~\cite{RobertsonS83} and interval graphs~\cite{Bodlaender98}.

Let $T$ be a rooted tree. A vertex $u$ is an \emph{ancestor} of $v$ in $T$, if $u$ is on the path between $v$ and the root. We say that $v$ is a \emph{descendant} of $u$ if $u$ is an ancestor of $v$.

\begin{definition}[\cite{NesetrilM06}]
The \emph{treedepth} of a graph $G$ is the minimum height of a forest $F$ on the same vertex set as $G$, such that for every edge $(u,v)$ of the graph $G$, $u$ is an ancestor or a descendant of $v$ in the forest. 
\end{definition}

Since in our setting $G$ is connected, $F$ is necessarily a tree, called an \emph{elimination tree}. 
In a more logic-oriented perspective, it is called a \emph{model} of the graph. If the tree has depth at most~$k$, it is a \emph{$k$-model of $G$} (see Figure~\ref{fig:treedepth}).
Note that there might be several elimination trees. 

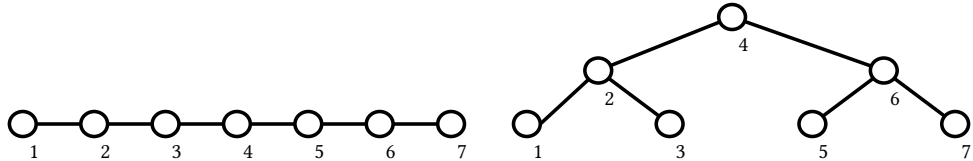
\begin{figure}[!h]
    \centering
    \begin{tabular}{cc}
    \scalebox{0.9}{
    \tikzset{every picture/.style={line width=0.75pt}} 

\begin{tikzpicture}[x=0.75pt,y=0.75pt,yscale=-1,xscale=1]

\draw [line width=1.5]    (100,137) -- (140,137) ;
\draw [line width=1.5]    (132.5,137) -- (172.5,137) ;
\draw [line width=1.5]    (172.5,137) -- (212.5,137) ;
\draw [line width=1.5]    (212.5,137) -- (252.5,137) ;
\draw [line width=1.5]    (252.5,137) -- (292.5,137) ;
\draw [line width=1.5]    (292.5,137) -- (332.5,137) ;
\draw  [fill={rgb, 255:red, 255; green, 255; blue, 255 }  ,fill opacity=1 ][line width=1.5]  (85,137) .. controls (85,133.13) and (88.36,130) .. (92.5,130) .. controls (96.64,130) and (100,133.13) .. (100,137) .. controls (100,140.87) and (96.64,144) .. (92.5,144) .. controls (88.36,144) and (85,140.87) .. (85,137) -- cycle ;
\draw  [fill={rgb, 255:red, 255; green, 255; blue, 255 }  ,fill opacity=1 ][line width=1.5]  (125,137) .. controls (125,133.13) and (128.36,130) .. (132.5,130) .. controls (136.64,130) and (140,133.13) .. (140,137) .. controls (140,140.87) and (136.64,144) .. (132.5,144) .. controls (128.36,144) and (125,140.87) .. (125,137) -- cycle ;
\draw  [fill={rgb, 255:red, 255; green, 255; blue, 255 }  ,fill opacity=1 ][line width=1.5]  (165,137) .. controls (165,133.13) and (168.36,130) .. (172.5,130) .. controls (176.64,130) and (180,133.13) .. (180,137) .. controls (180,140.87) and (176.64,144) .. (172.5,144) .. controls (168.36,144) and (165,140.87) .. (165,137) -- cycle ;
\draw  [fill={rgb, 255:red, 255; green, 255; blue, 255 }  ,fill opacity=1 ][line width=1.5]  (205,137) .. controls (205,133.13) and (208.36,130) .. (212.5,130) .. controls (216.64,130) and (220,133.13) .. (220,137) .. controls (220,140.87) and (216.64,144) .. (212.5,144) .. controls (208.36,144) and (205,140.87) .. (205,137) -- cycle ;
\draw  [fill={rgb, 255:red, 255; green, 255; blue, 255 }  ,fill opacity=1 ][line width=1.5]  (245,137) .. controls (245,133.13) and (248.36,130) .. (252.5,130) .. controls (256.64,130) and (260,133.13) .. (260,137) .. controls (260,140.87) and (256.64,144) .. (252.5,144) .. controls (248.36,144) and (245,140.87) .. (245,137) -- cycle ;
\draw  [fill={rgb, 255:red, 255; green, 255; blue, 255 }  ,fill opacity=1 ][line width=1.5]  (285,137) .. controls (285,133.13) and (288.36,130) .. (292.5,130) .. controls (296.64,130) and (300,133.13) .. (300,137) .. controls (300,140.87) and (296.64,144) .. (292.5,144) .. controls (288.36,144) and (285,140.87) .. (285,137) -- cycle ;
\draw  [fill={rgb, 255:red, 255; green, 255; blue, 255 }  ,fill opacity=1 ][line width=1.5]  (325,137) .. controls (325,133.13) and (328.36,130) .. (332.5,130) .. controls (336.64,130) and (340,133.13) .. (340,137) .. controls (340,140.87) and (336.64,144) .. (332.5,144) .. controls (328.36,144) and (325,140.87) .. (325,137) -- cycle ;

\draw (94.5,147.4) node [anchor=north west][inner sep=0.75pt]    {$1$};
\draw (134.5,147.4) node [anchor=north west][inner sep=0.75pt]    {$2$};
\draw (174.5,147.4) node [anchor=north west][inner sep=0.75pt]    {$3$};
\draw (214.5,147.4) node [anchor=north west][inner sep=0.75pt]    {$4$};
\draw (254.5,147.4) node [anchor=north west][inner sep=0.75pt]    {$5$};
\draw (294.5,147.4) node [anchor=north west][inner sep=0.75pt]    {$6$};
\draw (334.5,147.4) node [anchor=north west][inner sep=0.75pt]    {$7$};

\end{tikzpicture}}
    &
    \scalebox{0.9}{
    \tikzset{every picture/.style={line width=0.75pt}} 

\begin{tikzpicture}[x=0.75pt,y=0.75pt,yscale=-1,xscale=1]

\draw [line width=1.5]    (100,137) -- (132.5,107) ;
\draw [line width=1.5]    (132.5,107) -- (172.5,137) ;
\draw [line width=1.5]    (132.5,107) -- (207.5,77) ;
\draw [line width=1.5]    (292.5,107) -- (252.5,137) ;
\draw [line width=1.5]    (292.5,107) -- (207.5,77) ;
\draw [line width=1.5]    (292.5,107) -- (332.5,137) ;
\draw  [fill={rgb, 255:red, 255; green, 255; blue, 255 }  ,fill opacity=1 ][line width=1.5]  (85,137) .. controls (85,133.13) and (88.36,130) .. (92.5,130) .. controls (96.64,130) and (100,133.13) .. (100,137) .. controls (100,140.87) and (96.64,144) .. (92.5,144) .. controls (88.36,144) and (85,140.87) .. (85,137) -- cycle ;
\draw  [fill={rgb, 255:red, 255; green, 255; blue, 255 }  ,fill opacity=1 ][line width=1.5]  (125,107) .. controls (125,103.13) and (128.36,100) .. (132.5,100) .. controls (136.64,100) and (140,103.13) .. (140,107) .. controls (140,110.87) and (136.64,114) .. (132.5,114) .. controls (128.36,114) and (125,110.87) .. (125,107) -- cycle ;
\draw  [fill={rgb, 255:red, 255; green, 255; blue, 255 }  ,fill opacity=1 ][line width=1.5]  (165,137) .. controls (165,133.13) and (168.36,130) .. (172.5,130) .. controls (176.64,130) and (180,133.13) .. (180,137) .. controls (180,140.87) and (176.64,144) .. (172.5,144) .. controls (168.36,144) and (165,140.87) .. (165,137) -- cycle ;
\draw  [fill={rgb, 255:red, 255; green, 255; blue, 255 }  ,fill opacity=1 ][line width=1.5]  (200,77) .. controls (200,73.13) and (203.36,70) .. (207.5,70) .. controls (211.64,70) and (215,73.13) .. (215,77) .. controls (215,80.87) and (211.64,84) .. (207.5,84) .. controls (203.36,84) and (200,80.87) .. (200,77) -- cycle ;
\draw  [fill={rgb, 255:red, 255; green, 255; blue, 255 }  ,fill opacity=1 ][line width=1.5]  (245,137) .. controls (245,133.13) and (248.36,130) .. (252.5,130) .. controls (256.64,130) and (260,133.13) .. (260,137) .. controls (260,140.87) and (256.64,144) .. (252.5,144) .. controls (248.36,144) and (245,140.87) .. (245,137) -- cycle ;
\draw  [fill={rgb, 255:red, 255; green, 255; blue, 255 }  ,fill opacity=1 ][line width=1.5]  (285,107) .. controls (285,103.13) and (288.36,100) .. (292.5,100) .. controls (296.64,100) and (300,103.13) .. (300,107) .. controls (300,110.87) and (296.64,114) .. (292.5,114) .. controls (288.36,114) and (285,110.87) .. (285,107) -- cycle ;
\draw  [fill={rgb, 255:red, 255; green, 255; blue, 255 }  ,fill opacity=1 ][line width=1.5]  (325,137) .. controls (325,133.13) and (328.36,130) .. (332.5,130) .. controls (336.64,130) and (340,133.13) .. (340,137) .. controls (340,140.87) and (336.64,144) .. (332.5,144) .. controls (328.36,144) and (325,140.87) .. (325,137) -- cycle ;

\draw (94.5,147.4) node [anchor=north west][inner sep=0.75pt]    {$1$};
\draw (134.5,117.4) node [anchor=north west][inner sep=0.75pt]    {$2$};
\draw (174.5,147.4) node [anchor=north west][inner sep=0.75pt]    {$3$};
\draw (209.5,87.4) node [anchor=north west][inner sep=0.75pt]    {$4$};
\draw (254.5,147.4) node [anchor=north west][inner sep=0.75pt]    {$5$};
\draw (294.5,117.4) node [anchor=north west][inner sep=0.75pt]    {$6$};
\draw (334.5,147.4) node [anchor=north west][inner sep=0.75pt]    {$7$};

\end{tikzpicture}}
    \end{tabular}
    \caption{An example of an elimination tree. On the left the graph $G$, that is a path on seven vertices, and on the right an elimination tree $T$ of this graph. Since this tree has depth $2$, the path has treedepth at most $2$, and this is actually optimal.}
    \label{fig:treedepth}
\end{figure}

Let us fix an elimination tree. A \emph{vertex of $G$ has depth $d$}, if it has depth $d$ in the elimination tree. For any vertex $v$, let $G_v$ be the subgraph of $G$ induced by the vertices in the subtree of $T$ rooted in~$v$. Note that, for the root~$r$, $G_r=G$. Now, a model $T$ of $G$ is \emph{coherent} if, for every vertex $v$, the vertices of the subforest rooted in $v$ form a connected component in~$G$. In other words, for every child $w$ of $v$, there exists a vertex $x$ of the subtree rooted in $w$ that is connected to~$v$.

We have the following simple result, that we prove in Appendix~\ref{app:remark-coherent} for completeness.

\begin{restatable}{remark}{rkCoherent}
\label{rk:coherent}
Let $T$ be a coherent $d$-model of a connected graph $G$ and $u$ be a vertex of $G$. Then $G_u$ induces a connected subgraph.
\end{restatable}

\subsection{FO and MSO logics}
\label{subsec:preliminaries-logics}

Graphs can be seen as relational structures on which properties can be expressed using logical sentences. The most natural formalism considers a binary predicate that tests the adjacency between two vertices. Allowing standard boolean operations and \emph{quantification on vertices}, we obtain the \emph{first-order logic} (FO for short) on graphs. Formally, a FO formula is defined by the following grammar:
$$x=y \mid x-y \mid \neg F \mid F\wedge F \mid F\vee F  \mid \forall x F \mid \exists x F$$
where $x,y$ lie in a fixed set of variables. Except for $x-y$, which denotes the fact that $x$ and $y$ are adjacent, the semantic is the classic one. 
Given a FO sentence $F$ (i.e. a formula where each variable falls under the scope of a corresponding quantifier) and a graph $G$, we write $G\vDash F$ when the graph $G$ satisfies the sentence $F$, which is defined in the natural way. 

MSO logic is an enrichment of FO, where we allow \emph{quantification on sets of vertices}\footnote{Sometimes MSO sentences are also allowed to quantify on set of edges. We do not discuss the matter any further since for our purposes (i.e. on trees or bounded treedepth graphs), it is known that quantifying on edges does not increase the expressive power of the sentences.}, usually denoted by capital variables, and we add the membership predicate $x\in X$. We skip the details here since for bounded treedepth graphs, it is known that FO and MSO have the same expressive power.

\begin{theorem}[\cite{Grohe17}]
For every integer $d$ and MSO sentence $\varphi$, there exists a FO sentence $\psi$ such that $\varphi$ and $\psi$ are satisfied by the same set of graphs of treedepth at most $d$. 
\end{theorem}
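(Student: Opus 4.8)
The plan is to follow the Feferman--Vaught / composition method of Elberfeld, Grohe and Tantau~\cite{ElberfeldGT16} (the statement above being the packaging of it in~\cite{Grohe17}) and to induct on the treedepth bound $d$. Fix the quantifier rank $q$ of $\varphi$. I would aim to produce an integer $q'=q'(d,q)$ together with the claim that, on graphs of treedepth at most $d$, the FO$_{q'}$-type of a graph determines its MSO$_q$-type. Since only finitely many FO$_{q'}$-types exist and $\varphi$ defines a union of MSO$_q$-types, gathering the FO$_{q'}$-types consistent with $\varphi$ then gives the desired FO sentence $\psi$; on graphs of treedepth more than $d$ we are free to let $\psi$ behave arbitrarily, as the statement only requires agreement on the class. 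The base case $d\le 1$ is immediate, since the only connected nonempty graph of treedepth~$1$ is a single vertex.

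For the inductive step I would take a connected $G$ of treedepth at most $d$, fix an elimination tree of height at most $d$, and let $r$ be its root; the connected components $C_1,\dots,C_k$ of $G-r$ are induced subgraphs of treedepth at most $d-1$ (and, if the tree is chosen coherent, they are exactly the subtrees hanging off the children of $r$, cf.~Remark~\ref{rk:coherent}). Marking in each $C_i$ the neighbours of $r$ with a fresh unary predicate, $G$ is recovered from the disjoint union $C_1\sqcup\dots\sqcup C_k$ by adding the single vertex $r$ adjacent to a quantifier-free-definable set. Hence the MSO$_q$-type of $G$ is a function of the MSO$_q$-type of that marked disjoint union, and the MSO composition theorem says the latter depends only on the multiset of MSO$_q$-types of the $C_i$ truncated at a threshold $M(q)$ (more than $M(q)$ copies of a given type are invisible to rank-$q$ MSO). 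By induction each MSO$_q$-type of an enriched graph of treedepth $d-1$ is FO-definable of bounded quantifier rank, so ``$G$ has MSO$_q$-type $\sigma$'' unfolds into a Boolean combination of threshold-counting assertions ``there are exactly $m$ (resp.\ at least $M(q)$) pieces $C_i$ on which the relativised formula $\theta_\tau$ holds'' --- a first-order statement of bounded rank, \emph{provided} one can refer inside $G$ to $r$, to the pieces $C_i$, and to relativisation to a piece.

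That proviso is where the actual difficulty lies, and the step I expect to be the main obstacle: an elimination tree is a second-order object, so an FO sentence cannot simply existentially guess $r$ and the partition of $G-r$. I would handle it exactly as~\cite{ElberfeldGT16} (and~\cite{Grohe17}) do, through the structure theory of bounded-treedepth graphs: one exhibits, on the class of graphs of treedepth at most $d$, FO formulas whose quantifier rank is bounded in terms of $d$ only and that define a canonical elimination forest --- equivalently, that define level by level the relevant separator vertices and the membership of each vertex in the corresponding piece. Intuitively this succeeds because such graphs are shallow, so a single FO sentence of rank depending only on $d$ can ``see'' the whole decomposition and simulate the recursion above. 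Feeding these formulas into the composition argument closes the induction and yields $q'=q'(d,q)$, hence $\psi$. I would not try to control $q'$: as in~\cite{ElberfeldGT16}, its dependence on $q$ and $d$ is non-elementary, in line with the lower bounds surrounding Courcelle's theorem~\cite{Frick04,Lampis13}. For our purposes the theorem is used only as a black box, so the citation to~\cite{Grohe17} suffices.
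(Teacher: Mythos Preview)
The paper does not prove this theorem at all: it is stated with a bare citation to~\cite{Grohe17} (building on~\cite{ElberfeldGT16}) and used as a black box, exactly as you yourself conclude in your last sentence. There is therefore nothing in the paper to compare your sketch against.

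For what it is worth, your outline is a faithful summary of the Elberfeld--Grohe--Tantau argument: induct on the treedepth bound, peel off a root, apply Feferman--Vaught to the resulting disjoint union of pieces of smaller treedepth, and close the induction using that in bounded-treedepth graphs the relevant decomposition data (in particular, membership in a connected component of $G-r$, since all paths have length bounded in $d$) are themselves FO-definable with quantifier rank depending only on $d$. Your identification of that last point as the crux is accurate. Since the present paper only needs the statement, your final remark that the citation suffices is the right conclusion.
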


In Section~\ref{sec:MSO-certif-bounded-treedepth}, we are looking for a kernelization result for the model checking problem, where the kernel is checkable with small certificates. In particular, given a sentence $\varphi$ and a graph $G$, we have to prove that the graph $H$ output by our kernelization algorithm satisfies $\varphi$ if and only if so does $G$. We actually show a stronger result, namely that for every integer $k$ and every graph $G$, there exists a graph $H_k$ satisfying the same set of sentences with at most $k$ nested quantifiers as $G$. In that case, we write $G\simeq_k H_k$. This yields the required result when $k$ is quantifier depth of $\varphi$. 

The canonical tool to prove equivalence between structures is the so-called \emph{Ehrenfeucht-Fraïssé game}. This game takes place between two players, Spoiler and Duplicator. The arena is given by two structures (here, graphs) and a number $k$ of rounds. At each turn, Spoiler chooses a vertex in one of the graphs, and Duplicator has to answer by picking a vertex in the other graph. 
Let the positions played in the first (resp. second) graph at turn $i$ be $u_1,\ldots,u_i$ (resp. $v_1,\ldots,v_i$). Spoiler wins at turn $i$ if the mapping $u_j\mapsto v_j$ is not an isomorphism between the subgraphs induced by $\{u_1,\ldots,u_i\}$ and $\{v_1,\ldots,v_i\}$. If Spoiler does not win before the end of the $k$-th turn, then Duplicator wins.
The main result about this game is the following, which relates winning strategies with equivalent structures for $\simeq_k$. 

\begin{theorem}
\label{thm:EF}
Let $G,H$ be two graphs and $k$ be an integer. Duplicator has a winning strategy in the $k$-round Ehrenfeucht-Fraïssé game on $(G,H)$ if and only if $G\simeq_k H$. 
\end{theorem}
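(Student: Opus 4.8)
The plan is to prove, by induction on $k$, a pebbled strengthening: for all tuples $\bar a=(a_1,\dots,a_m)$ of vertices of $G$ and $\bar b=(b_1,\dots,b_m)$ of vertices of $H$, Duplicator wins the $k$-round game started from the position $(\bar a,\bar b)$ if and only if $(G,\bar a)$ and $(H,\bar b)$ agree on all formulas in the free variables $x_1,\dots,x_m$ of quantifier depth at most $k$ -- a relation I will still write $(G,\bar a)\simeq_k(H,\bar b)$, adopting the convention that a $0$-round game is won by Duplicator exactly when $a_j\mapsto b_j$ is a partial isomorphism. Theorem~\ref{thm:EF} is then the case $m=0$. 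The base case $k=0$ is immediate: both sides assert precisely that $a_j\mapsto b_j$ is an isomorphism between the subgraphs induced by the pebbled vertices, since quantifier-free formulas are boolean combinations of the atoms $x_i=x_j$ and $x_i-x_j$.

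For the inductive step I would argue the two directions separately. From ``Duplicator wins in $k+1$ rounds'' to $\simeq_{k+1}$: after putting a formula of depth $k+1$ in negation normal form and running a structural induction, the only nontrivial case is $\exists x_{m+1}\,\psi$ with $\psi$ of depth at most $k$ (atoms and boolean connectives reduce to the base case and the structural induction, and $\forall$ is dual). Given a witness for the existential in one graph, I let Spoiler play it; Duplicator's winning response yields a position to which the induction hypothesis applies, transferring $\psi$ and producing a witness in the other graph. Conversely, from $\simeq_{k+1}$ to ``Duplicator wins in $k+1$ rounds'': when Spoiler plays $a_{m+1}$, say in $G$, I let $\chi(\bar x,x_{m+1})$ be the conjunction of all formulas of quantifier depth at most $k$ satisfied at $(G,\bar a,a_{m+1})$, note that $G$ then satisfies the depth-$(k+1)$ formula $\exists x_{m+1}\,\chi$, use $\simeq_{k+1}$ to get a witness $b_{m+1}$ in $H$, and let Duplicator play it; by construction $(G,\bar a,a_{m+1})$ and $(H,\bar b,b_{m+1})$ satisfy the same depth-$k$ formulas (if $G$ satisfies one it is a conjunct of $\chi$, otherwise its negation is), so $(G,\bar a,a_{m+1})\simeq_k(H,\bar b,b_{m+1})$ and the induction hypothesis finishes the game. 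The symmetry of the game (Spoiler may move in either graph) and of $\simeq_{k+1}$ covers the remaining cases.

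The step I expect to be the main obstacle -- or at least the one needing real care rather than bookkeeping -- is making sense of ``the conjunction of all depth-$k$ formulas'': this is a legitimate formula only because, over the finite vocabulary of graphs and a fixed finite set of free variables, there are finitely many formulas of given quantifier depth up to logical equivalence. I would prove this by an inner induction on the depth $r$: at $r=0$ finitely many atoms give finitely many boolean combinations up to equivalence, and at $r+1$ every formula is a boolean combination of depth-$r$ formulas over the same variables and of formulas $\exists x\,\theta$ with $\theta$ of depth $r$ over one extra variable, of which there are finitely many up to equivalence by the inner hypothesis. (Alternatively one defines the characteristic/Hintikka formula of each position directly by this recursion and uses it in place of $\chi$, sidestepping the cardinality count.) Once this is in place, the remainder is the routine induction sketched above.
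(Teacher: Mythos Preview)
Your proof plan is the standard, correct argument for the classical Ehrenfeucht--Fra\"iss\'e theorem: induct on $k$ with a pebbled strengthening, handle the inductive step via Hintikka (characteristic) formulas, and rely on the finiteness of depth-$k$ formulas up to equivalence over a finite relational vocabulary. The only point requiring care, as you note, is justifying that the ``conjunction of all depth-$k$ formulas'' is a genuine formula, and your inner induction (or the direct Hintikka-formula construction) handles this.

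However, there is nothing to compare against: the paper does not prove Theorem~\ref{thm:EF}. It is quoted as a classical result from the literature (with a pointer to Thomas's survey~\cite{Thomas93}) and is used as a black box in the proof of Proposition~\ref{prop:kernel-correctness}. So your proposal is correct and complete, but the paper itself offers no proof to match it against.
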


See \cite{Thomas93} for a survey on Ehrenfeucht-Fraïssé games and its applications in computer science.

\subsection{Local certification: definitions and basic techniques}
\label{subsec:basic-certif}

We assume that the vertices of the graph are equipped with unique identifiers, also called IDs, in a polynomial range $[1, n^k]$ ($k$ being a constant). Note that an ID can be written on $O(\log n)$ bits.

In this paper, a local certification is described by a local verification algorithm, which is an algorithm that takes as input the identifiers and the labels of a node and of its neighbors, and outputs a binary decision, usually called \emph{accept} or \emph{reject}. A local certification of a property is a local verification algorithm such that:
\begin{itemize}
    \item If the graph satisfies the property, then there exists a label assignment, such that the local verification algorithm accepts at every vertex.
    \item If the graph does not satisfy the property, then for every label assignment, there exists at least one vertex that rejects. 
\end{itemize}

A graph that satisfies the property is a \emph{yes-instance}, and a graph that does not satisfy the property is a \emph{no-instance}. 
The labels are called \emph{certificates}. It is equivalent to consider that there is an entity, called the \emph{prover}, assigning the labels (a kind of external oracle). The size $f(n)$ of a certification is the size of its largest label for graphs of size $n$. The certification size of a property or a set of properties is the (asymptotic) minimum size of a local certification. 

A standard tool for local certification is spanning trees that have a compact certification.

\begin{proposition}
One can locally encode and certify a spanning tree with $O(\log n)$ bits. The number of vertices in the graph can also be certified with $O(\log n)$ bits.
\end{proposition}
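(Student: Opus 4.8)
The plan is to use the classical parent‑pointer‑plus‑distance encoding, and then to layer subtree sizes on top of it for the vertex count. First I would certify a spanning tree. Fix a spanning tree $T$ of $G$ rooted at an arbitrary vertex $r$. The prover gives to each vertex $v$: the identifier of $r$ (the same value for every vertex), the identifier of the parent $p(v)$ of $v$ in $T$ (or a special symbol if $v=r$), and the distance $d(v)$ from $v$ to $r$ in $T$. Each of these is an integer in a polynomial range, hence $O(\log n)$ bits. The verification at a vertex $v$ checks that: (i) all neighbors of $v$ carry the same claimed root identifier as $v$; (ii) if $v$ is not the claimed root, then the identifier $p(v)$ occurs among the identifiers of the neighbors of $v$, and $d(v)=d(p(v))+1$; (iii) if $v$ is the claimed root, then $p(v)$ is the special symbol, $d(v)=0$, and the claimed root identifier equals the identifier of $v$.

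Next I would check correctness. On a yes‑instance the honest assignment above passes every test. Conversely, suppose some assignment passes all tests on the connected graph $G$. By (i) and connectivity, all vertices carry the same root identifier $\rho$. Following parent pointers from any vertex, the $d$‑values strictly decrease by one at each step, so the walk cannot cycle and must eventually reach a vertex that fails to satisfy (ii), i.e.\ a vertex claiming to be the root; by (iii) its own identifier is then $\rho$. Since identifiers are unique, there is exactly one such vertex $r$, and every other vertex satisfies (ii). Then the parent pointers, together with $r$, form a subgraph of $G$ with $n-1$ edges (the edge $\{v,p(v)\}$ is determined by $v$, and no two coincide because $d(v)=d(p(v))+1$) in which every vertex has a path to $r$; this is a spanning tree. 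Hence any accepted assignment encodes a genuine spanning tree, and in that case $d(v)\le n-1$, confirming the $O(\log n)$ bound.

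For the number of vertices I would reuse this scheme and add one field: each vertex $v$ additionally receives an integer $s(v)$, intended as the number of vertices of the subtree of $T$ rooted at $v$, which is at most $n$ and thus again $O(\log n)$ bits. Since a neighbor $w$ of $v$ is a child of $v$ in $T$ exactly when $p(w)=v$, the vertex $v$ can identify its children locally and check $s(v)=1+\sum_{w:\,p(w)=v}s(w)$; in addition the root checks that $s(r)$ equals the claimed value $N$ for $|V(G)|$. On any accepted assignment the encoded spanning tree is genuine, and an induction from the leaves shows $s(v)$ equals the true size of the subtree at $v$, so $s(r)=|V(G)|$ and the root's test forces $N=|V(G)|$; conversely, on a graph with $N=|V(G)|$ the honest assignment passes every test.

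The only nontrivial point — what I expect to be the main obstacle to state cleanly — is ruling out that an accepted assignment encodes a spanning \emph{forest} with several components rather than a single spanning tree: this is exactly what check (i), together with the connectivity of $G$, prevents, since two distinct trees of such a forest would be joined by an edge of $G$ whose endpoints carry distinct root identifiers, and both would reject.
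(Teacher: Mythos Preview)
Your proof is correct and follows essentially the same approach as the paper: root the tree, store the root identifier and distance to the root (plus a parent pointer to actually encode the tree), and layer subtree sizes on top to certify~$n$. The paper only sketches this idea and defers the details to a tutorial, so your writeup is in fact more complete than what appears there.
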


The idea of the certification of the spanning tree is to root the tree, and then to label the vertices with the distance to the root (to ensure acyclicity) and the ID of the root (to ensure connectivity). To certify the number of vertices, one also labels the vertices with the number of nodes in their subtrees. 
We refer to the tutorial \cite{Feuilloley21}, for intuitions, proofs, and history of these tools.

\section{MSO certification on trees via tree automata}
\label{sec:MSO-trees-automata}

\thmMSOTrees*

The full formal proof of Theorem~\ref{thm:MSO-trees} is deferred to the Appendix~\ref{app:proof-MSO-trees}, but we discuss the intuition here.
The idea of the proof is to adapt results from the tree automata literature. Let us give some intuition with classic (word) automata. 
Consider a word as a directed path whose edges are labeled with letters, then this word is recognized by an automaton if we can label the vertices with states of the automaton, in such a way that each triplet $(u, (u,v), v)$ (where $u$ and $v$ are adjacent vertices) has a labeling ($q,\ell, q'$) (where $q$ and $q'$ are states, and $\ell$ is a letter) that is a proper transition, and the first and last vertices are labeled with initial and final states respectively. 
Now to certify that a word is recognized by an automaton, we can label every node with its state in an accepting run, and the verification can be done locally. 
Finally, Büchi-Elgot-Trakhtenbrot theorem states that MSO properties are exactly the ones that are recognized by a regular automaton, thus we get Theorem~\ref{thm:MSO-trees} in the case of directed paths.
The automata point of view (without the relation to logics) has been used before to understand the complexity of locally checkable labelings on cycles and paths, see in particular in~\cite{ChangSS21}.

Now, a tree automaton is the analogue of a regular automaton for rooted trees. 
In particular, the transitions specify states for a vertex and its children. 
Again, there is a nice relation with MSO:  MSO logic on trees is exactly the set of languages recognized by tree automata~\cite{ThatcherW68}. 
Therefore, the same labeling-by-states strategy basically works, but there are some technicalities. 
Indeed, the result of~\cite{ThatcherW68}, is for rooted trees with bounded degree and with an order on the children of each node; And the properties expressible in MSO in this type of trees are a bit different from the ones in our unrooted, unordered trees with unbounded degrees. 
But we can get the result by describing a root in the certificates, and using less classical results for other types of tree automata, adapted to our type of trees~\cite{BonevaT05}. 

Interestingly, the tree automata that capture MSO properties on trees can be described as checking that the multiset of states of the neighbors satisfies some simple inequalities. 
We discuss in Appendix~\ref{app:discussion-LCL} how this provides interesting directions to generalize the classic and well-understood setting of locally checkable labelings (LCLs)~\cite{NaorS95}. 

\section{Treedepth certification via ancestors lists}
\label{sec:treedepth-certification}

This section is devoted to the proof of the following theorem.

\thmCertifyTreedepth*

Let $v$ be a vertex, and $w$ be its parent in the tree, we define an \emph{exit vertex of $v$} as a vertex $u$ of $G_v$ connected to $w$. Note that such a vertex must exist, if the model is coherent. 

We now describe a certification.
In a \emph{yes}-instance, the prover finds a coherent elimination tree of depth at most $t$, and assigns the labels in the following way.

\begin{itemize}
    \item Every vertex $v$ is given the list of the identifiers of its ancestors, from its own identifier to the identifier of the root.
    \item For every vertex $v$, except the root, the prover describes and certifies a spanning tree of $G_v$, pointing to the exit vertex of $v$. (See Subsection~\ref{subsec:basic-certif} for the certification of spanning trees.) The vertices of the spanning tree are also given the depth $k$ of $v$ in the elimination tree.
\end{itemize}

Note that the length of the lists is upper bounded by $t$, and that every vertex holds a piece of spanning tree certification only for the vertices of its list, therefore the certificates are on $O(t \log n)$ bits. Now, the local verification algorithm is the following. For every vertex $v$ with a list $L$ of length~$d+1$, check that:

\begin{enumerate}
    \item \label{item:at-most-t} $d\leq t$, and $L$ starts with the identifier of the vertex, and ends with the same identifier as in the lists of its neighbors in the graph.
    \item \label{item:edges-ancestors} The neighbors in $G$ have lists that are suffixes or extensions by prefix of $L$.
    \item \label{item:ST-for-each-depth} There are $d$ spanning trees described in the certificates.
    \item \label{item:ST-verif} For every $k\leq d$, for the spanning trees associated with depth $d$: 
    \begin{itemize}
        \item The tree certification is locally correct.
        \item The neighbors in the tree have lists with the same $(k+1)$-suffix.
        \item If the vertex is the root, then it has a neighbor whose list is the $k$-suffix of its own list. 
    \end{itemize}
\end{enumerate}

It is easy to check that on \emph{yes}-instances the verification goes through.
Now, consider an instance where all vertices accept. We shall prove that then we can define a forest, such that the lists of identifiers given to the nodes are indeed the identifiers of the ancestors in this forest. Once this is done, the fact that Steps~\ref{item:at-most-t} and~\ref{item:edges-ancestors} accept implies that the forest is a tree of the announced depth, and is a model of the graph. Let us first prove the following claim:

\begin{claim}
\label{clm:one-step}
For every vertex $u$, with a list $L$ of size at least two, there exists another vertex $v$ in the graph whose list is the same as $L$ but without the first element. 
\end{claim}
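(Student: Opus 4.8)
The plan is to trace where the ``missing'' ancestor list should come from by following a spanning tree in the certificates. Let $u$ be a vertex with list $L = (a_0, a_1, \ldots, a_d)$ of size $d+1 \geq 2$, where by Step~\ref{item:at-most-t} we have $a_0 = \mathrm{ID}(u)$. Our target is a vertex $v$ with list exactly $(a_1, \ldots, a_d)$, i.e. the $d$-suffix of $L$. The key is that by Step~\ref{item:ST-for-each-depth} the certificate of $u$ contains $d$ spanning-tree certifications, one associated with each depth $k \in \{1, \ldots, d\}$; I would look at the one associated with depth $d$, call it $S$. Since all vertices accept, the local spanning-tree checks of Step~\ref{item:ST-verif} pass everywhere that holds a piece of $S$, so the certified fragments of $S$ glue into a genuine tree (acyclic and connected on its vertex set), rooted at some vertex $\rho$ — this is exactly what the distance-to-root plus root-ID encoding of a spanning tree buys us, as recalled in Subsection~\ref{subsec:basic-certif}.

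Next I would argue that $u$ lies in this tree $S$ and that every vertex of $S$ carries the same $(d+1)$-suffix of its list as $u$ does — but since $u$'s list has length exactly $d+1$, that means every vertex of $S$ has $L$ itself as a list (in particular all of $S$'s vertices share the list $L$). This uses the second bullet of Step~\ref{item:ST-verif}: neighbors in the tree have lists with the same $(k+1)$-suffix, here $k=d$, so the suffix agreement propagates along all edges of $S$, hence to all of $S$ by connectivity. Now apply the third bullet of Step~\ref{item:ST-verif} at the root $\rho$ of $S$: $\rho$ has a neighbor (in the graph) whose list is the $d$-suffix of $\rho$'s list, i.e. the $d$-suffix of $L$, which is precisely $(a_1, \ldots, a_d)$. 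That neighbor is the vertex $v$ we want.

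The one point that needs a little care — and which I expect to be the main obstacle — is making the reduction to the ``depth-$d$ spanning tree'' clean: I must be sure that the piece of spanning-tree certification $u$ holds ``for depth $d$'' is well-defined and consistent with what its neighbors hold, so that there really is a global tree $S$ to speak of. This is handled by combining Step~\ref{item:ST-for-each-depth} (each relevant vertex holds exactly $d$ such certifications, indexed by depth) with the first bullet of Step~\ref{item:ST-verif} (each is locally correct) and the fact, recalled in the preliminaries, that a locally-correct spanning-tree certification forces a genuine spanning tree of the subgraph induced by the vertices carrying it. A secondary subtlety is that $u$ might not literally be the root $\rho$ of $S$; but that is fine — we only used that $\rho$ exists and is reachable within $S$ from $u$, and that the suffix-agreement along $S$ forces $\rho$'s list to also be $L$, so the third bullet applied at $\rho$ delivers the desired $v$. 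Finally, note the claim only asserts existence of such a $v$ \emph{in the graph}, which is exactly what the third bullet gives (a graph-neighbor of $\rho$), so no further work is needed.
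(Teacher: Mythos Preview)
Your approach is the same as the paper's: look at the spanning tree associated with depth $d$ in $u$'s certificate, propagate the $(d+1)$-suffix agreement along it via Step~\ref{item:ST-verif}, and apply the root condition to find $v$. One small slip: from ``every vertex of $S$ has $(d+1)$-suffix $L$'' you cannot conclude ``every vertex of $S$ has list exactly $L$'' --- vertices of $S$ may well carry strictly longer lists (in a correct certification they are the descendants of $u$, at depth $\geq d$). This does not break your argument, since whenever the $(d+1)$-suffix of $\rho$'s list is $L$, the $d$-suffix of $\rho$'s list is automatically the $d$-suffix of $L$; but the intermediate sentence should be weakened accordingly.
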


Consider a vertex $u$ like in Claim~\ref{clm:one-step}, at some depth $d$. If all vertices accept, then this vertex is has a spanning tree corresponding to depth $d$ (by Step~\ref{item:ST-for-each-depth}), where all vertices have the same $(d+1)$-suffix, and the root of this tree has a neighbor whose list is $L$, without the first identifier, by Step~\ref{item:ST-verif}. This vertex is the $v$ of the claim. 

The claim implies that the whole tree structure is correct. Indeed, if we take the vertex set of $G$, and add a pointer from every vertex $u$ to its associated vertex $v$ (with the notations of the claim), then the set of pointers must form a forest. In particular, there cannot be cycles, because the size of the list is decremented at each step. Also, if the ancestors are consistent at every node, then they are consistent globally. This finishes the proof of Theorem~\ref{thm:certify-treedepth}.

\section{MSO/FO certification in bounded treedepth graphs via kernelization} 
\label{sec:MSO-certif-bounded-treedepth}

In this section, we prove the following theorem.

\thmMain*

The proof is based on a kernelization result: we show that for every integer $t$ and $k$, for every graph of treedepth $t$, we can associate a graph, called \emph{a kernel}, such that (1) it satisfies the same FO formulas with quantifier depth at most $k$, and (2) it has a size that is independent of $n$ (that is, depends only on $t$ and $k$). The idea is then to locally describe and certify this kernel, and to let the vertices check that the kernel satisfies the formula.

Actually, such a kernel always exists, even without the treedepth assumption.
Indeed, since we have a bounded number of formulas of quantifier depth at most $k$ (up to semantic equivalence), we have a bounded number of equivalent classes of graphs for $\simeq_k$. We can associate to each class the smallest graph of the class, whose size is indeed bounded by a function of only $k$.
However, this definition of $H_k$ is not constructive, which makes it impossible to manipulate for certification. 
We note that a constructive kernelization result already exists for graphs of bounded shrubdepth~\cite{GajarskyH15}, which implies bounded treedepth. 
We however cannot use this result either, because we cannot locally certify the kernel of ~\cite{GajarskyH15}. 
Hence, we need to design our own certifiable kernel.
Incidentally, certifying bounded shrubdepth and the associated model checking problem are interesting open questions.

\subsection{Description of the kernel}
\label{subsec:kernel}


Let $G$ be a graph of treedepth at most $t$, and let $k$ be an integer. Let $\mathcal{T}$ be a $t$-model of $G$. Let $v$ be a vertex of depth $i$ in the decomposition. 
We define the \emph{ancestor vector of $v$} as the $\{0,1\}$-vector of size $i$, where the $j$-th coordinate is 1, if and only if, $v$ is connected in $G$ to its ancestor at depth $j$. 

We can now define the \emph{type of a vertex $v$} as the subtree rooted on $v$ where all the nodes of the subtree are labeled with their ancestor vector. Note that in this construction, the ID of the nodes do not appear, hence several nodes might have the same type while being at completely different places in the graph or the tree.

Let us now define a subgraph of $G$ that we will call the \emph{$k$-reduced graph}.
If a node $u$ has more than $k$ children of the same type, a \emph{valid pruning} operation consists in removing the subtree rooted at one of these children (including the children). 
Note that in doing so, we change the structures of the subtree of $u$ and the subtrees of its ancestors, thus we also update their types. 
A \emph{$k$-reduced graph} $H$ (that is, the kernel for this parameter $k$) of $G$ is a graph obtained from~$G$ by iteratively applying valid pruning operations on a vertex of the largest possible depth in $\mathcal{T}$ while it is possible. 
A vertex $v$ is \emph{pruned} for a valid pruning sequence if it is the root of a subtree that is pruned in the sequence. Note that there are some vertices of $G \setminus H$ that have been deleted, but that are not pruned. 

Let $G$ be a graph, and $H$ be a $k$-reduced graph of $G$.
The \emph{end type} (with respect to $H$)\footnote{One can prove that it actually does not depend on $H$ but we do not need it in our proof.} of a vertex~$v$ of~$G$ is: its type in~$H$ if it has not been deleted, and the last type it has had otherwise (that is, its type in the graph $G'$ which is the current graph when it was deleted).

\subsection{Size of the kernel and number of end types}
Since we apply pruning operations on a vertex of the largest possible depth, if at some point we remove a vertex of depth $i$, then we never remove a subtree rooted on a vertex of depth strictly larger than $i$ afterwards. It implies that when a vertex at depth $i$ is deleted, the types of the nodes at depth at least $i$ are their end type. 
The following lemma, proved in Appendix~\ref{app:same-type}, describes the structure of the end types in the $k$-reduced graph. 

\begin{restatable}{lemma}{lemSameType}
\label{lem:same-type}
Let $G$ be a graph and $H$ be a $k$-reduced graph of $G$. Let $u \notin H$ and $v \in H$, such that $u$ is a child of $v$. Then there exists exactly $k$ children of $v$ in $H$ whose end type is the end type of~$u$.
\end{restatable}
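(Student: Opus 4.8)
The plan is to analyze what happens at the moment vertex $u$ (and the subtree it roots) is pruned. Since we always prune at a vertex of maximal available depth, at the step where the subtree rooted at $u$ is removed, $u$ has some depth $i$, and no subtree rooted strictly below depth $i$ will ever be pruned afterwards; in particular the types of all nodes at depth $\geq i$ are already frozen to their end types at that moment. Let $G'$ be the current graph just before this pruning step, and let $v$ be the parent of $u$ (it is an ancestor that is never pruned, so $v \in H$). A valid pruning at $v$ removing $u$ requires that $v$ have strictly more than $k$ children of the same type as $u$ in $G'$; moreover, since $v$ is at depth $i-1 < i$, the type of $v$ is not yet frozen, and the type of each of $v$'s children of depth $i$ \emph{is} frozen — so "same type as $u$ in $G'$" is the same as "same end type as $u$". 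First I would establish that after this pruning step, and for the remainder of the pruning process, no child of $v$ of type equal to $\mathrm{endtype}(u)$ is ever removed \emph{except} by further prunings at $v$ itself of exactly that type.

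The key step is then a counting/invariant argument. Let $c$ denote the number of children of $v$ whose (frozen) type equals $\mathrm{endtype}(u)$, counted in whatever graph we are currently in; initially (in $G'$, before removing $u$) we have $c > k$, i.e. $c \geq k+1$. I claim the pruning process continues to prune such children of $v$, one at a time, exactly as long as $c > k$, and stops doing so precisely when $c = k$. Indeed: (i) the only operations that change $c$ are prunings at $v$ of a child of type $\mathrm{endtype}(u)$ — prunings elsewhere in the tree cannot affect children of $v$ of this frozen type, because the subtrees rooted at distinct children of $v$ are disjoint, and a pruning strictly below $v$ that altered such a subtree would have to occur inside it, changing that child's frozen type, contradiction; prunings at a strict ancestor of $v$ would remove $v$ itself, contradicting $v \in H$; (ii) as long as $c > k$, a valid pruning at $v$ removing a child of type $\mathrm{endtype}(u)$ is available, and since $v$ has maximal depth among $v$ and all ancestors of the not-yet-frozen part, and the only vertices of depth $> i-1$ still prunable are these very children, the maximal-depth pruning rule forces us (up to the order of equal-type children, which is irrelevant) to exhaust these before moving up — more carefully, the process terminates only when \emph{no} valid pruning exists anywhere, so in particular none at $v$, which for type $\mathrm{endtype}(u)$ means $c \leq k$. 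Combining, the final value is $c = k$: it cannot exceed $k$ (else a valid pruning at $v$ would remain), and it cannot drop below $k$ since each pruning step at $v$ decreases $c$ by exactly one starting from $c \geq k+1$ and stops as soon as $c = k$.

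Since the types of these children are frozen once they reach depth $i$ under the maximal-depth rule, their type in the final graph $H$ equals their end type, so $v$ has exactly $k$ children in $H$ of end type $\mathrm{endtype}(u)$, which is the statement. The main obstacle I anticipate is making precise the claim that prunings happening "elsewhere" in the decomposition do not interfere — this requires carefully tracking that (a) the maximal-depth rule indeed freezes the type of every vertex at depth $\geq i$ from the moment the first depth-$i$ vertex is pruned, (b) the type of a child $w$ of $v$ depends only on the subtree rooted at $w$ together with the ancestor vectors inside it, all of which live at depth $\geq i$ and are hence frozen, and (c) no pruning can occur at $v$ or above that removes a proper part of such a subtree without first being a pruning \emph{inside} it. Each of these is routine given the definitions of type, ancestor vector, and valid pruning, but they must be stated explicitly to make the counting invariant airtight; this bookkeeping is essentially the content of the deferred proof in Appendix~\ref{app:same-type}.
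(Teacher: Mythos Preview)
Your proposal is correct and follows essentially the same approach as the paper's proof: the upper bound $c\le k$ comes from the termination condition (no valid pruning remains in $H$, in particular not at $v$), and the lower bound $c\ge k$ comes from the invariant that each pruning of a child of $v$ with this end type only occurs when more than $k$ such children are present. The paper's proof compresses all of your bookkeeping about frozen types and non-interference of other prunings into a single sentence (``by construction, at least $k$ such siblings lie in $H$ since we delete some only if at least $k$ others remain''), whereas you spell out the reasons explicitly; your care is warranted but the argument is the same.
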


Observe that the end type of a vertex $v$ depends only on the adjacency of $v$ with its ancestors and on the number of children of $v$ of each possible end type. Combining this with Lemma~\ref{lem:same-type}, we prove the following statement.

\begin{restatable}{proposition}{propKReducedSize}
\label{prop:k-reduced-size}
The number of possible end types of a node at depth $d$ in a $k$-reduced graph of treedepth at most $t$ is bounded by $f_d(k,t):=2^d \cdot (k+1)^{f_{d+1}(k,t)}$. It follows that the size of each $k$-reduced graph only depends on $k$ and $t$.
\end{restatable}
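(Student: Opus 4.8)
The plan is to prove the recurrence by downward induction on the depth $d$, starting from the maximum depth $t$ and working up to the root at depth $1$. The base case is $d = t$: a vertex at depth $t$ has no children (the model has depth $t$), so its end type is entirely determined by its ancestor vector, a $\{0,1\}$-vector of length $t$; hence there are at most $2^t = f_t(k,t)$ possible end types, matching the formula since the product over the (empty) set of children-end-types is $1$ and $(k+1)^{f_{t+1}} = (k+1)^0 = 1$ if we set $f_{t+1}(k,t) := 0$.

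For the inductive step, suppose the bound $f_{d+1}(k,t)$ holds for depth $d+1$. Take a vertex $v$ at depth $d$ in a $k$-reduced graph $H$. By the observation in the text, the end type of $v$ is determined by two pieces of data: its ancestor vector (of length $d$, giving $2^d$ choices), and, for each possible end type $\tau$ of a depth-$(d+1)$ vertex, the number of children of $v$ in $H$ whose end type is $\tau$. The key point is that this count is bounded: by Lemma~\ref{lem:same-type}, whenever some child of $v$ was deleted, exactly $k$ children of $v$ surviving in $H$ share its end type, so $v$ can have at most $k$ children of any given end type among the "saturated" types — more carefully, after the reduction process no vertex has more than $k$ children of the same type, so for each of the $f_{d+1}(k,t)$ possible end types there are at most $k$ children realizing it in $H$, i.e. the count lies in $\{0, 1, \dots, k\}$, giving $(k+1)$ choices per type and $(k+1)^{f_{d+1}(k,t)}$ choices overall. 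Multiplying, the number of end types at depth $d$ is at most $2^d \cdot (k+1)^{f_{d+1}(k,t)} = f_d(k,t)$, completing the induction.

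Finally, to deduce the bound on the size of the kernel, I would argue by downward induction on depth that the subtree of $H$ rooted at any depth-$d$ vertex has size bounded by a function $g_d(k,t)$ depending only on $k$ and $t$: a depth-$d$ vertex has at most $k$ children of each of the at most $f_{d+1}(k,t)$ possible end types, hence at most $k \cdot f_{d+1}(k,t)$ children in total, each rooting a subtree of size at most $g_{d+1}(k,t)$; so $g_d(k,t) \le 1 + k \cdot f_{d+1}(k,t) \cdot g_{d+1}(k,t)$, with $g_t(k,t) = 1$. Applying this at the root ($d=1$, or $d=0$ depending on the indexing convention) bounds $|V(H)|$ by a function of $k$ and $t$ alone.

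**Main obstacle.** The delicate point is making precise why the number of children of $v$ of a given end type is bounded by $k$ in $H$. This is exactly what Lemma~\ref{lem:same-type} is for, but one has to be careful about the interplay between a child's \emph{current} type during the reduction (which governs whether a pruning is valid) and its \emph{end type} (which is what we are counting): the pruning rule removes children with equal \emph{current} types, and one must invoke the remark that when a depth-$i$ vertex is deleted the depth-$\ge i$ types have already stabilized to their end types, together with Lemma~\ref{lem:same-type}, to conclude that in $H$ the number of children of $v$ with any fixed end type never exceeds $k$. Once that bookkeeping is pinned down, the counting is routine.
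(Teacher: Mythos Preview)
Your proposal is correct and follows essentially the same approach as the paper: backward induction on the depth, with the base case $d=t$ giving $2^t$ types from the ancestor vector alone, and the inductive step counting end types via the ancestor vector ($2^d$ choices) times a vector in $\{0,\dots,k\}^{f_{d+1}(k,t)}$ recording the number of children of each end type. Your treatment is in fact slightly more thorough than the paper's, since you spell out the recursion $g_d$ for the subtree size (the paper merely asserts the size bound follows) and you flag explicitly the bookkeeping issue of current type versus end type that Lemma~\ref{lem:same-type} resolves.
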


The proof of Proposition~\ref{prop:k-reduced-size} is in Appendix~\ref{app:k-reduced-size}. 
The idea is to have a bottom-up induction. 
For the leaves of the tree, the type only depends on the adjacency of the vertex to its ancestors in the tree, therefore there are only $2^t$ types. Then, for an internal node, as there can be only $k$ children with the same type, the fact that there is a bounded number of children types implies that there is a bounded number of types for this internal vertex.

\subsection{Correctness of the kernel}

\begin{restatable}{proposition}{propKernelCorrectness}
\label{prop:kernel-correctness}
Let $G$ be a graph of treedepth $t$, $\mathcal{T}$ be a $t$-model of $G$, and $G'$ be a $k$-reduced graph of $G$. Then $G\simeq_k G'$ (using the notation of Subsection~\ref{subsec:preliminaries-logics}).
\end{restatable}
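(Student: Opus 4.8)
The plan is to play the $k$-round Ehrenfeucht-Fra\"iss\'e game on $(G,G')$ and exhibit a winning strategy for Duplicator; by Theorem~\ref{thm:EF} this gives $G\simeq_k G'$. The key structural fact to exploit is that $G'$ is obtained from $G$ by pruning subtrees rooted at ``surplus'' children — whenever a vertex $u$ has more than $k$ children of a given type, we delete one such subtree — and by Lemma~\ref{lem:same-type}, after the reduction each remaining vertex $v\in H$ still has \emph{exactly} $k$ children of each end type that was present among its (possibly deleted) children. So at every node of the elimination tree, $G$ and $G'$ differ only in that $G$ may have $>k$ children of some type while $G'$ has exactly $k$; crucially $k$ is enough to survive $k$ rounds of the game.

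First I would set up the invariant Duplicator maintains. After round $i$, with pebbles on $u_1,\dots,u_i$ in $G$ and $v_1,\dots,v_i$ in $G'$, Duplicator ensures: (i) $u_j \mapsto v_j$ is a partial isomorphism (the game's own losing condition, so this is forced to be maintained); (ii) $u_j$ and $v_j$ have the same depth in their respective elimination trees (we can take $\mathcal T$ and its image in $H$); (iii) $u_j$ and $v_j$ have the same ancestor vector, and more strongly the same end type; and (iv) the ancestor relation among the pebbled vertices is respected, i.e.\ $u_j$ is an ancestor of $u_\ell$ iff $v_j$ is an ancestor of $v_\ell$, and in that case they are ``aligned'' (the path of ancestors between two aligned pebbles can be matched type-for-type). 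The point of (iii)–(iv) is that type equality of a pebbled vertex, together with at most $i-1<k$ other pebbles already placed, guarantees there are always enough unused children of any required type to answer with.

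Then I would describe Duplicator's response in round $i+1$. Suppose Spoiler plays a vertex $x$ in $G$ (the case $x\in G'$ is symmetric). Look at where $x$ sits relative to the already-pebbled vertices in the elimination tree: it lies in the subtree of some pebbled vertex $u_j$ (or in a ``fresh'' part of the tree). Walking down the tree-path from $u_j$ towards $x$, at each step we move to a child of a prescribed end type; by the invariant the corresponding vertex $v_j$ in $G'$ has the same end type, hence has a child of each relevant end type, and since fewer than $k$ pebbles have been used so far, there is an \emph{unpebbled} child of the needed type. We follow this parallel descent in $G'$, reaching a vertex $y$ with the same end type as $x$, at the same depth, with the same ancestor vector, and aligned with the existing pebbles; set $v_{i+1}:=y$. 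One checks that (i)–(iv) are preserved: adjacency between $x$ and any earlier pebble is determined by ancestor vectors and the alignment, which we have matched; the new ancestor/descendant relations are matched by construction.

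The main obstacle I anticipate is making the descent argument fully precise when $x$ is \emph{not} in the subtree of any pebbled vertex, or when it lies ``between'' two pebbled vertices on an ancestor chain — here one must argue that the chain of types between the nearest pebbled ancestor and $x$ can be reproduced in $G'$ using only unpebbled vertices, which is exactly where the bound ``exactly $k$ children of each end type, and only $i-1\le k-1$ pebbles spent'' is used, and where Lemma~\ref{lem:same-type} (ensuring $G'$ did not lose any \emph{type}, only surplus copies) is essential. A secondary subtlety is that end types are defined with respect to the full reduction history, so I would first note (or cite the footnote after the definition of end type) that for a vertex surviving in $H$ its end type equals its type in $H$, and that type equality is genuinely preserved under the parallel descent because type is determined bottom-up by ancestor vectors and the multiset of children's end types — matching $k$-capped multisets suffices. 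Once the strategy and invariant are in place, the verification that Duplicator never loses within $k$ rounds is routine.
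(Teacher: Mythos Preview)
Your proposal is correct and follows essentially the same approach as the paper: play the $k$-round Ehrenfeucht--Fra\"iss\'e game, maintain a tree-structure invariant, and answer Spoiler by descending along a path of matching end types, using $i<k$ to guarantee an unused child of the required type at each step. The paper packages your conditions (ii)--(iv) into a single invariant---that the subtrees $\mathcal{T}_{\{x_1,\dots,x_i\}}$ and $\mathcal{T}'_{\{y_1,\dots,y_i\}}$ (closure of the pebbled vertices under ancestors) are end-type-preserving isomorphic---which cleanly absorbs the ``obstacle'' you flag, since the descent then starts from the nearest vertex of this ancestor-closed tree rather than from a pebbled vertex.
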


\begin{proof}
Observe that $G'$ is a subgraph of $G$, and denote by $\mathcal{T}'$ the restriction of $\mathcal{T}$ to the vertices of $G'$.
If $S\subset V(G)$, we denote by $\mathcal{T}_S$ the subtree of $\mathcal{T}$ induced by the vertices of $S$ and their ancestors. In particular, $\mathcal{T}'=\mathcal{T}_{V(G')}$. Moreover, two rooted trees are said to be \emph{equivalent} if there is an end type-preserving isomorphism between them.

By Theorem~\ref{thm:EF}, proving Proposition~\ref{prop:kernel-correctness} is equivalent to finding a winning strategy for Duplicator in the Ehrenfeucht-Fraissé game on $G,G'$ in $k$ rounds. To this end, we prove that she can play by preserving the following invariant.

\begin{claim}
Let $x_1,\ldots,x_{i}$ (resp. $y_1,\ldots,y_{i}$) be the positions played in $G$ (resp. $G'$) at the end of the $i$-th turn. Then the rooted trees $\mathcal{T}_{\{x_1,\ldots,x_i\}}$ and $\mathcal{T}'_{\{y_1,\ldots,y_i\}}$ are equivalent. 
\end{claim}

The invariant holds for $i=0$, since the two trees are empty. Assume now that it is true for some $i<k$. We consider the case where Spoiler plays on vertex $x_{i+1}$ in $G$, the other case being similar (and easier). 
Consider the shortest path in $\mathcal{T}_{\{x_1,\ldots,x_{i+1}\}}$ between $x_{i+1}$ and a vertex of $\mathcal{T}_{\{x_1,\ldots,x_{i}\}}$. We call this path $u_1,...,u_p$, with $u_1$ a node of $\mathcal{T}_{\{x_1,\ldots,x_{i}\}}$ and $u_p=x_{i+1}$. Note that, necessarily, for all $j\in [1,i]$, $u_j$ is the parent of $u_{j+1}$ in the tree. 
For $j=1,\ldots,p$, we will find a vertex $u'_j$ in $G'$ such that  $\mathcal{T}_{\{x_1,\ldots,x_{i},u_j\}}$ is equivalent to $\mathcal{T}'_{\{y_1,\ldots,y_{i},u'_j\}}$ (this implies that $u_j$ and $u'_j$ have the same end type). 

For $j=1$, first observe that $\mathcal{T}_{\{x_1,\ldots,x_{i},u_1\}}=\mathcal{T}_{\{x_1,\ldots,x_{i}\}}$, because $u_1$ belongs to $\mathcal{T}_{\{x_1,\ldots,x_{i}\}}$. Then, since $\mathcal{T}_{\{x_1,\ldots,x_{i}\}}$ is equivalent to $\mathcal{T}'_{\{y_1,\ldots,y_{i}\}}$, we can define $u'_1$ as the copy of $u_1$ in $\mathcal{T}'_{\{y_1,\ldots,y_{i}\}}$.

Assume now that $u'_1,\ldots,u'_j$ are constructed. 
Let $T$ be the end type of $u_{j+1}$ in $G$, and $r$ be the number of children of $u_j$ having $T$ as their end type (including $u_{j+1}$). 
By construction of $G'$ and $u'_j$, we know that $u'_j$ has $\min(r,k)$ children with type $T$ in $\mathcal{T'}$. 
Observe that at most $\min(r-1,i)$ children of $u_j$ of type $T$ in $\mathcal{T}$ can lie in $\mathcal{T}_{\{x_1,\ldots,x_i\}}$. 
Indeed, since $u_{j+1}$ does not belong to $\mathcal{T}_{\{x_1,\ldots,x_i\}}$, we get the $r-1$ term, and since $\mathcal{T}_{\{x_1,\ldots,x_i\}}$ is made by $i$ vertices and their ancestors, not more than $i$ vertices of $\mathcal{T}_{\{x_1,\ldots,x_i\}}$ can have the same parent.
Also, using $i<k$, we get $\min(r-1,i)\leqslant \min(r,k)-1$. Therefore, there exists a child $u'_{j+1}$ of $u'_j$ of type $T$ in $\mathcal{T}'\setminus\mathcal{T}'_{\{y_1,\ldots,y_i\}}$. 

By taking $y_{i+1}=u'_p$, we finally obtain that  $\mathcal{T}_{\{x_1,\ldots,x_{i},u_p\}}=\mathcal{T}_{\{x_1,\ldots,x_{i+1}\}}$ is equivalent to $\mathcal{T}'_{\{y_1,\ldots,y_{i},u'_p\}}=\mathcal{T}'_{\{y_1,\ldots,y_{i+1}\}}$, as required. 
\end{proof}

\subsection{Certification of the kernel}
\label{subsec:kernel-certif}

\begin{proposition}
Let $k$ be an integer. Let $G$ be a graph of treedepth at most $t$ with a coherent model $\mathcal{T}$. Let $H$ be a $k$-reduction of $G$ obtained via a valid pruning from $\mathcal{T}$.
Then we can certify with certificates of size $O(t\log n+g(k,t))$ that $H$ is a $k$-reduction of $G$ from $\mathcal{T}$.
\end{proposition}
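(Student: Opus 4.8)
The plan is to build on the treedepth certification of Theorem~\ref{thm:certify-treedepth} and enrich it so that each vertex can locally reconstruct its own end type, verify that the pruning was valid, and verify that $H$ is exactly what remains. Concretely, I would have the prover do the following. First, fix a coherent $t$-model $\mathcal{T}$ of $G$ and run the certification of Theorem~\ref{thm:certify-treedepth} on it, so every vertex of $G$ knows the identifiers of its ancestors and a certified spanning tree of each $G_v$ pointing to an exit vertex; this costs $O(t\log n)$ bits. On top of that, the prover gives to each vertex $v$ (both those in $H$ and those deleted) three things: (i) its ancestor vector (of length $\le t$, so $O(t)$ bits), which is locally checkable against the spanning-tree/ancestor structure since adjacency to an ancestor at depth $j$ can be detected by a neighbor in $G$ whose list is the $(i-j+1)$-suffix; (ii) its \emph{end type} $T_v$ (an object of size $g(k,t)$ by Proposition~\ref{prop:k-reduced-size}); and (iii) for every deleted vertex $v$, a certificate of the ``reason'' it was deleted — namely a pointer (via the certified spanning tree of $G_v$, extended to the current graph at deletion time) to the vertex $u$ whose subtree contained $v$ and was pruned, together with a tag marking $v$ as \emph{pruned} or merely \emph{collateral} (deleted because an ancestor was pruned). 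The total certificate size is $O(t\log n + g(k,t))$ as required, for a suitable $g$.

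Next I would specify the local verification. Each vertex checks: the treedepth certification of Theorem~\ref{thm:certify-treedepth}; that its ancestor vector is consistent with its neighbors' ancestor lists; and the \emph{end-type consistency} equation — by the observation before Proposition~\ref{prop:k-reduced-size}, $T_v$ is a function of $v$'s ancestor vector and of the multiset of end types of the children of $v$ \emph{that survive in $H$}, and for the surviving children of $v$ we require that this multiset contains exactly $\min(r_T, k)$ children of each end type $T$, where $r_T$ is the number of children (in $G$) of end type $T$ — this last part uses that every deleted-non-collateral child of $v$ carries an end type equal to one of the $k$ retained ones and points back to $v$, exactly in the spirit of Lemma~\ref{lem:same-type}. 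Here the key subtlety is that a vertex cannot see its grandchildren, so I would have each child of $v$ also carry (certified, via the $G_v$ spanning tree) a count of how many of its siblings of each end type have been deleted and retained; the ``at most $k$ retained of each type'' and ``a deletion happened iff there were $>k$ of that type'' conditions then become local comparisons at $v$. The condition that pruning was applied \emph{deepest-first} is encoded by requiring, at every deleted vertex $v$ of depth $i$, that no deleted vertex of depth $>i$ is an ancestor-descendant-incomparable sibling situation — more simply, by having each retained vertex of depth $i$ certify (via its subtree spanning tree) that its entire subtree in $H$ has no two children of the same end type exceeding $k$, which is exactly the stopping condition of the pruning process, so that ``$H$ is $k$-reduced'' is witnessed locally.

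For soundness one argues: if all vertices accept, the ancestor lists define a genuine coherent $t$-model $\mathcal{T}$ of $G$ (Theorem~\ref{thm:certify-treedepth}), the end-type labels are forced bottom-up to be the true end types in the graph $H$ described by the surviving vertices (induction on depth, using that the end type of a leaf is determined by its ancestor vector, and the inductive equation at internal nodes), the deletion pointers certify that each deleted vertex lies in a subtree that was pruned because of a $>k$-multiplicity of end types, and the ``no surviving $>k$-multiplicity'' check certifies that the process ran to completion; together these exhibit a valid pruning sequence from $G$ (with model $\mathcal{T}$) to $H$. The main obstacle I expect is precisely the non-locality of the pruning-order and multiplicity conditions: a vertex sees only its neighbors, so verifying ``exactly $\min(r_T,k)$ retained children of end type $T$'' and ``deepest-first'' requires carefully pushing certified counts down the spanning trees of the $G_v$'s and checking them at the parent; making these counts consistent and unforgeable — so that the prover cannot, say, hide a surviving $>k$-multiplicity by mislabelling end types — is where the real work lies, and it is handled by the bottom-up forcing of end-type labels combined with the already-certified spanning-tree structure.
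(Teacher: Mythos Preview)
Your overall architecture matches the paper's: layer end-type labels and pruned markers on top of the treedepth certification of Theorem~\ref{thm:certify-treedepth}. But you misdiagnose the locality obstacle, and this leads to a workaround that does not quite work as written.

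You say that ``the key subtlety is that a vertex cannot see its grandchildren.'' The actual obstruction is sharper: a vertex $v$ need not be adjacent in $G$ to its \emph{children} in $\mathcal{T}$ at all. Coherence only guarantees that, for each child $w$ of $v$, some vertex $x$ in the subtree rooted at $w$ (an exit vertex) is a $G$-neighbor of $v$. If each vertex carries only its \emph{own} end type (your item (ii)), then $v$ learns end types of exit vertices, not of the children $w$. Your proposed fix---pushing sibling-counts ``via the $G_v$ spanning tree'' to the children so that the checks ``become local comparisons at $v$''---does not land: the spanning tree of $G_v$ from Theorem~\ref{thm:certify-treedepth} is rooted at an exit vertex adjacent to $v$'s \emph{parent}, not to $v$, and even if the counts were stored at the children, $v$ still could not read them.

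The paper avoids all of this with one trick you are missing: have each vertex store, for \emph{every ancestor} (including itself), that ancestor's end type and a pruned/not-pruned bit. There are at most $t$ ancestors, so this stays within $O(t\log n + g(k,t))$. Now, for each child $w$ of $v$, the exit vertex of $G_w$ is a $G$-neighbor of $v$ whose certificate contains $w$'s end type and pruned-bit as part of its ancestor list; so $v$ reads off, directly from its $G$-neighbors, the full multiset of (end type, pruned-bit) pairs of its children in $\mathcal{T}$. The verification at $v$ is then just: (a) $v$'s own end type is the one determined by its ancestor vector and this multiset, and (b) whenever a child is marked pruned, exactly $k$ non-pruned children share its end type. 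No sibling-counts propagated along spanning trees, no deletion pointers or ``collateral'' tags, and no explicit ``deepest-first'' certification are needed; Lemma~\ref{lem:same-type} is precisely what makes these local checks characterise a valid $k$-reduction.
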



\begin{proof}

Let us describe a local certification. On a \emph{yes}-instance, the prover gives to every vertex $v$ the following certificate:
\begin{itemize}
    \item The $O(t\log n)$-bit certificate of $v$ for the $t$-model $\mathcal{T}$ of $G$ given in Theorem~\ref{thm:certify-treedepth}.
    \item A list of $d$ booleans that says, for any ancestor $x$ of $v$, including $v$, if $x$ is pruned, i.e. the subtree rooted on $x$ has been pruned at some step.
    \item For every ancestor $w$ of $v$ including $v$, the end type of $w$, coded on $\log (f_i(k,t))$ bits, where $i$ is the depth of $w$ (by Proposition~\ref{prop:k-reduced-size}).
\end{itemize}
Every node $v$ at depth $d$ thus receives a certificate of size at most $O(t\log n+d+\sum_{i=1}^d\log (f_i(k,t)))$. Let us now describe the local verification algorithm, as well as why it is sufficient for checkability.

Recall that the end type of a vertex only depends on its adjacency with its list of ancestors as well as the end types of its children. 
So first, the node $v$ can check that its adjacency with its list of ancestors is compatible with its end type. 
Then, it checks that, if one of its children $w$ has been pruned, then it has exactly $k$ children with the type of $w$ that have not been pruned (there is no type $T$ such that more than $k$ children of type $T$ are left after pruning).
Note that $v$ has access to all this information since, for every child $w$, there is a vertex $x$ in the subtree rooted on $w$ adjacent to $v$, because $\mathcal{T}$ is coherent.
Finally, since the end type of $v$ is determined by the end types of its children, $v$ simply has to check that its end type is consistent with the list of end types of its children.

As in the proof of Theorem~\ref{thm:certify-treedepth}, for any child $w$ of $v$, if the prover has cheated and the type of $w$ has been modified between $w$ and the exit vertex of $w$, then one node of the path from $w$ to the exit vertex should discover it, which ensures that the certification is correct. 
\end{proof}

\section{Lower bounds via non-deterministic communication complexity}
\label{sec:lower-bounds}

In this section, we will prove our two lower bounds, namely Theorem~\ref{thm:isomorphism-trees} and~\ref{thm:lower-bound-treedepth}. 
To do so, we will first define a framework for reduction from two-party non-deterministic communication complexity, and then use it for the two proofs. 

Such reductions from communication complexity have been used before in local certification in~\cite{GoosS16,Censor-HillelPP20, FeuilloleyFHPP21}. 
But in all these works, the reduction was used to establish lower bounds in the polynomial regime (\emph{e.g.} $\Omega(n)$ or~$\Omega(n^2)$), whereas our second lower bound (Theorem~\ref{thm:lower-bound-treedepth}) is for the logarithmic regime. 
For both our lower bound and the lower bounds of~\cite{GoosS16,Censor-HillelPP20, FeuilloleyFHPP21}, the proof is essentially about proving that a set $S$ of vertices have to collectively know the exact structure of a far-away subgraph. The difference is that in previous works, either the subgraph was dense or the set $S$ was small, whereas in our second bound, the subgraph is sparse and the set $S$ is large, which leads to lower bounds for a lower regime.
One can naturally wonder if the other $\Omega(\log n)$ lower bounds of the area (in particular for acyclicity) can be obtained by communication complexity instead of the usual cut-and-plug techniques (that is, the combination of indistinguishability and counting arguments).

\subsection{Framework for reductions from communication complexity}
\label{subsec:framework-comcom}

\paragraph{Non-deterministic communication complexity.}

Let us describe the non-deterministic communication complexity setting. 
(This is not the same exact setting that is used in other similar reductions, we discuss the differences at the end of this subsection.)
There are two players, Alice and Bob, and a prover.
Alice has a string $s_A$ and Bob a string $s_B$. 
Both strings have length~$\ell$.
The prover chooses a string $s_P$ of length $m$, called \emph{certificate}, that is given to Alice and Bob. 
Alice decides to accept or to reject by only looking at $s_A$ and $s_P$. Let $f_A$ the function that corresponds to this process.
Same for Bob with $s_B$ and $f_B$, instead of $s_A$ and $f_A$. We say that a protocol, described by $f_A$ and $f_B$ decides \textsc{EQUALITY}, if: 
\begin{itemize}
    \item For every instance where $s_A=s_B$, there exists $s_P$ such that  $f_A(s_A,s_P)=f_B(s_B,s_P)=1$.
    \item For every instance where $s_A \neq s_B$ for all strings $s_P$, $f_A(s_A,s_P)=0$ or $f_B(s_B,s_P)=0$.
\end{itemize}

The following theorem ensures that there is asymptotically no better protocol than to have the full string written in the certificate. 

\begin{theorem}[\cite{BabaiFS86}]
\label{thm:equality}
Any non-deterministic communication protocol for \textsc{EQUALITY} for strings of length $\ell$ requires a certificate of size $\Omega(\ell)$.
\end{theorem}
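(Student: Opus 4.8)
The plan is to prove this by the classical fooling-set / rectangle-covering lower bound for nondeterministic communication complexity, adapted to the shared-certificate formulation used in this subsection. The key observation is that fixing the certificate $s_P$ \emph{decouples} Alice and Bob: the set of input pairs accepted under a fixed certificate is a combinatorial rectangle, and the protocol's specification forces these rectangles to cover the diagonal of $\{0,1\}^\ell \times \{0,1\}^\ell$ while avoiding everything off the diagonal.

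Concretely, the first step is to associate to each string $s_P \in \{0,1\}^m$ the two sets $A_{s_P} := \{ s \in \{0,1\}^\ell : f_A(s,s_P) = 1 \}$ and $B_{s_P} := \{ s \in \{0,1\}^\ell : f_B(s,s_P) = 1 \}$, and to set $R_{s_P} := A_{s_P} \times B_{s_P}$. The two defining conditions of a protocol for \textsc{EQUALITY} then read: (i) for every $s$, the pair $(s,s)$ lies in $R_{s_P}$ for at least one $s_P$; and (ii) for every $s \neq s'$, the pair $(s,s')$ lies in no $R_{s_P}$. The second step is the crucial claim that each $R_{s_P}$ contains at most one diagonal pair: if $(s,s)$ and $(s',s')$ both lay in $A_{s_P} \times B_{s_P}$ with $s \neq s'$, then $s \in A_{s_P}$ together with $s' \in B_{s_P}$ would force $(s,s') \in R_{s_P}$, contradicting (ii). The third step concludes: by (i) we may pick, for each $s$, a certificate $\sigma(s)$ with $(s,s) \in R_{\sigma(s)}$; by the claim $\sigma$ is injective from $\{0,1\}^\ell$ into $\{0,1\}^m$; hence $2^m \geq 2^\ell$, i.e.\ $m \geq \ell$, which is the asserted $\Omega(\ell)$ bound (indeed with the optimal constant).

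The argument is short and I do not anticipate a genuine obstacle; the one place requiring care is the rectangle-closure step, which is exactly where hypothesis (ii) — soundness on non-equal inputs — gets used, and one must make sure it is invoked on the cross pair $(s,s')$, which is legitimate precisely because $A_{s_P}$ and $B_{s_P}$ are defined independently of each other. A secondary remark, worth at most one sentence, is that this shared-certificate model coincides up to an additive constant with the standard nondeterministic communication complexity model, so that citing the known $\Omega(\ell)$ bound of~\cite{BabaiFS86} is justified; but since the self-contained rectangle argument above already yields the bound directly, this reduction is not strictly necessary.
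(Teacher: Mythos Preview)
Your argument is correct: the fooling-set / rectangle-covering proof you outline is the standard one, and the injectivity of $\sigma$ indeed gives the tight bound $m \geq \ell$. There is nothing to compare against, however: the paper does not prove this theorem at all but simply quotes it as a known result from~\cite{BabaiFS86} and moves on. Your self-contained proof is therefore strictly more than what the paper supplies, and is exactly the elementary argument one would expect for this model.
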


\paragraph{Framework for reductions.}
Let $\ell$ be an integer. 
For any pair of strings $(s_A,s_B)$ of length $\ell$, we define a graph $G(s_A,s_B)$.

The set of vertices of $G(s_A,s_B)$ is partitioned into fours sets $V=V_A\cup V_B \cup V_\alpha \cup V_\beta$. In our reductions, the edge set of $G(s_A,s_B)$ will be composed of two parts. One will be independent of $s_A$ and $s_B$ (and will only depends on which graph class we want to obtain a lower bound and $\ell$) and a part that will depend on $s_A$ and $s_B$. 
The set of edges independent of $s_A,s_B$, denoted by $E_P$, is such that every edge in $E_P$ is in one of the following sets: $V_A \times V_{\alpha}$,  $V_\alpha\times V_\alpha$, $V_\alpha\times V_\beta$, $V_\beta\times V_\beta$ and $V_\beta\times V_B$ (see Figure~\ref{fig:comcom-1} for an illustration). 
Let $k=|V_A|=|V_B|$  and $r=|V_\alpha \cup V_\beta|$.

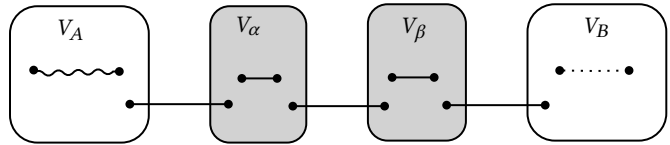
\begin{figure}[!h]
    \centering
    \tikzset{every picture/.style={line width=0.75pt}} 

\begin{tikzpicture}[x=0.75pt,y=0.75pt,yscale=-1,xscale=1]

\draw   (100,124.77) .. controls (100,117.16) and (106.16,111) .. (113.77,111) -- (156.23,111) .. controls (163.84,111) and (170,117.16) .. (170,124.77) -- (170,166.07) .. controls (170,173.67) and (163.84,179.83) .. (156.23,179.83) -- (113.77,179.83) .. controls (106.16,179.83) and (100,173.67) .. (100,166.07) -- cycle ;
\draw  [fill={rgb, 255:red, 210; green, 210; blue, 210 }  ,fill opacity=1 ] (201,120.69) .. controls (201,115.34) and (205.34,111) .. (210.69,111) -- (239.75,111) .. controls (245.1,111) and (249.43,115.34) .. (249.43,120.69) -- (249.43,170.15) .. controls (249.43,175.5) and (245.1,179.83) .. (239.75,179.83) -- (210.69,179.83) .. controls (205.34,179.83) and (201,175.5) .. (201,170.15) -- cycle ;
\draw  [fill={rgb, 255:red, 210; green, 210; blue, 210 }  ,fill opacity=1 ] (280.5,119.89) .. controls (280.5,114.43) and (284.93,110) .. (290.39,110) -- (320.05,110) .. controls (325.51,110) and (329.93,114.43) .. (329.93,119.89) -- (329.93,168.95) .. controls (329.93,174.41) and (325.51,178.83) .. (320.05,178.83) -- (290.39,178.83) .. controls (284.93,178.83) and (280.5,174.41) .. (280.5,168.95) -- cycle ;
\draw   (361,123.77) .. controls (361,116.16) and (367.16,110) .. (374.77,110) -- (417.23,110) .. controls (424.84,110) and (431,116.16) .. (431,123.77) -- (431,165.07) .. controls (431,172.67) and (424.84,178.83) .. (417.23,178.83) -- (374.77,178.83) .. controls (367.16,178.83) and (361,172.67) .. (361,165.07) -- cycle ;
\draw    (160.43,160.43) -- (208.43,160.43) ;
\draw    (240.93,161.43) -- (288.93,161.43) ;
\draw    (321.93,160.93) -- (369.93,160.93) ;
\draw  [fill={rgb, 255:red, 0; green, 0; blue, 0 }  ,fill opacity=1 ] (158.72,160.43) .. controls (158.72,159.49) and (159.49,158.72) .. (160.43,158.72) .. controls (161.38,158.72) and (162.14,159.49) .. (162.14,160.43) .. controls (162.14,161.38) and (161.38,162.14) .. (160.43,162.14) .. controls (159.49,162.14) and (158.72,161.38) .. (158.72,160.43) -- cycle ;
\draw  [fill={rgb, 255:red, 0; green, 0; blue, 0 }  ,fill opacity=1 ] (208.43,160.43) .. controls (208.43,159.49) and (209.2,158.72) .. (210.14,158.72) .. controls (211.09,158.72) and (211.85,159.49) .. (211.85,160.43) .. controls (211.85,161.38) and (211.09,162.14) .. (210.14,162.14) .. controls (209.2,162.14) and (208.43,161.38) .. (208.43,160.43) -- cycle ;
\draw  [fill={rgb, 255:red, 0; green, 0; blue, 0 }  ,fill opacity=1 ] (240.93,161.43) .. controls (240.93,160.49) and (241.7,159.72) .. (242.64,159.72) .. controls (243.59,159.72) and (244.35,160.49) .. (244.35,161.43) .. controls (244.35,162.38) and (243.59,163.14) .. (242.64,163.14) .. controls (241.7,163.14) and (240.93,162.38) .. (240.93,161.43) -- cycle ;
\draw  [fill={rgb, 255:red, 0; green, 0; blue, 0 }  ,fill opacity=1 ] (287.22,161.43) .. controls (287.22,160.49) and (287.99,159.72) .. (288.93,159.72) .. controls (289.88,159.72) and (290.64,160.49) .. (290.64,161.43) .. controls (290.64,162.38) and (289.88,163.14) .. (288.93,163.14) .. controls (287.99,163.14) and (287.22,162.38) .. (287.22,161.43) -- cycle ;
\draw  [fill={rgb, 255:red, 0; green, 0; blue, 0 }  ,fill opacity=1 ] (318.52,160.93) .. controls (318.52,159.99) and (319.28,159.22) .. (320.22,159.22) .. controls (321.17,159.22) and (321.93,159.99) .. (321.93,160.93) .. controls (321.93,161.88) and (321.17,162.64) .. (320.22,162.64) .. controls (319.28,162.64) and (318.52,161.88) .. (318.52,160.93) -- cycle ;
\draw  [fill={rgb, 255:red, 0; green, 0; blue, 0 }  ,fill opacity=1 ] (368.22,160.93) .. controls (368.22,159.99) and (368.99,159.22) .. (369.93,159.22) .. controls (370.88,159.22) and (371.64,159.99) .. (371.64,160.93) .. controls (371.64,161.88) and (370.88,162.64) .. (369.93,162.64) .. controls (368.99,162.64) and (368.22,161.88) .. (368.22,160.93) -- cycle ;
\draw    (216.93,147.43) -- (236.43,147.43) ;
\draw  [fill={rgb, 255:red, 0; green, 0; blue, 0 }  ,fill opacity=1 ] (215.22,147.43) .. controls (215.22,146.49) and (215.99,145.72) .. (216.93,145.72) .. controls (217.88,145.72) and (218.64,146.49) .. (218.64,147.43) .. controls (218.64,148.38) and (217.88,149.14) .. (216.93,149.14) .. controls (215.99,149.14) and (215.22,148.38) .. (215.22,147.43) -- cycle ;
\draw  [fill={rgb, 255:red, 0; green, 0; blue, 0 }  ,fill opacity=1 ] (233.02,147.43) .. controls (233.02,146.49) and (233.78,145.72) .. (234.72,145.72) .. controls (235.67,145.72) and (236.43,146.49) .. (236.43,147.43) .. controls (236.43,148.38) and (235.67,149.14) .. (234.72,149.14) .. controls (233.78,149.14) and (233.02,148.38) .. (233.02,147.43) -- cycle ;
\draw    (294.43,146.93) -- (313.93,146.93) ;
\draw  [fill={rgb, 255:red, 0; green, 0; blue, 0 }  ,fill opacity=1 ] (291.02,146.93) .. controls (291.02,145.99) and (291.78,145.22) .. (292.72,145.22) .. controls (293.67,145.22) and (294.43,145.99) .. (294.43,146.93) .. controls (294.43,147.88) and (293.67,148.64) .. (292.72,148.64) .. controls (291.78,148.64) and (291.02,147.88) .. (291.02,146.93) -- cycle ;
\draw  [fill={rgb, 255:red, 0; green, 0; blue, 0 }  ,fill opacity=1 ] (312.22,146.93) .. controls (312.22,145.99) and (312.99,145.22) .. (313.93,145.22) .. controls (314.88,145.22) and (315.64,145.99) .. (315.64,146.93) .. controls (315.64,147.88) and (314.88,148.64) .. (313.93,148.64) .. controls (312.99,148.64) and (312.22,147.88) .. (312.22,146.93) -- cycle ;
\draw    (111.93,144.93) .. controls (113.56,143.22) and (115.22,143.18) .. (116.93,144.81) .. controls (118.64,146.44) and (120.3,146.4) .. (121.93,144.69) .. controls (123.56,142.98) and (125.22,142.94) .. (126.93,144.57) .. controls (128.64,146.2) and (130.3,146.16) .. (131.93,144.45) .. controls (133.56,142.74) and (135.22,142.7) .. (136.93,144.33) .. controls (138.64,145.96) and (140.3,145.92) .. (141.92,144.21) .. controls (143.55,142.5) and (145.21,142.46) .. (146.92,144.09) .. controls (148.63,145.72) and (150.29,145.68) .. (151.92,143.97) -- (153.43,143.93) -- (153.43,143.93) ;
\draw  [dash pattern={on 0.84pt off 2.51pt}]  (378.93,143.43) -- (394.42,143.69) -- (413.93,143.43) ;
\draw  [fill={rgb, 255:red, 0; green, 0; blue, 0 }  ,fill opacity=1 ] (110.22,143.22) .. controls (110.22,142.28) and (110.99,141.52) .. (111.93,141.52) .. controls (112.88,141.52) and (113.64,142.28) .. (113.64,143.22) .. controls (113.64,144.17) and (112.88,144.93) .. (111.93,144.93) .. controls (110.99,144.93) and (110.22,144.17) .. (110.22,143.22) -- cycle ;
\draw  [fill={rgb, 255:red, 0; green, 0; blue, 0 }  ,fill opacity=1 ] (153.43,143.93) .. controls (153.43,142.99) and (154.2,142.22) .. (155.14,142.22) .. controls (156.09,142.22) and (156.85,142.99) .. (156.85,143.93) .. controls (156.85,144.88) and (156.09,145.64) .. (155.14,145.64) .. controls (154.2,145.64) and (153.43,144.88) .. (153.43,143.93) -- cycle ;
\draw  [fill={rgb, 255:red, 0; green, 0; blue, 0 }  ,fill opacity=1 ] (375.52,143.43) .. controls (375.52,142.49) and (376.28,141.72) .. (377.22,141.72) .. controls (378.17,141.72) and (378.93,142.49) .. (378.93,143.43) .. controls (378.93,144.38) and (378.17,145.14) .. (377.22,145.14) .. controls (376.28,145.14) and (375.52,144.38) .. (375.52,143.43) -- cycle ;
\draw  [fill={rgb, 255:red, 0; green, 0; blue, 0 }  ,fill opacity=1 ] (410.52,143.43) .. controls (410.52,142.49) and (411.28,141.72) .. (412.22,141.72) .. controls (413.17,141.72) and (413.93,142.49) .. (413.93,143.43) .. controls (413.93,144.38) and (413.17,145.14) .. (412.22,145.14) .. controls (411.28,145.14) and (410.52,144.38) .. (410.52,143.43) -- cycle ;

\draw (139.52,171.45) node [anchor=north west][inner sep=0.75pt]   [align=left] {};
\draw (122.52,116.35) node [anchor=north west][inner sep=0.75pt]    {$V_{A}$};
\draw (212.69,114.4) node [anchor=north west][inner sep=0.75pt]    {$V_{\alpha }$};
\draw (296.19,114.9) node [anchor=north west][inner sep=0.75pt]    {$V_{\beta }$};
\draw (388.19,114.9) node [anchor=north west][inner sep=0.75pt]    {$V_{B}$};

\end{tikzpicture}
    \caption{Illustration of the construction of $G(s_A,s_B)$. The straight edges are the five possible types for edges of $E_P$. The curvy edge corresponds to an edge of Alice, and the dotted edge to an edge of Bob.}
    \label{fig:comcom-1}
\end{figure}

Let $t_A$ be an injection from the set of strings of length $\ell$ to the set of subgraphs of $V_A$. 
Let $t_B$ be the analogue for $V_B$.
The graph $G(s_A,s_B)$ is the graph with vertex set $V$, and edge set $E = t_A(s_A) \cup t_B(s_B) \cup E_P$. 
Note that, by construction, the vertices of $V_A \cup V_\alpha$ are not adjacent to the ones of $V_B$, and the ones of $V_B \cup V_\beta$ are not adjacent to the ones of $V_A$.

This graph is equipped with an identifier assignment, such that the vertices of $V_\alpha \cup V_\beta$ get the identifiers from 1 to $r$ (in an arbitrary order).

\begin{restatable}{proposition}{propCCReduction}
\label{prop:CC-reduction}
Let $\mathcal{P}$ be a graph property that is satisfied by $G(s_A,s_B)$ if and only if $s_A=s_B$. 
Then a local certification for~$\mathcal{P}$ requires certificates of size $\Omega(\ell / r)$.
\end{restatable}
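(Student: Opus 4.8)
The plan is to reduce \textsc{EQUALITY} on strings of length $\ell$ to the local certification problem for $\mathcal{P}$, so that any certification of size $m$ yields a non-deterministic protocol with certificate of size $O(r\cdot m)$, and then invoke Theorem~\ref{thm:equality}. First I would set up the correspondence: given an instance $(s_A,s_B)$ of \textsc{EQUALITY}, Alice can construct the induced subgraph $G[V_A\cup V_\alpha\cup V_\beta]$ together with $E_P$ (which is instance-independent) and the edges $t_A(s_A)$ she knows, and symmetrically Bob can construct $G[V_B\cup V_\alpha\cup V_\beta]$ with $E_P$ and $t_B(s_B)$. Crucially, by construction $V_A$ is non-adjacent to $V_B$ and to $V_\beta$, and $V_B$ is non-adjacent to $V_\alpha$, so the only vertices whose closed neighborhood is \emph{shared} between what Alice sees and what Bob sees are those of $V_\alpha\cup V_\beta$; every vertex of $V_A$ has all its neighbors inside $V_A\cup V_\alpha$ (visible to Alice), and every vertex of $V_B$ inside $V_B\cup V_\beta$ (visible to Bob).

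Next I would describe the protocol. Suppose $\mathcal{P}$ has a local certification with certificates of size $m$. On a \textsc{yes}-instance ($s_A=s_B$), the graph $G(s_A,s_B)$ satisfies $\mathcal{P}$, so there is a certificate assignment making every vertex accept; the communication-complexity prover writes down, as the string $s_P$, the certificates of exactly the $r$ vertices of $V_\alpha\cup V_\beta$ (which have fixed identifiers $1,\dots,r$), for a total of $O(r\cdot m)$ bits. Alice, knowing $s_A$, the fixed part of the graph, and her own vertices' neighborhoods, can reconstruct a valid certificate assignment for all of $V_A$ (she is free to pick these herself, since on a \textsc{yes}-instance one exists and she can search for one), combine it with the $V_\alpha\cup V_\beta$ certificates from $s_P$, and run the local verifier at every vertex of $V_A\cup V_\alpha\cup V_\beta$; she accepts iff all of them accept. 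Bob does the symmetric thing on $V_B\cup V_\alpha\cup V_\beta$. (A clean way to make Alice's and Bob's searches consistent on $V_\alpha\cup V_\beta$ is to have the prover also commit, inside $s_P$, to those certificates, which it already does; Alice and Bob each only need to produce accepting certificates on their \emph{private} side, which exist because the global assignment restricted there is accepting.)

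For correctness: if $s_A=s_B$ then the global accepting assignment restricted to Alice's side and to Bob's side witnesses $f_A(s_A,s_P)=f_B(s_B,s_P)=1$. Conversely, if $s_A\neq s_B$ then $G(s_A,s_B)$ is a \textsc{no}-instance, so for every certificate assignment some vertex rejects; in particular, for the assignment obtained by gluing whatever $V_\alpha\cup V_\beta$-part is encoded in any candidate $s_P$ with Alice's chosen $V_A$-part and Bob's chosen $V_B$-part, some vertex rejects, and that vertex lies in $V_A\cup V_\alpha$ or in $V_B\cup V_\beta$ (it lies somewhere), so it is seen by Alice or by Bob, forcing $f_A=0$ or $f_B=0$. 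Here it is essential that a vertex of $V_A$ sees only $V_A\cup V_\alpha$ and a vertex of $V_B$ sees only $V_B\cup V_\beta$, so that each vertex's verification depends only on information available to one of the two players (the $V_\alpha\cup V_\beta$ certificates being shared via $s_P$). Hence the protocol decides \textsc{EQUALITY} with certificate size $O(r\cdot m)$, and Theorem~\ref{thm:equality} gives $r\cdot m=\Omega(\ell)$, i.e. $m=\Omega(\ell/r)$.

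The main obstacle, and the point requiring care in the write-up, is the consistency of Alice's and Bob's locally chosen certificates on the \emph{boundary} $V_\alpha\cup V_\beta$: a vertex of $V_\alpha$ adjacent to $V_A$ is verified on Alice's side using Alice's $V_A$-certificates, while a vertex of $V_\beta$ adjacent to $V_B$ is verified on Bob's side; the two sides must agree on the certificates of $V_\alpha\cup V_\beta$ themselves. This is exactly why those certificates are put into the shared string $s_P$ rather than chosen privately — so both players use the same ones — and why the partition of $E_P$ into the five allowed blocks $V_A\times V_\alpha$, $V_\alpha\times V_\alpha$, $V_\alpha\times V_\beta$, $V_\beta\times V_\beta$, $V_\beta\times V_B$ matters: it guarantees that no edge of $G$ has one endpoint strictly inside Alice's private part and the other strictly inside Bob's, so the verification of every vertex is computable from (Alice's side $+$ $s_P$) or from (Bob's side $+$ $s_P$). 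Once this is spelled out, the rest is the routine reduction bookkeeping.
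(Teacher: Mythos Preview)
Your argument is essentially the paper's own proof: encode the certificates of $V_\alpha\cup V_\beta$ in the shared string $s_P$, let each player search over certificate assignments on their private side, and combine accepting runs to get a global accepting assignment, concluding via Theorem~\ref{thm:equality}. One small slip: you say Alice runs the verifier on every vertex of $V_A\cup V_\alpha\cup V_\beta$, but she cannot verify vertices of $V_\beta$, since those have neighbors in $V_B$ whose certificates she does not know; she should verify only $V_A\cup V_\alpha$ (and Bob only $V_B\cup V_\beta$), exactly as your own correctness paragraph later assumes.
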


The proof of Proposition~\ref{prop:CC-reduction} is deferred to Appendix~\ref{app:proof-framework}. 
The idea is that Alice and Bob can use a certification in the following way. 
First, they build the graph $(V,E_P)$ that corresponds to the length $\ell$ of their strings. Then Alice adds the edges $t_A(s_A)$ on her copy, and Bob adds the edges of $t_B(s_B)$ on his copy.
Finally, they interpret the certificate given by the prover as an assignment of local certificate to the vertices of $V_\alpha$ and $V_\beta$.
They can now simulate the local verification on their part of the graph, namely the vertices of $V_A\cup V_\alpha$ and $V_B \cup V_\beta$ respectively, and thus decide if the graph has property $\mathcal{P}$ or not, which by assumption is equivalent to solve the \textsc{EQUALITY} problem. 
Now if the local certification uses certificates that are very small, it implies that the certificate used in the simulation is also small which would contradict Theorem~\ref{thm:equality}.

\paragraph{Discussion of the framework.}

Reduction to two-party non-deterministic complexity has already been used several times in local certification~\cite{GoosS16,Censor-HillelPP20, FeuilloleyFHPP21}, but for the sake of simplicity in the reduction we use a slightly different setting. 
First, we use a single certificate instead of one for each player. Second, we say that the instance is rejected if at least one player rejects, instead of having both players reject. Finally, we do not use communication between Alice and Bob: they only read the same certificate. 
It is known that these changes do not change the asymptotic complexity of the problem.

Note that the framework applies to a where the vertices can receive both a global certificate and local certificates as in \cite{FeuilloleyH18}. Also, by having $V_\alpha$ and $V_\beta$ of large enough diameter, one can derive bounds for constant-distance view, or even non-constant views (as in \cite{GoosS16, FeuilloleyFHPP21}).

\subsection{Application to fixed-point free automorphism of trees of bounded depth}

We will use the framework described in Section~\ref{subsec:framework-comcom} to prove the following theorem.

\thmIsomorphismTrees*
The same bound (without the logarithmic factors) was proved in \cite{GoosS16} for trees of unbounded depth, via a counting argument. 
Given that we have results on bounded treedepth, it is necessary to have a lower bound on bounded depth trees, to allow fair comparisons between MSO properties and non-MSO properties (\emph{e.g.} isomorphism-like properties).

The proof is deferred to Appendix~\ref{app:isomorphism}. 
It is a relatively direct use of the framework: Both $V_\alpha$ and $V_\beta$ are reduced to a single vertex connected to each other. Then $V_A$ and $V_B$ will be rooted trees whose root is connected to respectively $V_\alpha$ and $V_\beta$.
The result follows from the fact that the logarithm of the number of trees of depth $k$ is $\tilde{\Omega}(n)$, as soon as $k\geq 3$~\cite{PachPPS13}, which allows having an injection from the set of strings to the set of bounded depth trees.

\subsection{Application to treedepth certification}

\thmLowerBoundTreedepth* 
  
\begin{proof}

We first prove the theorem for $k=5$, and then explain how to modify the argument for any $k\geq 4$.
Again, we will use the framework of Subsection~\ref{subsec:framework-comcom}.
Let $\ell,n$ be two integers such that there is an injection $f$ from the set of strings of length $\ell$ to the set of matchings between two (labelled) sets of size $n$.
Our construction is illustrated in Figure~\ref{fig:comcom-2}.
Each set $V_A$, $V_B$, $V_\alpha$ and $V_\beta$ consists of two sets of $n$ vertices, that we denote with exponents, e.g. $V_A^1$ and $V_A^2$.
In each of these sets, the vertices are indexed between $1$ and $n$. 
We also add a vertex~$u$, that is adjacent to all the vertices of $V_\alpha$. In the construction, it will behave like a vertex of $V_{\alpha}$ (hence simulated by Alice). 
The set of edges $E_P$ is the collection of $2n$ disjoint paths on four nodes, of the form $(V_A^j[i],V_\alpha^j[i],V_\beta^j[i], V_B^j[i])$ for every $i \le n$ and every $j \in \{1,2\}$.
Note that the graph is connected (even without Alice and Bob's private edges), thanks to the  vertex $u$ which is complete to $V_\alpha$ and then adjacent to every path.

\begin{figure}[!h]
    \centering
    \scalebox{0.85}{
    \input{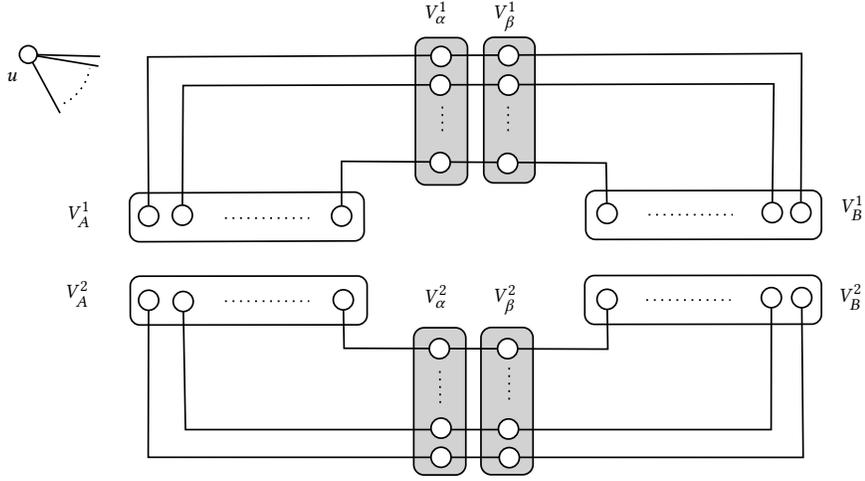}}
    \caption{Illustration of the basis of construction of $G(s_A,s_B)$ for bounded treedepth. On the picture, the upper part contains the sets $V_A^1$, $V_\alpha^1$, $V_\beta^1$, and $V_B^1$, and the lower part contains $V_A^2$, $V_\alpha^2$, $V_\beta^2$, and $V_B^2$. 
    The vertex $u$ is adjacent to all the vertices of $V_\alpha$.}
    \label{fig:comcom-2}
\end{figure}

Let us now describe the part that is private to Alice. 
Let $s_A$ be the string of length $\ell$ given to Alice and $M_A$ be the matching $f(s_A)$ between $V_A^1$ and $V_A^2$. Bob does the same for its string $s_B$.
We say that the matchings are equal if, for all $i,j$, $(V_A^1[i],V_A^2[j])$ is in Alice's matching if and only if $(V_B^1[i],V_B^2[j])$ is in Bob's matching.

\begin{restatable}{lemma}{lemTreedepthPebbles}
\label{lem:treedepth-pebbles}
If the matchings are equal, the graph has treedepth $5$, otherwise it has treedepth at least $6$.
\end{restatable}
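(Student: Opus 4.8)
The plan is to derive both statements from two standard facts about treedepth: it is monotone under taking subgraphs, and for a connected graph $G$ on at least two vertices one has $\mathrm{td}(G)=1+\min_{v\in V(G)}\mathrm{td}(G-v)$, where the treedepth of a disconnected graph is the maximum of the treedepths of its components. Combined with the well-known identity $\mathrm{td}(C_m)=\lceil\log_2 m\rceil+1$ for $m\geq 3$ -- so that $\mathrm{td}(C_8)=4$ while $\mathrm{td}(C_m)\geq 5$ as soon as $m\geq 9$ -- these suffice. Since $G(s_A,s_B)$ is connected (as noted above, thanks to $u$), the recursive identity applies, and the natural vertex to peel off first is $u$.

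First I would describe $G-u$. Write $M_A$ as a bijection $a$, meaning $V_A^1[i]\sim V_A^2[a(i)]$, and similarly $M_B$ as $b$. Deleting $u$ leaves the $2n$ four-vertex paths of $E_P$ joined only through $M_A$ on the $V_A$-side and $M_B$ on the $V_B$-side. Tracing a component from some $V_A^1[i]$ along $V_\alpha^1[i],V_\beta^1[i],V_B^1[i]$, then via $M_B$ to $V_B^2[b(i)],V_\beta^2[b(i)],V_\alpha^2[b(i)],V_A^2[b(i)]$, then via $M_A$ to $V_A^1[a^{-1}(b(i))]$, one sees that $G-u$ is a disjoint union of cycles, with exactly one cycle on $8\ell$ vertices for each cycle of length $\ell$ of the permutation $\pi:=a^{-1}b$. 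The matchings are equal precisely when $a=b$, i.e. $\pi=\mathrm{id}$, i.e. $G-u$ is a disjoint union of copies of $C_8$; and they differ precisely when $\pi$ has a cycle of length $\ell\geq 2$, producing a cycle of $G-u$ on $8\ell\geq 16$ vertices.

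In the equal case, $\mathrm{td}(G-u)=\mathrm{td}(C_8)=4$, hence $\mathrm{td}(G)\leq 5$; and for every $v\neq u$ the graph $G-v$ still contains untouched one of the pairwise disjoint $8$-cycles of $G-u$, so $\mathrm{td}(G-v)\geq 4$. With $\mathrm{td}(G-u)=4$ this gives $\mathrm{td}(G)=1+\min_v\mathrm{td}(G-v)=5$. In the not-equal case, fix a cycle $C$ of $G-u$ with $|C|\geq 16$ and set $H:=G[V(C)\cup\{u\}]$, so that $\mathrm{td}(G)\geq\mathrm{td}(H)$; it is enough to prove $\mathrm{td}(H)\geq 6$, again via the recursion. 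The neighbours of $u$ on $C$ are exactly the $V_\alpha$-vertices of $C$, which occur two per eight-vertex block $V_A^j[i],V_\alpha^j[i],V_\beta^j[i],V_B^j[i]$ and are spaced so that the consecutive cyclic gaps between them alternate between $4$ and $2$ interior vertices; in particular no two consecutive gaps both have size $4$. Now: if $v=u$, then $H-u=C\supseteq C_{16}$, so $\mathrm{td}(H-u)\geq 5$; if $v\in V(C)$, then in the path $C-v$ I would take the arc between two $u$-neighbours that goes ``the long way around''. This arc has at least $|C|-4\geq 12$ vertices when $v$ lies in the interior of a gap, and at least $|C|-1-(4+2)\geq 9$ vertices when $v$ is itself a $u$-neighbour (using that its two incident gaps are not both of size $4$); appending $u$ turns it into a cycle on at least $10$ vertices, so $\mathrm{td}(H-v)\geq\mathrm{td}(C_{10})=5$. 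Hence $\mathrm{td}(H)=1+\min_v\mathrm{td}(H-v)\geq 6$, and therefore $\mathrm{td}(G)\geq 6$.

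The step I expect to be the crux is the last one: controlling, for an arbitrary vertex $v$ lying on the long cycle, the length of the cycle through $u$ that survives in $H-v$. This genuinely uses the precise layout of the $V_\alpha$-vertices along $C$ -- the alternating $4,2,4,2,\dots$ gap pattern forced by the four-vertex blocks $V_A^j[i],V_\alpha^j[i],V_\beta^j[i],V_B^j[i]$ -- and the one slightly delicate sub-case is when $v$ is a neighbour of $u$, where one must observe that $v$'s two incident gaps cannot both be of the larger size. Everything else is elementary bookkeeping with the bijections $a$ and $b$ and the two treedepth facts quoted at the start.
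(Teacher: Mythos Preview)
Your argument is correct. The one implicit assumption is $n\geq 2$ for the lower bound in the equal case (so that removing some $v\neq u$ still leaves an untouched $C_8$); this is harmless in context, since the whole construction is meant for growing~$n$.

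The paper takes a different, though closely related, route: it invokes the cops-and-robber characterisation of treedepth from~\cite{GruberH08}, exhibits an explicit cop strategy (first cop on $u$, then a binary search on the robber's cycle), and argues informally that this strategy is optimal. Your approach replaces the game by the equivalent recursion $\mathrm{td}(G)=1+\min_v\mathrm{td}(G-v)$ together with the closed form $\mathrm{td}(C_m)=1+\lceil\log_2 m\rceil$. The paper's version is shorter and more pictorial, but its lower-bound half (``the robber can always escape five cops when a long cycle is present'') is left at an intuitive level; your version does the honest work of analysing $\mathrm{td}(H-v)$ for each possible $v$, and the observation that the $V_\alpha$-vertices along $C$ have alternating gap sizes $4,2,4,2,\dots$ is exactly the structural fact that makes this case go through. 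Either approach is fine; yours is more self-contained and does not need the game formalism.
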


The proof of this result can be found in Appendix~\ref{app:proof-treedepth-pebbles} and uses a cops-and-robber characterization of treedepth~\cite{GruberH08}.

Once again, we are exactly in the situation of Proposition~\ref{prop:CC-reduction}, and we want to optimize the parameters.
The number of matchings on $n$ vertices is $n!$, thus the logarithm of this quantity is of order $n\log n $.
Therefore, we can take $\ell \sim n\log n$. 
As the size of $V_\alpha \cup V_\beta$ is $2n$, by
Proposition~\ref{prop:CC-reduction} we get a $\Omega(\log n)$ lower bound.

To extend this proof to the case $k>5$, it is sufficient to remark that by adding vertices on the edges that have right corner in Figure~\ref{fig:comcom-2} (\emph{e.g.} the edges of the form $(V_A^1[i],V_\alpha^1[i]$), we can increase the length of the cycles, which changes the threshold between correct instances and incorrect instances, without changing the  the rest of the argument. 
One can actually have a proof for $k=4$, but without using in the exact framework described above, in particular removing the vertices of $V_\alpha$ and $V_\beta$, to get shorter cycles.
\end{proof}


\newpage{}

\bibliography{MSO-certif.bib}




\newpage{}
\appendix

\section{Missing proofs and discussions of Section~\ref{sec:approach}}

\subsection{Discussion of verification radius: one versus constant }
\label{app:discussion-radius}
An aspect of the model that is important in this paper is the locality of the verification algorithm. 
The original papers on local certification consider a model called \emph{proof-labeling schemes}~\cite{KormanKP10}, where the nodes only see (the certificates of) their neighbors. 
This choice originates from the state model of self-stabilization~\cite{Dolev2000}.
The model was generalized in~\cite{GoosS16} to \emph{locally checkable proofs} where the vertices can look at a constant distance. 
It is proved in~\cite{GoosS16} that the classic lower bounds (\emph{e.g.} for acyclicity) still hold in this model. 

The two models have pros and cons. Choosing constant distance is more appealing from a theoretical point of view, as it removes the distance $1$ constraint (which could seem arbitrary), but still captures a notion of locality. 
On the other hand, constant distance is not well-suited to contexts where we care about message sizes: with unbounded degree, looking at constant distance can translate into huge messages.
As noted in \cite{GoosS16}, due to their locality, FO formulas can be checked without certificate if we can adapt the view of the node to the formula, and this can be extended to certification of monadic $\Sigma_1^1$ formulas if one allows $O(\log n)$-bit certificates. 
For this paper, we chose to fix the distance to $1$, in order to prevent this adaptation of the radius to the formula. 
Note that the difference between the two models can be dramatic. For example, deciding whether a graph has diameter $3$ or more, does not need any certificate if the nodes can see at distance $3$, but requires certificates of size linear in $n$ if they can only see their neighbors~\cite{Censor-HillelPP20}.

\subsection{Proof of Lemma~\ref{lem:existsFO}: Certification of small fragments}
\label{sec:fragment}

This section is devoted to prove Lemma~\ref{lem:existsFO}.

\lemFragments*

Let us first prove the following lemma:

\begin{lemma}\label{lem:FOext}
Existential FO sentences with $k$ quantifiers (i.e. whose prenex normal form has only existential quantifiers) can be certified with $O(k \log n)$ bits.
\end{lemma}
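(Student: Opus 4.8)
The plan is to certify an existential sentence by having the prover \emph{guess the witnesses}. Put the sentence in prenex form $\exists x_1\cdots\exists x_k\,\psi(x_1,\dots,x_k)$ with $\psi$ quantifier-free, hence a Boolean combination of atoms $x_i=x_j$ and $x_i-x_j$. On a \emph{yes}-instance, fix vertices $a_1,\dots,a_k$ with $G\models\psi(a_1,\dots,a_k)$. The certificate of every vertex $v$ contains: (i) the $O(\log n)$-bit certificate of a fixed rooted spanning tree $T$ of $G$ from Subsection~\ref{subsec:basic-certif} (so $v$ knows its parent and can recover its set of children among its neighbours); (ii) the list $L=(\mathrm{id}(a_1),\dots,\mathrm{id}(a_k))$ of the $k$ witness identifiers, costing $O(k\log n)$ bits; (iii) for each $i\in[k]$, one bit $b_i(v)$ claiming that the subtree of $T$ rooted at $v$ contains the vertex with identifier $L_i$; and (iv) a symmetric table $E^\star=(e_{ij})$ recording, for each pair $(i,j)$ appearing in an adjacency atom of $\psi$, whether $a_i-a_j$ in $G$ (no need to store equalities, since identifiers are unique, so $a_i=a_j$ iff $L_i=L_j$).

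Next I would spell out the verification at a vertex $v$: check the spanning-tree certificate; check that $L$ and $E^\star$ are identical to those of every neighbour (so, by connectedness, they are globally consistent); check $b_i(v)=\big[\mathrm{id}(v)=L_i\big]\vee\bigvee_{w\text{ child of }v}b_i(w)$ for every $i$, and additionally, if $v$ is the root, that $b_i(v)=1$ for all $i$; if $\mathrm{id}(v)=L_i$ for some $i$, check that $e_{ij}=1$ if and only if some neighbour of $v$ has identifier $L_j$, for every relevant $j$ (which $v$ can decide from the identifiers it sees, and which is consistent with $x_i-x_i$ being false since there are no loops); finally, substitute $[L_i=L_j]$ for $x_i=x_j$ and $e_{ij}$ for $x_i-x_j$ in $\psi$, and accept only if the result evaluates to true. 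Completeness is immediate from the choice of the $a_i$. For soundness, assume all vertices accept: the spanning-tree and equality checks make $T$, $L$, $E^\star$ globally meaningful; the bottom-up recursion on the $b_i$'s, together with $b_i(\mathrm{root})=1$, forces that for each $i$ some vertex of $G$ actually has identifier $L_i$ — so the witnesses exist; the row checks at those vertices force $E^\star$ to record their true adjacencies; hence $\psi$ holds on the genuine tuple and $G\models\varphi$.

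For the size, the spanning tree costs $O(\log n)$, the list $L$ costs $O(k\log n)$, the bits $b_i$ cost $O(k)$, and $E^\star$ has one bit per adjacency atom of $\psi$; for a fixed sentence this last contribution is bounded and is dominated, so the total is $O(k\log n)$, as claimed (and $O(\log n)$ for a fixed $\varphi$). The step I expect to be the main obstacle — and the one worth writing carefully — is the witness-existence gadget: without the bits $b_i$ a dishonest prover could put phantom identifiers in $L$ and then fill $E^\star$ arbitrarily, since no real vertex would be responsible for auditing those rows; the rooted spanning tree is exactly what lets the network certify, at the cost of only $O(k)$ extra bits, that each claimed identifier is realized by an actual vertex, which is what closes the soundness argument.
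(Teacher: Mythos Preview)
Your proof is correct and follows essentially the same strategy as the paper: broadcast the list of witness identifiers and the induced adjacency table, make each witness audit its own row of the table, and evaluate $\psi$ locally. The only difference is the witness-existence gadget: the paper certifies $k$ separate spanning trees, one rooted at each $a_i$ (so the root check at $a_i$ forces a vertex with identifier $L_i$ to exist), whereas you use a single spanning tree plus $k$ bottom-up existence bits; both cost $O(k\log n)$ and both are sound.
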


\begin{proof}
Let $G$ be a connected graph and $\exists x_1\cdots\exists x_k\varphi$ be an existential FO sentence where $\varphi$ is quantifier-free. Let $v_1,\ldots,v_k$ be $k$ vertices such that the formula is satisfied by $v_1,\ldots,v_k$. 

Every vertex receives the following certificate:
\begin{itemize}
    \item The list of identifiers of vertices $v_1,\ldots,v_k$.
    \item The $k\times k$ adjacency matrix of the subgraph induced by $v_1,\ldots,v_k$.
    \item The certificate of a spanning tree rooted on $v_i$ for every $i \le k$ (see Subsection~\ref{subsec:basic-certif}).
\end{itemize}
Every node then checks the certificate as follows. First, every node checks that its neighbors have the same list of vertices $v_1,\ldots,v_k$ and the same adjacency matrix. Then every node checks the certificate of the spanning tree of each $v_i$. Finally, each of the vertices $v_1,\ldots,v_k$ can now use the adjacency matrix to evaluate $\varphi$ on $(v_1,\ldots,v_k)$ and check that it is satisfied.
\end{proof}

Let us now prove the second part of Lemma~\ref{lem:existsFO}.

\begin{lemma}
FO sentences with quantifier depth at most 2 can be certified with $O(\log n)$ bits.
\end{lemma}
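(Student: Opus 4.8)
The plan is to understand precisely which properties are expressible by an FO sentence of quantifier depth at most $2$, and then certify each of them directly with $O(\log n)$ bits. First I would put the sentence in prenex normal form, so it looks like $Q_1 x\, Q_2 y\, \psi(x,y)$ with $Q_1,Q_2\in\{\forall,\exists\}$ and $\psi$ quantifier-free, or it has a single quantifier (handled by Lemma~\ref{lem:FOext} and its universal dual, or trivially). The quantifier-free kernel $\psi(x,y)$ is a boolean combination of the atoms $x=y$ and $x-y$, so up to logical equivalence there are only finitely many choices, and I would just enumerate the four blocks $\exists\exists$, $\exists\forall$, $\forall\exists$, $\forall\forall$ and, within each, the handful of inequivalent kernels. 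The $\exists\exists$ case is already covered by Lemma~\ref{lem:FOext} with $k=2$, and the $\forall\forall$ case is its negation-dual, so the real content is in the mixed cases $\exists\forall$ and $\forall\exists$.

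Next I would observe that the mixed cases collapse to a very short list of "interesting" properties. In a connected loopless graph, after discarding degenerate kernels (contradictions, tautologies, kernels that only speak about $x=y$, kernels forcing the graph to have one vertex, etc.), the surviving sentences are equivalent to: "there exists a dominating vertex" (a vertex adjacent to all others), i.e.\ $\exists x\,\forall y\,(x=y\vee x-y)$; and "$G$ is a clique", i.e.\ $\forall x\,\forall y\,(x=y\vee x-y)$ — together with their negations and trivially-true/false sentences. This is exactly the characterization announced in the paragraph after Lemma~\ref{lem:existsFO}. The bulk of the writing is a careful but routine case analysis verifying that no other kernel yields something genuinely new on connected graphs (for instance a kernel like $x-y$ alone under $\forall\forall$ forces a clique since loops are forbidden; under $\exists\forall$ it forces a universal vertex; kernels involving $\neg(x-y)$ under $\forall\forall$ contradict connectivity for $n\ge 2$; and so on).

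Then I would certify each surviving property with $O(\log n)$ bits. Being a clique needs no certificate at all: each vertex checks it is adjacent to every vertex whose ID it sees — but since radius is $1$ and a vertex cannot count the whole graph, I instead certify the number of vertices $n$ (Proposition in Subsection~\ref{subsec:basic-certif}, $O(\log n)$ bits) and have each vertex check it has exactly $n-1$ neighbors. For "there is a dominating vertex", the prover names that vertex $v$, broadcasts its ID along a certified spanning tree rooted at $v$ ($O(\log n)$ bits), every vertex checks the tree certificate and that it is either equal or adjacent to $v$, and $v$ itself checks via the certified vertex count that it has $n-1$ neighbors. The negations are certified symmetrically: to certify "no dominating vertex exists" the prover exhibits, for each candidate, a non-neighbor — concretely, certify a function assigning to every vertex $w$ a vertex $w'$ with $w\neq w'$ and $w\not-w'$, routed along spanning trees, $O(\log n)$ bits; to certify "$G$ is not a clique" the prover just points (via one certified spanning tree) to a fixed non-edge pair $(a,b)$ and the relevant endpoints check non-adjacency.

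The main obstacle is purely bookkeeping: being exhaustive and rigorous in the classification of depth-$2$ FO sentences over the signature with $=$ and the single symmetric irreflexive relation $-$, including correctly handling the interaction with the standing assumptions (connected, loopless, non-empty) so that e.g.\ $n=1$ and $n=2$ corner cases don't break the "clique / dominating vertex" dichotomy. Once the classification is pinned down, each certification is a one-line application of the spanning-tree and vertex-count gadgets, so there is no real difficulty in the certification step itself. I would relegate the full case analysis to the appendix (as the excerpt indicates) and in the main text only state the dichotomy and the two easy certifications.
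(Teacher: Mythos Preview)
Your overall strategy --- classify depth-$2$ sentences into finitely many graph properties and then certify each --- is exactly the paper's. Two points need fixing.

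First, the prenex step is wrong as stated: a sentence of quantifier depth~$2$ need not be equivalent to a prenex sentence with only two quantifiers. For instance $(\exists x\,\forall y\,\phi_1(x,y))\wedge(\forall z\,\exists w\,\phi_2(z,w))$ has depth~$2$ but any prenex form has four quantifiers. The paper avoids this by writing $\varphi$ as a boolean combination of sentences $Qx\,\psi(x)$ with $\psi$ of depth~$1$, and then each $\psi(x)$ as a boolean combination of formulas $Qy\,\pi(x,y)$ with $\pi$ quantifier-free --- the standard normal form for bounded depth, and what you should use here. Relatedly, your remark that the $\forall\forall$ block is the ``negation-dual'' of $\exists\exists$ is misleading in certification: certifying a property and certifying its complement are genuinely different tasks, and the existential lemma gives you only one direction. (You do certify the negations explicitly later, so this is more a presentation slip than a real gap.)

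Second, your certification of ``no dominating vertex'' via a witness map $w\mapsto w'$ is incomplete: a vertex $w$ can check that the announced identifier $w'$ is neither its own nor that of a neighbour, but it cannot locally verify that $w'$ is the identifier of an \emph{actual} vertex of the graph, so a cheating prover could hand a dominating vertex a fresh unused identifier and have it accept. The paper's certification is simpler and sidesteps this entirely: certify $n$ with a spanning tree, and have every vertex check that its degree is strictly less than $n-1$. (Your ``not a clique'' certification via a pointed non-edge $(a,b)$ does work once you certify spanning trees rooted at $a$ and $b$ to witness their existence; the paper instead certifies $n$ and a spanning tree rooted at any vertex of degree less than $n-1$.)
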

\begin{proof}
First, observe that sentences with quantifier depth one are satisfied by either all graphs or none of them. We thus consider the depth 2 case.

Let $\varphi$ be a sentence of quantifier depth at most two. Without loss of generality, we may assume that $\varphi$ is a boolean combination of sentences of the form $Qx \psi(x)$ where $\psi(x)$ is again a boolean combination of formulas of the form $Qy\pi(x,y)$ where $\pi(x,y)$ is quantifier-free. 

Observe that up to semantic equivalence, $\pi(x,y)$ can only express that $x=y$, $xy$ is an edge, $xy$ is a non-edge, or the negation of these properties.

Trying the two possible ways of quantifying $y$ in these six properties, we end up showing (using that our graphs are connected) that $\psi(x)$ lies among these three properties or their negations:
\begin{itemize}
\item $x$ is the only vertex.
\item $x$ is a dominating vertex.
\item $x$ is not the only vertex but dominates the graph.
\end{itemize}

Now, quantifying on $x$ leaves only a few choices for $\varphi$, namely boolean combinations of the following:
\begin{enumerate}
    \item The graph has at most one vertex.
    \item The graph is a clique.
    \item The graph has a dominating vertex.
\end{enumerate}

Since certifying disjunction or conjunction of certifiable sentences without blow up (asymptotically) in size is straightforward, it is sufficient to show that the three properties and their negations can all be checked with $O(\log(n))$-bit certificates.

Since our graphs are connected, Property 1 is equivalent to say that every vertex has degree 0, which can be checked with empty certificates. Similarly, its negation is equivalent to having minimum degree 1 which can be checked similarly.

For Property 2 (resp. the negation of 3), we begin by computing the number $n$ of vertices in the graph and certifying it locally (it is well-known that this can be done with $O(\log n)$-bit certificates, see \emph{e.g.}~\cite{Feuilloley21}). 
The verification algorithm then just asks whether the degree of each vertex is $n-1$ (resp. less than $n-1$).

For Property 3 (resp. the negation of 2), we again compute and certify the number $n$ of vertices. We additionally certify a spanning tree rooted at a vertex of degree $n-1$ (resp. less than $n-1$). The root then just check that it has indeed the right degree.
\end{proof}

\subsection{Discussion of distributed graph automata}
\label{app:discussion-DGA}

In this subsection, we discuss the model of alternating distributed graph automata of~\cite{Reiter15}, which also connects MSO logic on graphs to distributed models of computation. 
This paper belongs to a series of works aiming at capturing (modal) logics on graphs with different sorts of distributed automata models, see \emph{e.g.}~\cite{HellaJKLLLSV15, Reiter17, EsparzaR20}.  

Let us quickly describe what the model of~\cite{Reiter15}, and then how it compares with our model. 
The nodes of the graph are finite-state machines, and they update their states in synchronous rounds.
There is a constant number of such  rounds. 
The nodes are anonymous, that is, the nodes are not equipped with identifiers.\footnote{In general, papers studying connections between logic and distributed computation do not use IDs, because IDs do not have a simple equivalent in logic. An exception is~\cite{BolligBR19}.}
The transition function of a node takes as input its state and the states of its neighbors in the form of a set (no counting is possible). 
At the end of the computation, the set of the states of the nodes, $F$, is considered, and the computation accepts if and only if $F$ is one of the accepting sets of states.
The alternating aspect is described in \cite{Reiter15} with computation branches, but in the context of our work it is more relevant to describe it informally as a prover/disprover game. The transition functions actually do not depend only on the states of the neighborhood, they also depend on additional labels given by two oracles, called prover and disprover. The prover and the disprover alternate in providing constant-size labels to the nodes, in order to reach respectively acceptance and rejection.

There are several substantial differences between our model and the model of~\cite{Reiter15}. First, our model is stronger in terms of local computation: we assume unbounded computation time and space whereas \cite{Reiter15} assumes finite-state machines. 
Second, our acceptance mechanism is weaker, in the sense that it is essentially the conjunction of a set of binary decisions, whereas \cite{Reiter15} uses an arbitrary function of a set of outputs. 
Third, we only have one prover, whereas \cite{Reiter15} has the full power of alternating oracles. 
Actually, variants of local certification using these two extensions have been considered (certification enhanced with general accepting functions in \cite{ArfaouiFIM14, ArfaouiFP13}, and generalized to an analogue of the polynomial hierarchy in \cite{FeuilloleyFH21, BalliuDFO18}), but here we are interested in the classic setting.
Finally, our verification is in one round, whereas the computation of~\cite{Reiter15} uses a constant number of rounds.

\section{Missing proofs of Section~\ref{sec:preliminaries}}

\subsection{Proof of Remark~\ref{rk:coherent}}
\label{app:remark-coherent}

Let us remind the remark and sketch a proof.

\rkCoherent*

One can easily remark that the following holds:

\begin{lemma}\label{lem:coherent1}
Let $G$ be a connected graph of treedepth $d$. Then there exists a tree $T$ that is a coherent $d$-model of $G$.
\end{lemma}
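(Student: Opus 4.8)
The plan is to prove Lemma~\ref{lem:coherent1} by induction on $|V(G)|$, using the standard recursive description of treedepth: for a connected graph $G$ on at least two vertices, $\mathrm{td}(G)=1+\min_{v\in V(G)}\mathrm{td}(G-v)$, and for a disconnected graph $\mathrm{td}$ equals the maximum over its connected components. The base case $|V(G)|=1$ is trivial: the single-node rooted tree is a coherent $1$-model.

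For the inductive step I would pick a vertex $r$ realizing the minimum, i.e. with $\mathrm{td}(G-r)=d-1$, and let $C_1,\dots,C_m$ be the connected components of $G-r$. Each $C_i$ is a connected graph with $\mathrm{td}(C_i)\le d-1$, so by induction it has a coherent $(d-1)$-model $T_i$; I then build $T$ by taking $r$ as the root and attaching each $T_i$ below $r$ (the root of $T_i$ becomes a child of $r$). The depth of $T$ is at most $1+(d-1)=d$ and $T$ is a tree on $V(G)$.

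Two verifications remain. First, that $T$ is a model of $G$: an edge incident to $r$ is handled because $r$ is the global root and hence an ancestor of every vertex; an edge of $G-r$ lies inside a single component $C_i$, so one endpoint is an ancestor of the other already in $T_i$. Second, coherence: for a vertex $v$ lying inside some $C_i$, its children and the set of vertices hanging below each child are the same in $T$ as in $T_i$, and adjacency among vertices of $C_i$ is the same in $G$ as in the induced subgraph $C_i$, so the coherence condition at $v$ is inherited from $T_i$; for the root $r$, the subtree below each child spans exactly one $V(C_i)$, and since $G$ is connected while $C_i$ is a component of $G-r$, there is an edge of $G$ from $r$ into $C_i$, which supplies the required neighbour of $r$ in that subtree.

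I do not expect a genuine obstacle: this is essentially folklore about treedepth. The only mildly delicate point is checking the coherence condition \emph{at the root}, which is precisely where connectivity of $G$ itself (not merely of its components) is used; everywhere else the induction is mechanical.
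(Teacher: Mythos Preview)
Your argument is correct, but it follows a different route from the paper's.

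The paper proves the lemma by an extremal argument: among all $d$-models of $G$, pick one $T$ minimising the sum of depths of the vertices; if $T$ were not coherent, some child $w$ of some $v$ would have its whole subtree avoiding $v$ in $G$, and one could reattach that subtree higher up (at the lowest ancestor of $v$ that does see it), strictly decreasing the total depth while keeping a valid model --- a contradiction.

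Your proof instead builds a coherent model constructively, by induction on $|V(G)|$ and the recursive formula $\mathrm{td}(G)=1+\min_v\mathrm{td}(G-v)$. This has the mild advantage of being explicit (and of making transparent why connectivity of $G$ is needed precisely at the root), at the price of importing the recursive characterisation of treedepth as a black box. The paper's argument is self-contained from the elimination-tree definition and, as a bonus, shows that \emph{any} $d$-model can be turned into a coherent one by repeatedly lifting subtrees, which is a slightly stronger operational statement.
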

\begin{proof}
Let $T$ be a $d$-model of $G$ where the sum over all the vertices of $V$ of the depth of $v$ is minimized. We claim that $T$ is coherent. Assume by contradiction that there exists a vertex $v$, and one of its children $w$, such that no vertex of the subtree rooted in $w$ is connected to $v$. Let $v'$ be the lowest ancestor of $v$ connected to a vertex of $G_w$ (such a vertex must exist since $G$ is connected). We can attach the subtree of $w$ on $v'$ rather than $v$, without breaking the fact that the tree is a model of $G$. This new tree has a lower sum of depths than the original one, a contradiction with the minimality.
\end{proof}

Note that we cannot assume that $w$ is connected to its closest ancestor, for instance on the representation for a $k$-model of a path $P_{2^k-1}$ (see Fig.~\ref{fig:treedepth}).
Using Lemma~\ref{lem:coherent1}, one can easily check that the remark holds.

\section{Missing proofs and discussions of Section~\ref{sec:MSO-trees-automata}}

\subsection{Proof of Theorem~\ref{thm:MSO-trees}}
\label{app:proof-MSO-trees}

\thmMSOTrees*

\begin{proof}
We start by some preliminaries on tree automata, then describe the certification, and finally prove its correctness.

\paragraph{Preliminaries about tree automata.}

Before we describe the certification, let us note that in rooted trees, the adjacency is oriented: given two vertices $x$ and $y$, the basic predicates are: "$x$ is a child of $y$", and "$y$ is a child of $x$". 
In our (unoriented) MSO formalism, our basic predicate is ``$x$ and $y$ are adjacent''. 
Any MSO formula in our setting can be transferred to the oriented setting, by simply replacing every occurrence of $x-y$ by "$x$ is a child of $y$ or $y$ is a child of $x$". 
This transformation only induces a constant blow-up of the formula size, and works for any orientation of the tree. Therefore, we may assume that the trees we consider are rooted, have unbounded degree, unbounded depth and no ordering on the children of each node. We may also assume that the nodes of our trees are labeled (with finitely many labels). While this is not needed for our initial purposes, our proof gives this extension for free. 

Proposition~8 from~\cite{BonevaT05} states that a set of such trees is MSO definable if and only if it is recognized by a so-called \emph{unary ordering Presburger tree automaton}. 

Such an automaton is a quadruple $A = (Q, \Lambda, \delta, F)$, where $Q$ is a finite set of states, $F \subseteq Q$ is a set of accepting states, and $\Lambda$ is a set of nodes labels. The definition of the transition function $\delta$ is slightly technical, but for our purposes, we may only consider that $\delta$ associates each pair $(q,L)\in Q \times \Lambda$ with a computable function $\delta_{q,L}$ taking as input a multiset of states and outputing a boolean. (We will discuss the precise definition of $\delta$ in Appendix~\ref{app:discussion-LCL}.)

This definition should be interpreted the following way. Consider a vertex with label $L$. Denote by $q$ its state and by $S$ the multiset containing the sets of its children. This configuration is correct with respect to $\delta$, if $\delta_{q,L}(S)$ is true.

\paragraph{Description of the certification.}
On a \emph{yes}-instance, the prover will choose an arbitrary root for the tree, transform the unoriented MSO formula into an oriented one, find the corresponding UOP tree automaton $A$ given by~\cite{BonevaT05}, compute an accepting run of this automaton on the tree, and then label every vertex $u$ with:

\begin{enumerate}
    \item The distance $d(u)$ from $u$ to the root, modulo 3. 
    \item The description of $A$.
    \item The state of $u$ in the accepting run.
\end{enumerate}

The local verification algorithm on every vertex $u$ is the following:

\begin{enumerate}
    \item Check the consistency of the distances:
    \begin{itemize}
        \item Either there is a neighbor with distance $d(u)-1 \mod 3$, and all the other neighbors have distance $d(u)+1 \mod 3$.
        \item Or the distance is 0, and all the neighbors have distance 1. In this case, the vertex is the root, for the rest of the verification.
    \end{itemize}
    \item Check that the description of $A$ correspond to an automaton that correspond to the (transformed) MSO formula.
    \item Consider that the vertices with distance $d(u)+1 \mod 3$ are the children. 
    Check that the state of $u$, its label, and the states of the children correspond to a correct transition in $A$. If $u$ is the root, also check that the state is an accepting state.
\end{enumerate}

Note that in this certification, every vertex is given a constant size certificate, and only needs to see the certificates of its neighbors to perform the verification. 

\paragraph{Proof of correctness.}
It is well-known that mod 3 counters are enough to ensure a globally consistent orientation of a tree. 
The different steps of the verification ensure that every local configuration correspond to a proper configuration in an automaton $A$ that recognizes exactly the formula at hand. 
Therefore, if the verification algorithm accepts everywhere, then the formula is satisfied, and if it is satisfied, the prover can label the vertices to make the verification accept.
\end{proof}

\subsection{Discussion on generalizations of LCLs}
\label{app:discussion-LCL}

\paragraph{Discussion about generalization of LCLs.}

Let us discuss how the technique we use might be useful on a more abstract level, for the generalization of locally checkable labelings (LCLs).
Locally checkable labelings are the most studied family of problems in the LOCAL model. 
These are the problem on bounded-degree graphs whose correct outputs can be described by a list of correct neighborhoods~\cite{NaorS95}. 
A classic example is coloring, where every node can check that it has been given a color different from the ones of its neighbors.
These problems have been studied in depth, and after several recent breakthroughs they are quite well-understood. 
Generalizing LCLs beyond bounded degree is challenging because there can be an infinity of correct neighborhoods. 
We argue that the techniques we use to prove Theorem~\ref{thm:MSO-trees} can give a relevant direction for generalization. 

In our proof of Theorem~\ref{thm:MSO-trees}, we used that tree automata are powerful enough to capture MSO, and we have assumed the most general model, where the transition function is a general computable function. 
This model actually recognizes much more than MSO, for example, for any computable set of integers, we can recognize the set of stars whose degrees are in this set.
There exists a more restricted model of tree automata that recognizes exactly MSO properties (on the trees we consider). 
These are the unary ordering Presburger (UOP) tree automata~\cite{BonevaT05}, that we are going to define formally now.

We use that notations of~\cite{Kepser08} (Subsection~4.3.1, Automata Related Logics), that are more self-contained than the ones of~\cite{BonevaT05}. 
An \emph{ordering Presburger constraints} is a constraint of the following grammar:
\begin{align*}
p ::=& \  t\leq t \ |\ p \wedge p \ |\ \neg p \\
t ::=& \  y \ |\ n \ |\ t+t,
\end{align*}
where $n$ is an integer, and $y$ a free variable (that takes value in the integers). 
A \emph{unary ordering constraint} is a unary constraint where every atomic constraint is unary, that is, contains only one free variable. 

A \emph{unary ordering Presburger (UOP) tree automaton} is a quadruple $A = (Q, \Lambda, \delta, F)$, where $Q$ is a finite set of states, $F \subseteq Q$ is a set of accepting states, and $\Lambda$ is a set of nodes labels. 
Let $Y_Q$ be a set of $|Q|$ free variables, then $\delta$ maps pairs $(q,L)\in Q \times \Lambda$ to a unary ordering Presburger constraints on the set $Y_Q$.

This definition should be interpreted the following way. 
Consider a configuration with a label $L$ and a state $q$ for the parent, $y_s$ children with state $s$, for every $s$. 
This configuration is correct with respect to $\delta$, if the formula $\delta(q,L)$ is satisfied. Such a formula could be, for example: there are at least 3 children with state $q_1$, and between 1 and 4 children with state $q_2$, etc.

Now, \cite{BonevaT05} (Proposition~8) establishes that: a set of node-labeled, unbounded-degree, unbounded-depth, rooted trees with no ordering on the children is MSO definable, if and only if, it is recognized by a unary ordering Presburger tree automaton.

We suggest that the special shape of the transition function, comparing numbers of states to constants, is interesting to generalize LCL (replacing states by inputs labels). 
First, it is a natural formalism, that allows to describe easily classic problems such as coloring, maximal independent set, etc.
Second, the result of~\cite{BonevaT05} shows that it exactly captures an important type of global logical formulas, at least on trees. 
Note that similar but more general versions could also be of interest. For example, (general) Presburger tree automata is a more generic formalism motivated by the structure of XML files~\cite{SeidlSM03}, where one can compare the number of occurrences of different states (instead of just comparing them to some constants).

\section{Missing proofs from Section~\ref{sec:MSO-certif-bounded-treedepth}}

\subsection{Proof of Lemma~\ref{lem:same-type}}
\label{app:same-type}

\lemSameType*

\begin{proof}
By assumption, it cannot be more than $k$ since otherwise one of the children of $v$ would have been deleted. Moreover, since $u$ is deleted but not $v$, then $u$ is the root of a subtree we deleted while pruning $v$. In particular, $u$ has at least $k$ siblings with the same type. Now since all these siblings have the same depth as $u$, their type when $u$ is deleted is their end type. To conclude, observe that by construction, at least $k$ such siblings lie in $H$ since we delete some only if at least $k$ others remain.
\end{proof}

\subsection{Proof of Proposition~\ref{prop:k-reduced-size}}
\label{app:k-reduced-size}

\propKReducedSize*

\begin{proof}
Let us prove Proposition~\ref{prop:k-reduced-size} and define $f_d$ by backward induction on $d$. 

We start with $d=t$. Since the $t$-model has depth $t$, the tree rooted on a vertex of depth $t$ should be a single vertex graph. So the set of different possible types at depth $t$ only depends on the edges between the vertex of depth $t$ and its ancestors. There are $f_t(k,t)=2^{t}$ such types. 

Now let us assume that the conclusion holds for nodes at depth $d+1$, and let us prove it for depth $d$.  Let $u$ be a vertex of depth $d$ and $v_1,\ldots,v_r$ be its children in the elimination tree. 
Since $u$ is a vertex of a $k$-reduced graph, at most $k$ children of $u$ have the same end type $T$ and, by induction, there are at most $f_{d+1}(k,t)$ end types of nodes at depth $d+1$.
So the end type of $u$ is determined by its neighbors in its list of ancestors (which gives $2^d$ choices) and the multiset of types of its children. Since $u$ has at most $k$ children of each type, the type of $u$ can be represented as a vector of length $f_{d+1}(k,t)$, where each coordinate has an integral value between $0$ and $k$. So there are at most $f_d(k,t):=2^d \cdot (k+1)^{f_{d+1}(k,t)}$ types of nodes at depth $d$.
\end{proof}

\subsection{Proof of Corollary~\ref{coro:minors}}
\label{app:coro-minors}

\coroMinors*

\begin{proof}
It is well-known that not having a given minor is a property that is expressible in MSO. Thus, as soon as we consider a class that has bounded treedepth, we can certify $H$-minor-freeness with $O(\log n)$ bits, using Theorem~\ref{thm:main}. 
The graph that are $P_t$-minor-free are known to have treedepth at most $t$~\cite{NesetrilM12}, therefore we get the first part of the corollary.
The second part of the statement relies on the fact that every 2-connected component of a $C_t$-minor-free graph is $P_{t^2}$-minor-free. Indeed, assuming this holds, we can use the fact that a decomposition into 2-connected components can be certified with $O(\log n)$-bit certificates in minor-closed classes~\cite{BousquetFP21}, and reuse the first part of the proof to conclude for $C_t$-minor-free graphs.

Consider a 2-connected component $H$ of a $C_t$-minor-free graph. Note that since $C_t$ is $2$-connected, $H$ is $C_t$-minor-free. Assume that $H$ contains a path $P$ on $t^2$ vertices $u_1,\ldots,u_{t^2}$. For each $i$, since $u_i$ is not a cut-vertex of $H$, there must be an edge $u_ju_k$ with $j<i<k$. We denote by $(r(i),\ell(i))$ the largest such pair $(k,j)$ (by convention, $(\ell(1),r(1))=(1,k)$ where $k$ is the largest integer such that $u_1u_k$ is an edge). Observe that since $H$ is $C_t$-minor-free, we have $r(i)< \ell(i)+t$. Observe that by maximality, all the $\ell(i),r(i)$'s are pairwise disjoint. Now we reach a contradiction since $H$ contains a cycle of length at least $t$ using the edges $u_{\ell(r^{(k)}(1))}u_{r^{(k+1)}(1)}$, the subpaths of $P$ between $u_{r^{(k)}(1)}$ and $u_{\ell(r^{(k+1)}(1))}$, and the subpaths of $P$ between $u_1$, $u_{\ell(r(1))}$ and $u_{\ell(r^{(t-1)}(1))}$, $u_{\ell(r^{(t)}(1))}$.
\end{proof}

\section{Missing proofs of Section~\ref{sec:lower-bounds}}

\subsection{Proof of Proposition~\ref{prop:CC-reduction}}
\label{app:proof-framework}

\propCCReduction*

\begin{proof}

Consider a local certification for $\mathcal{P}$ using certificates of size $q$. 
We will use it to define a non-deterministic communication complexity protocol deciding equality. 

Let us start with Alice side.
Alice receives $s_A$ and builds the graph $G(s_A)$, which is the same as $G(s_A, s_B)$, except that there are no edges in between vertices of $V_B$ (Alice does not know $s_B$, thus cannot build $t_B(s_B)$).
Then she receives the certificate $s_P$ from the prover, of size $r \cdot q$. 
She divides it into $r$ pieces of size $q$, and labels the vertex with identifier $i$ with the $i$-th piece. 
Note that the labeled vertices are exactly $V_\alpha \cup V_\beta$.

Now Alice will consider all possible labelings of size $q$ of $V_A$. For each such labeling, she can run the local verifier on all the vertices of $V_A \cup V_\alpha$, because for all these vertices, she knows the adjacency and has certificates. She accepts if and only if at least one such labeling makes all the vertices $V_A \cup V_\alpha$ accept.

The behavior of Bob is exactly the same, except that we replace $V_A$ by $V_B$, $V_\alpha$ by $V_{\beta}$, $s_A$ by $s_B$ etc.

\begin{claim}
There exists a  certificate $s_P$ that makes both Alice and Bob accept in the protocol above, if and only if, there exists a certificate assignment of $G(s_A,s_B)$ that makes the local verifier accept.
\end{claim}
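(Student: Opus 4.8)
The plan is to prove both directions of the equivalence from a single \emph{view-preservation} principle. Fix, once and for all, the identifier assignment of $G(s_A,s_B)$; it (like $E_P$ and all the vertex sets) depends only on $\ell$, not on $s_A,s_B$, and both players use it. The key observation is that every vertex of $V_A\cup V_\alpha$ has exactly the same closed neighbourhood --- same incident edges, same incident identifiers --- in $G(s_A)$ as in $G(s_A,s_B)$. Indeed, by the description of $E_P$ a vertex of $V_A$ is adjacent only to $V_A\cup V_\alpha$ and a vertex of $V_\alpha$ only to $V_A\cup V_\alpha\cup V_\beta$; the only edges of $G(s_A,s_B)$ missing from $G(s_A)$ are those of $t_B(s_B)$, which lie entirely inside $V_B$; hence no edge incident to $V_A\cup V_\alpha$ is affected. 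Symmetrically, every vertex of $V_B\cup V_\beta$ has the same closed neighbourhood in $G(s_B)$ as in $G(s_A,s_B)$. Consequently, as soon as the certificates on the relevant vertices are matched, the local verifier returns the same decision at such a vertex in a player's simulation as in $G(s_A,s_B)$.

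\textbf{($\Leftarrow$).} Suppose a certificate assignment $c$ of $G(s_A,s_B)$ makes every vertex accept. Let the prover send $s_P$ obtained by concatenating $c(v)$ for $v\in V_\alpha\cup V_\beta$ in increasing order of identifier; since these identifiers are exactly $1,\dots,r$, Alice and Bob recover $c$ on $V_\alpha\cup V_\beta$ from $s_P$. Alice then takes as her guessed labelling of $V_A$ the restriction of $c$ to $V_A$. By view preservation, the verifier run at any $v\in V_A\cup V_\alpha$ in Alice's simulation sees precisely the view it has at $v$ in $(G(s_A,s_B),c)$, hence accepts; so Alice accepts. The same argument with $(V_B,V_\beta,s_B)$ in place of $(V_A,V_\alpha,s_A)$ shows Bob accepts. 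Thus $s_P$ makes both players accept.

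\textbf{($\Rightarrow$).} Suppose $s_P$ makes both players accept. Let $c_0$ be the certificate assignment of $V_\alpha\cup V_\beta$ read off from $s_P$ as above, let $c_A$ be a guessed labelling of $V_A$ witnessing Alice's acceptance, and $c_B$ a guessed labelling of $V_B$ witnessing Bob's acceptance. Since $V_A\cup V_\alpha$ and $V_B\cup V_\beta$ partition $V$, we may define the certificate assignment $c$ of $G(s_A,s_B)$ that agrees with $c_A$ on $V_A$, with $c_0$ on $V_\alpha\cup V_\beta$, and with $c_B$ on $V_B$. For every $v\in V_A\cup V_\alpha$, view preservation gives that the view at $v$ in $(G(s_A,s_B),c)$ equals the view at $v$ in Alice's accepting simulation, so the verifier accepts at $v$; likewise it accepts at every $v\in V_B\cup V_\beta$ by Bob's simulation. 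As these two sets cover $V$, all vertices accept under $c$.

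The one delicate point, which I would write out most carefully, is the view-preservation claim: verifying that each edge and each identifier seen by a vertex of $V_A\cup V_\alpha$ (resp.\ $V_B\cup V_\beta$) is present and identical in the graph the corresponding player builds. This is exactly where the restriction of $E_P$ to the five prescribed blocks and the confinement of the private edges inside $V_A$ or inside $V_B$ are used; the rest is bookkeeping about which vertex is simulated by whom. Combined with Theorem~\ref{thm:equality}, this Claim yields Proposition~\ref{prop:CC-reduction}: a certification of size $q$ gives a protocol for \textsc{EQUALITY} on strings of length $\ell$ with certificate size $r\cdot q$, forcing $r\cdot q=\Omega(\ell)$, i.e.\ $q=\Omega(\ell/r)$.
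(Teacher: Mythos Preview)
Your proof is correct and follows the same approach as the paper's own argument: both directions rest on the fact that a vertex of $V_A\cup V_\alpha$ (resp.\ $V_B\cup V_\beta$) sees the same local view in the player's partial graph as in $G(s_A,s_B)$, because the only edges the player is missing lie entirely inside the opposite private set. You have simply made this ``view-preservation'' step explicit, whereas the paper leaves it implicit in the phrase ``by taking the concatenation of these, we get an accepting certificate assignment''; the logical content is identical.
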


Suppose that there is a  certificate $s_P$ that makes both Alice and Bob accept, then this certificate defines a certificate assignment for the vertices of $V_\alpha \cup V_\beta$, and if Alice and Bob accept it means that there is a way to assign certificates to $V_A$ (respectively $V_B$) such that the vertices of $V_A \cup V_\alpha$ (respectively $V_B \cup V_\beta$) accept, and by taking the concatenation of these, we get an accepting certificate assignment for the local verifier. 
Conversely, if there exists an accepting certificate assignment, then the prover can put the corresponding  certificates on $V_\alpha \cup V_\beta$, and Alice and Bob will necessarily find the rest of a correct certification and accept.

Therefore, as the property $\mathcal{P}$ is satisfied if an only if $s_A=s_B$, by hypothesis, we get a protocol for equality. 
This protocol use a certificate of size $r \cdot q$, thus by Theorem~\ref{thm:equality}, $r\cdot q$ must be in $\Omega(\ell)$, which leads to our statement.
\end{proof}

\subsection{Proof of Theorem~\ref{thm:isomorphism-trees}}
\label{app:isomorphism}

\thmIsomorphismTrees*

\begin{proof}
The application of the framework in this case is pretty straightforward. 
Let $n$ and $k$ be parameters. 
Both $V_\alpha$ and $V_\beta$ are reduced to one vertex, respectively $\alpha$ and $\beta$, and $E_P$ is just a path of length 3: $(a,\alpha, \beta,b)$, where $a$ is in $V_A$, and $b$ is in $V_B$. 
Now, $t_A$ is an injection from strings of length $\ell$ to non-isomorphic trees of depth $k$ with $n$ vertices, rooted in $a$. Bob uses the same function $t_B=t_A$, but the trees are rooted in $b$.
As already noted in \cite{GoosS16}, the graph $G(s_A,s_B)$ has a fixed-point-free automorphism if and only if the two trees are equal. 
This happens only if and only if the strings are equal, hence we can use Proposition~\ref{prop:CC-reduction}. 

Now, let us establish the lower we get from this construction. 
It is proved in~\cite{PachPPS13}, that the logarithm of the number of non-isomorphic trees on $n$ vertices of depth $k\geq 3$ is asymptotically:
\[ \frac{\pi^2}{6}\frac{n}{\log \log \cdots \log n},\]
where the denominator has $k-2$ logs. 
Therefore, up to logarithmic terms, we can take $\ell$ and $n$ of the same order of magnitude, and as $r$ is constant, we get that the certificates need to be at least linear in the size of the graph.
\end{proof}

Note that the theorem of~\cite{PachPPS13} needs $k\geq 3$. We can extend the result to $k\geq 2$ with a bound of $\Omega(\sqrt{n})$, by noting that rooted trees of depth 2 with $n$ leaves are in bijection with the integer partitions of $n$ (because grouping the leaves by parent defines a partition) and that there are order of $2^{O(\sqrt{n})}$ partitions of $n$.

\subsection{Proof of Lemma~\ref{lem:treedepth-pebbles}}
\label{app:proof-treedepth-pebbles}

\lemTreedepthPebbles*

\begin{proof}
Let us first consider the graph without the vertex $u$. 
In any case, this graph is 2-regular, thus it is a disjoint union of cycles.
If the matchings are equal, the graph is a union of $n$ cycles of length 8.
If the matchings are not equal, there is necessarily a cycle of length $16$ or larger, that goes at least twice through each set of vertex.
We show that in the first case the treedepth is at most 5, and that in the second case it is at least 6. 

To do so, we will use the following cops-and-robber characterization of treedepth\cite{GruberH08}. 
Immobile cops are placed at vertices of the graph one by one, and a robber tries to escape. More precisely, the robber chooses a position to start, and then iteratively, the following happens: the position of the future new cop is announced the robber can move to any vertex that is accessible without using the position of a cop already in place then the new cop is placed. 
The game is over when a cop is added on the robber position, and the robber cannot move.
The tree-depth is exactly the optimal number of cops needed to catch the robber.

For both cases (all cycles of length 8, or at least one of length 16 or larger), one strategy illustrated in Figure~\ref{fig:pebble-game}. It consists in first putting a cop on the vertex $u$, then two cops on opposed vertices in the cycle of the cycle whee the robber is, and then to finish with a binary search on the remaining path. 
In the case of cycles of length 8, the number of cops used is 5, and for in the other case it is strictly larger, as the robber can use the larger cycle, and one more cop will be need in the final binary search.

\begin{figure}[!h]
    \centering
    \input{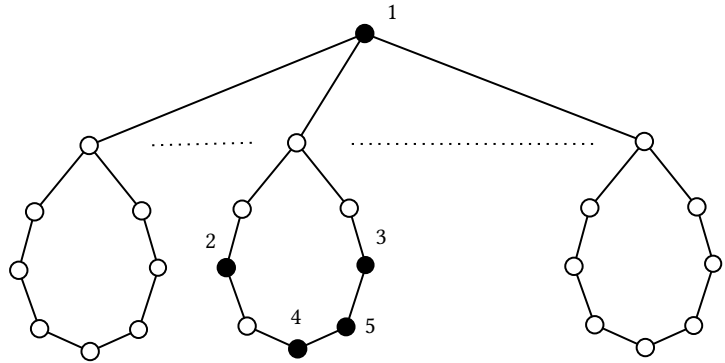}
    \caption{Illustration of the pebble game on the instance with cycles of length $8$. 
    The black vertices are the ones on which cops have been placed, and the integer indicated in which order.
    The first cop is placed at the top vertex, then the robber has to choose one of the cycles. A second and third cop are placed on two opposed vertices of the 8-cycle, at the point the robber can only move on a path of 3 vertices. A fourth vertex is placed on the vertex in the middle, and the last cop is necessarily at the place where the robber is blocked.  }
    \label{fig:pebble-game}
\end{figure}

This strategy is optimal. Indeed, as long as the vertex $u$ is not used by a cop, the robber can freely move between the cycles, and once it is chosen we are back to the beginning of the sequence described above, thus it is optimal to play it right at the beginning. The rest of the strategy is well-known to be optimal (see \emph{e.g.} \cite{NesetrilM12}).
\end{proof}

\end{document}